\documentclass[11pt, a4paper]{article}
\pdfoutput=1
\usepackage[utf8]{inputenc}
\usepackage{jheppub}
\hypersetup{pdfencoding=unicode, bookmarksopen=true, bookmarksnumbered}
\usepackage{microtype}
\usepackage{amsthm,amsbsy,amsfonts,mathrsfs,enumerate,float,wrapfig,amsmath,mathtools,graphicx,framed,tcolorbox}
\usepackage{tikz}
\usepackage{tikz-cd}
\usepackage{subcaption}
\usepackage{graphbox}
\tikzcdset{
    scale cd/.style={every label/.append style={scale=#1},
    cells={nodes={scale=#1}}}}
\tikzset{
    partial ellipse/.style args={#1:#2:#3}{
        insert path={+ (#1:#3) arc (#1:#2:#3)}
}}

\DeclareMathOperator{\PE}{PE}
\DeclareMathOperator{\rank}{rank}
\DeclareMathOperator{\Tr}{Tr}
\newcommand{\tauftn}[2]{\tau\genfrac[]{0pt}{1}{#1}{#2}}
\newcommand{\tauftnd}[2]{\tau_{\mathrm{6d}}\genfrac[]{0pt}{1}{#1}{#2}}

\theoremstyle{definition}
\newtheorem*{lemma}{Lemma}

\usetikzlibrary{positioning}
\usetikzlibrary{arrows}

\title{Bootstrapping bilinear relations of discrete Painlev\'e systems from 5d gauge theories}

\author[a]{Minsung Kim,}
\author[b,c]{Xin Wang}
\affiliation[a]{Quantum Universe Center, Korea Institute for Advanced Study, Seoul 02455, Korea}
\affiliation[b]{Interdisciplinary Center for Theoretical Study, University of Science and Technology of China, Hefei, Anhui 230026, China}
\affiliation[c]{Peng Huanwu Center for Fundamental Theory, Hefei, Anhui 230026, China}

\emailAdd{minsung1@kias.re.kr}
\emailAdd{wxin@ustc.edu.cn}

\abstract{
    We study tau-functions of elliptic and $ q $-Painlev\'e equations through their correspondence with 5d $ \mathcal{N}=1 $ supersymmetric field theories. Defining tau-functions associated with 5d field theories in terms of partition functions on the $ \Omega $-background, we propose a bootstrap method that determines their bilinear relations from the effective prepotential and the perturbative contributions, together with the global symmetry of the theory. Applying the method to 5d $ \mathrm{SU}(2) $ gauge theories coupled to $ N_f \leq 8 $ fundamental hypermultiplets, we obtain the bilinear relations of the elliptic and $ q $-Painlev\'e equations. This method can be applied to general 5d supersymmetric field theories beyond $ \mathrm{SU}(2) $ gauge theories.
}

\begin{document}
\preprint{\shortstack[r]{KIAS-Q26008\\USTC-ICTS/PCFT-26-53}}
\maketitle

\section{Introduction}

The Painlev\'e equations arose from the classification of nonlinear second-order ordinary differential equations with the Painlev\'e property \cite{Painleve:1900, Painleve:1902, Gambier:1910}. This classification led to six irreducible Painlev\'e equations I--VI. They can also be understood in terms of isomonodromic deformations of linear systems on the punctured Riemann sphere, in which the \emph{tau-function} plays a central role \cite{Jimbo:1981tov,MR866050}. One advantage of the tau-function formulation is that the nonlinear Painlev\'e equations can be recast as Hirota-type bilinear relations among tau-functions \cite{MR1243688,MR1396258}, thereby providing a useful framework for studying their solutions. Tau-functions and Hirota bilinear relations were first introduced in the study of the Korteweg--de Vries (KdV) equation
\begin{align}
    \frac{\partial^3 u}{\partial x^3} + 6u \frac{\partial u}{\partial x} + \frac{\partial u}{\partial t} = 0 \, ,
\end{align}
which describes waves in shallow water. By introducing the tau-function via $u=2\partial_x^2\log\tau$, the KdV equation is transformed into the bilinear equation
\begin{align}
    \left(D_x^4+D_xD_t\right)(\tau\cdot\tau) = 0 \, ,
\end{align}
where $ D $ is the Hirota derivative operator defined as
\begin{align}
    D_x (f \cdot g) = \left( \partial_{x_1} - \partial_{x_2} \right) f(x_1) g(x_2) \big|_{x_1=x_2=x} \, .
\end{align}
A simple ansatz $\tau=1+e^{k(x-k^2t)+\delta}$ with two arbitrary parameters $k$ and $\delta$ yields the one-soliton solution of the KdV equation,
\begin{align}
    u(x,t)=\frac{k^2}{2} \operatorname{sech}^2\left(\frac{k(x-k^2t)+\delta}{2}\right) \, .
\end{align}
Multi-soliton solutions can be constructed systematically by extending the tau-function to a finite sum of exponential functions \cite{Hirota:1971zz}. This example illustrates that bilinear relations are not merely alternative expressions of nonlinear equations, but also provide a direct method for constructing their exact solutions. Hirota's bilinear method has proven to be an effective and systematic way to find multi-soliton solutions of a broad class of integrable systems, including the sine-Gordon, Toda lattice, and Kadomtsev–Petviashvili (KP) equations \cite{Hirota:1972, Hirota:1973, Date:1981, Jimbo:1983if}. This motivates the direct study of the bilinear relations satisfied by Painlev\'e tau-functions.

In recent years, the study of Painlev\'e equations has been connected to 4d $ \mathcal{N}=2 $ supersymmetric gauge theories \cite{Gamayun:2012ma, Gamayun:2013auu, Bonelli:2016qwg}. It was proposed that the tau-function of the Painlev\'e VI equation can be expressed as a sum of four-point Virasoro conformal blocks at central charge $ c=1 $ \cite{Gamayun:2012ma}. By the AGT correspondence \cite{Alday:2009aq}, these conformal blocks agree, up to standard prefactors, with Nekrasov instanton partition functions of the 4d $\mathrm{SU}(2)$ theory with four fundamental hypermultiplets in the self-dual $\Omega$-background \cite{Nekrasov:2002qd, Nekrasov:2003rj}. The corresponding tau-function can therefore be identified with a Nekrasov--Okounkov dual partition function. This relation has been extended to the other Painlev\'e equations, leading to the Painlev\'e/gauge correspondence for rank-one 4d $\mathcal N=2$ theories \cite{Gamayun:2013auu,Bonelli:2016qwg}. See \cite{Gavrylenko:2016zlf,Gavrylenko:2017lqz,Grassi:2018spf,Bonelli:2019boe,Bonelli:2019yjd,Bonelli:2021rrg,Francois:2023trm} for further development.

The Painlev\'e equations also admit discrete generalizations, in which the differential equation is uplifted to a difference equation \cite{Grammaticos:1991zz, Ramani:1991zz, Jimbo:1995}. The discrete Painlev\'e equations are organized geometrically by Sakai's classification \cite{Sakai:2001,Yamada:2020hqd,Mizoguchi:2002kg} in terms of rational surfaces obtained by blowing up nine points of $ \mathbb{P}^2 $. Each equation is characterized by its symmetry type and by the class of its discrete time evolution: elliptic difference, $ q $-difference and additive difference. As a generalization of the correspondence between differential Painlev\'e equations and 4d $ \mathcal{N}=2 $ theories, the class of $ q $-Painlev\'e systems is related to 5d $\mathcal N=1$ $\mathrm{SU}(2)$ gauge theories coupled to $ N_f<8 $ fundamental hypermultiplets, which are low-energy descriptions of 5d $ E_n $ SCFTs with $ n=N_f+1 $ \cite{Seiberg:1996bd, Morrison:1996xf}, while the elliptic Painlev\'e equation is associated with the 5d $ \mathrm{SU}(2) $ gauge theory coupled to $ N_f=8 $ fundamental hypermultiplets, obtained from the circle compactification of the 6d rank-one E-string theory \cite{Mizoguchi:2002kg}. For the cases with $ N_f \leq 4 $, the expressions for the tau-functions of the associated $ q $-Painlev\'e system in terms of the conformal blocks of the $ q $-deformed Virasoro algebra and their bilinear relations have been established \cite{Bershtein:2016aef,Jimbo:2017ael,Matsuhira:2018qtx}. Through the five-dimensional AGT correspondence \cite{Awata:2009ur}, these tau-functions can be written in terms of Nekrasov instanton partition functions of the 5d theories in the self-dual $ \Omega $-background. See also \cite{Bershtein:2014yia, Bershtein:2018zcz, Shchechkin:2020ryb, Gavrylenko:2020gjb,Jeong:2020uxz, Nekrasov:2020qcq, Stoyan:2025qya} for the connections to the blowup equation \cite{Nakajima:2003pg, Nakajima:2005fg, Gottsche:2006bm} and derivations for specific cases using it. Related realizations of tau-functions based on the matrix models, grand-canonical topological string partition functions and the quantum mirror curves have also been proposed \cite{Mironov:2017sqp, Bonelli:2017gdk, Nosaka:2020tyv,Bonelli:2022dse,Mironov:2022xew,Gavrylenko:2023ewx,Moriyama:2023mjx}.

Although several case-by-case derivations for the tau-functions and their bilinear relations are known, a uniform gauge-theoretic construction applicable to a broader class of 5d theories is still lacking. In this paper, we propose a bootstrap method to obtain the bilinear relations for general 5d $ \mathcal{N}=1 $ field theories. Application to 5d $ \mathrm{SU}(2) $ gauge theory yields the bilinear relations associated with the $ q $-difference and elliptic Painlev\'e equations. We define tau-functions as the discrete Fourier transformation of the 5d instanton partition function on $ \mathbb{C}^2 \times S^1 $ in the self-dual $ \Omega $-background $ (\epsilon_1,\epsilon_2)=(\epsilon,-\epsilon) $. More precisely, motivated by the blowup formula \cite{Huang:2017mis, Kim:2020hhh} for the 5d field theory, the tau-function is defined as the grand-canonical partition function on $ \hat{\mathbb{C}}^2 \times S^1 $ in the limit $ (\epsilon_1,\epsilon_2) \to (\epsilon,0) $, where $ \hat{\mathbb{C}}^2 $ is obtained by blowing up the origin of $ \mathbb{C}^2 $ to a $ \mathbb{P}^1 $. Since $ \hat{\mathbb{C}}^2 $ admits magnetic fluxes supported on the $ \mathbb{P}^1 $, the tau-function is labelled by the fractional part $ r $ of the dynamical magnetic flux associated with the gauge symmetry $ G $ of the 5d theory, together with a background magnetic flux $ B $ associated with the flavor symmetry:
\begin{align}
    \tauftn{r}{B}(s,\phi,m,\epsilon) = \sum_{n\in\mathbb Z^{\rank G}+r}s^n\, \hat Z(\phi+n\epsilon,m+B\epsilon;\epsilon)\, .
\end{align}
Here, $ \phi $ denotes the Coulomb moduli, $ m $ denotes the collection of mass parameters including the instanton counting parameters, and $ s $ is the fugacity for the dynamical magnetic flux. The partition function $ \hat{Z} $ contains the classical and perturbative contributions, as well as the instanton partition function in the self-dual $ \Omega $-background. We then consider a bilinear relation ansatz of the form 
\begin{align}
    \Lambda_1\tauftn{r_1}{B_1}\tauftn{r_1'}{B_1'}+\Lambda_2\tauftn{r_2}{B_2}\tauftn{r_2'}{B_2'}+\Lambda_3\tauftn{r_3}{B_3}\tauftn{r_3'}{B_3'}=0\, ,
\end{align} 
where we require the coefficients $\Lambda_i$ to depend on $m$ and $\epsilon$, but not on the Coulomb moduli $\phi$. Our main proposal is that the fluxes allowed by the blowup equations \cite{Huang:2017mis,Kim:2019uqw, Gu:2019pqj,Kim:2020hhh} and the classical and perturbative parts of the partition function $ \hat{Z} $ restrict the possible flux assignments $\{r_i,r_i',B_i,B_i'\}$ and fix the coefficients $\Lambda_i$ up to an overall factor. By further using the global symmetry data or the spinning spectrum of BPS states carrying the minimal nonzero gauge charges, a set of bilinear relations can be bootstrapped. Applying this method to 5d $ \mathrm{SU}(2) $ gauge theories, we obtain bilinear relations of the elliptic and $ q $-Painlev\'e equations.

The remainder of this paper is organized as follows. In section~\ref{sec:Painleve-gauge}, we review the correspondence between differential Painlev\'e equations and 4d $ \mathcal{N}=2 $ theories, together with its generalization relating the discrete Painlev\'e hierarchy and 5d $ \mathcal{N}=1 $ gauge theories. In section~\ref{sec:5dbootstrap}, we define tau-functions associated with 5d field theories from the partition function on $ \mathbb{C}^2 \times S^1 $ and the blowup equation, and develop the bootstrap algorithm for the bilinear relations they satisfy. We also give a minimal example that illustrates the bootstrap algorithm. Section~\ref{sec:applications} presents the results of the bootstrap program for 5d $ \mathrm{SU}(2) $ gauge theories coupled to $ N_f $ fundamental hypermultiplets, which correspond to the elliptic and $ q $-Painlev\'e equations. We then conclude with some future directions in section~\ref{sec:conclusion}. Appendix~\ref{app:lemma} discusses a lemma used in our analysis of the elliptic Painlev\'e equations, and Appendix~\ref{app:bilinear} contains a brief outline of how the bilinear relations are related to the Painlev\'e equations. Appendix~\ref{app:adhm} summarizes the computation of the instanton partition functions of 5d $ \mathrm{SU}(2) $ gauge theories, which we use to verify the bilinear relations obtained from the bootstrap program.

\section{Painlev\'e-gauge correspondence} \label{sec:Painleve-gauge}

In this section, we review the correspondence between Painlev\'e equations and supersymmetric gauge theories. We first recall the realization of the differential Painlev\'e equations through isomonodromic deformations and their relation to 4d $ \mathcal{N}=2 $ rank-one theories. We then turn to the discrete Painlev\'e hierarchy and its correspondence with 5d $ \mathcal{N}=1 $ theories.

\subsection{Painlev\'e equations and 4d \texorpdfstring{$ \mathcal{N}=2 $}{N=2} theories}

The Painlev\'e equations first appeared in the theory of nonlinear ordinary differential equations (ODEs). A singularity of a solution of an ODE is called movable if its location depends on the initial conditions. An ODE is said to have the Painlev\'e property if none of its solutions has movable branch points, although the solutions may have movable poles, movable single-valued essential singularities, or fixed singularities. One of the central problems in the theory of ODEs is the classification of ODEs admitting the Painlev\'e property. Painlev\'e and Gambier \cite{Painleve:1900, Painleve:1902, Gambier:1910} established that there are only 50 types of second-order ODEs admitting the Painlev\'e property of the form
\begin{align}
    \frac{d^2 q}{dt^2} = F\left( \frac{dq}{dt}, q, t \right) \, ,
\end{align}
where $ F $ is a rational function of $ dq/dt $ and $ q $ and locally analytic in $ t $, up to a M\"obius transformation
\begin{align}
    \tilde{q} = \frac{a(t)q+b(t)}{c(t)q + d(t)} \, , \quad
    \tilde{t} = \varphi(t) \, ,
\end{align}
for locally analytic functions $ a(t) $, $ b(t) $, $ c(t) $, $ d(t) $, and $ \varphi(t) $. Among the 50 classes of equations, 44 can be solved in terms of known functions such as elliptic functions. The remaining six equations are novel nonlinear ODEs known as the Painlev\'e equations. Their explicit forms are given by
\begin{alignat}{2} \label{eq:Painleve}
    &\mathrm{P}_\mathrm{I} &&\qquad \frac{d^2q}{dt^2} = 6q^2 + t \, , \\
    &\mathrm{P}_\mathrm{II} &&\qquad \frac{d^2q}{dt^2} = 2q^3 + tq + \alpha \, , \nonumber \\
    &\mathrm{P}_\mathrm{III} &&\qquad \frac{d^2q}{dt^2} = \frac{1}{q} \left( \frac{dq}{dt}\right)^2 - \frac{1}{t} \frac{dq}{dt} + \frac{1}{t} \left( \alpha q^2 + \beta \right) + \gamma q^3 + \frac{\delta}{q} \, , \nonumber \\
    &\mathrm{P}_\mathrm{IV} &&\qquad \frac{d^2q}{dt^2} = \frac{1}{2q} \left( \frac{dq}{dt}\right)^2 + \frac{3}{2} q^3 + 4tq^2 + 2(t^2-\alpha)q + \frac{\beta}{q} \, , \nonumber \\
    &\mathrm{P}_\mathrm{V} &&\qquad \frac{d^2q}{dt^2} = \left( \frac{1}{2q} + \frac{1}{q-1} \right) \left( \frac{dq}{dt}\right)^2 - \frac{1}{t} \frac{dq}{dt} + \frac{(q-1)^2}{t^2} \left( \alpha q + \frac{\beta}{q} \right) + \gamma \frac{q}{t} + \delta \frac{q(q+1)}{q-1} \, , \nonumber \\
    &\mathrm{P}_\mathrm{VI} &&\qquad \frac{d^2q}{dt^2} = \frac{1}{2} \left( \frac{1}{q} + \frac{1}{q-1} + \frac{1}{q-t} \right) \left( \frac{dq}{dt}\right)^2 - \left( \frac{1}{t} + \frac{1}{t-1} + \frac{1}{q-t} \right) \frac{dq}{dt} \nonumber \\
    & && \qquad \qquad + \frac{q(q-1)(q-t)}{t^2(t-1)^2} \left( \alpha + \beta\frac{t}{q^2} + \gamma \frac{t-1}{(q-1)^2} + \delta \frac{t(t-1)}{(q-t)^2} \right) \, , \nonumber
\end{alignat}
where $ \alpha $, $ \beta $, $ \gamma $, $ \delta $ are complex-valued constants.\footnote{\label{footnote:PIII'}It is also common to use an alternative form $ \mathrm{P}_{\mathrm{III}}' $ of the Painlev\'e III equation given by \begin{align*} \frac{d^2q}{dt^2} = \frac{1}{q} \left( \frac{dq}{dt}\right)^2 - \frac{1}{t} \frac{dq}{dt} + \frac{q^2}{4t^2} (\gamma q + \alpha) + \frac{\beta}{4t} + \frac{\delta}{4q} \, . \end{align*}} The solutions of these equations are called the Painlev\'e transcendents. The Painlev\'e equations I-V can be obtained through successive limiting procedures from Painlev\'e VI, known as \emph{coalescence}. For example, reparametrizing $ q\to \epsilon q + \epsilon^{-5} $, $ t \to \epsilon^2t - 6\epsilon^{-10} $, $ \alpha \to 4\epsilon^{-15} $ for Painlev\'e II and taking the $ \epsilon \to 0 $ limit yields the Painlev\'e I equation. The coalescence diagram for the six Painlev\'e equations is shown as follows:
\begin{equation}\label{eq:coalescence}
    \begin{tikzcd}
        \mathrm{P}_{\mathrm{VI}} \arrow{r} & \mathrm{P}_{\mathrm{V}} \arrow{r} \arrow{rd} & \mathrm{P}_{\mathrm{III}_1} \arrow{r} \arrow{rd} & \mathrm{P}_{\mathrm{III}_2} \arrow{r} \arrow{rd} & \mathrm{P}_{\mathrm{III}_3} \\
        & & \mathrm{P}_{\mathrm{IV}} \arrow{r} & \mathrm{P}_{\mathrm{II}} \arrow{r} & \mathrm{P}_{\mathrm{I}}
    \end{tikzcd}
\end{equation}
Here, the Painlev\'e III equation is further classified into three cases: $ \mathrm{P}_{\mathrm{III}_1} $ corresponds to $ \gamma\delta \neq 0 $; $ \mathrm{P}_{\mathrm{III}_2} $ corresponds to $ \gamma=0 $, $ \alpha \delta \neq 0 $; and $ \mathrm{P}_{\mathrm{III}_3} $ corresponds to $ \gamma=\delta=0 $, $ \alpha\beta \neq 0 $. The Painlev\'e VI equation has four parameters, and the number of free parameters decreases by one at each limiting step corresponding to an arrow in the diagram. For example, although the Painlev\'e III equation appears to have four free parameters $ (\alpha,\beta,\gamma,\delta) $, when $ \gamma\delta \neq 0 $, one can set $ \gamma=-\delta=\mathrm{constant} $ without loss of generality by rescaling $ q $ and $ t $.

The Painlev\'e equations can be realized through the framework of \emph{isomonodromic deformations}. This problem concerns the monodromy of solutions to a system of linear ordinary differential equations
\begin{align}\label{eq:fuchsian}
    \frac{dY(z)}{dz} = A(z) Y(z) \, , \quad
    A(z) = \sum_{\nu=1}^n \sum_{k=0}^{r_\nu} \frac{A_{\nu,-k}}{(z-z_\nu)^{k+1}} \, ,
\end{align}
where $ z $ is a coordinate on the $ n $-punctured sphere $ \mathbb{P}^1\setminus \{z_1,\cdots,z_n\} $, $ A(z) \in \mathfrak{sl}(2,\mathbb{C}) $ is a matrix-valued function, and the non-negative integer $ r_\nu $ is called the Poincar\'e rank of the singularity. If $ r_\nu=0 $, the linear system has a regular singularity at $ z=z_\nu $, whereas singularities with $ r_\nu>0 $ are called irregular. A linear system of this type without irregular singularities is called a Fuchsian system. For a given system of differential equations \eqref{eq:fuchsian}, one can determine the monodromy of the solution $ Y(z) $ around the singularities $ z_\nu $. Conversely, fixing the monodromy data does not single out a unique linear system but rather a family of systems sharing the same monodromy. A deformation of $ A(z) $ that preserves the monodromy data is called an isomonodromic deformation. Such a deformation is governed by a system of equations for $ Y $ given by \cite{Jimbo:1981tov, Jimbo:1981-2}
\begin{align}\label{eq:isomonodomic}
    \frac{\partial Y}{\partial z} = A(z,t) Y \, , \quad
    \frac{\partial Y}{\partial t_i} = B_i(z,t) Y \, ,
\end{align}
where $ t=(t_i) $ denotes the deformation parameters. The matrix-valued functions $ A $ and $ B_i $ form a Lax pair, and $ B_i(z,t) $ can be determined from $ A(z,t) $. The commutativity condition $ \partial_z \partial_{t_i} Y = \partial_{t_i} \partial_z Y $ implies that $ A $ and $ B_i $ satisfy the following compatibility equations: 
\begin{align}
    \frac{\partial A}{\partial t_i} - \frac{\partial B_i}{\partial z} + [A, B_i] = 0 \, , \quad
    \frac{\partial B_i}{\partial t_j} - \frac{\partial B_j}{\partial t_i} + [B_i, B_j] = 0 \, .
\end{align}
For suitable choices of the singularity structure and deformation parameters, these compatibility conditions give rise to the Painlev\'e equations.

For the Painlev\'e VI equation, which sits at the top of the coalescence cascade, \eqref{eq:fuchsian} corresponds to a Fuchsian system with four regular singularities. By using a M\"obius transformation, the singularities can be placed at $ 0 $, $ 1 $, $ t $ and $ \infty $. The Lax pair $ A $ and $ B $ can be written as
\begin{align}
    A(z,t) = \frac{A_0}{z} + \frac{A_1}{z-1} + \frac{A_t}{z-t} \, , \quad
    B(z,t) = -\frac{A_t}{z-t} \, ,
\end{align}
where the position $ t $ of the singularity serves as the deformation parameter, while $ A_0 $, $ A_1 $, and $ A_t $ are the elements of $ \mathfrak{sl}(2,\mathbb{C}) $ satisfying the compatibility conditions
\begin{align}\label{eq:VI-compat}
    \frac{d A_0}{dt} + \frac{[A_0,A_t]}{t} = 0 \, , \quad
    \frac{d A_1}{dt} - \frac{[A_1,A_t]}{1-t} = 0 \, , \quad
    \frac{d A_t}{dt} - \frac{[A_0,A_t]}{t} + \frac{[A_1,A_t]}{1-t} = 0 \, .
\end{align}
The explicit form of $ A $ is given in \cite{Jimbo:1981-2}, and \eqref{eq:VI-compat} yields the Painlev\'e VI equation. The compatibility conditions of the Fuchsian system are called the Schlesinger equations, which can be formulated as the Hamiltonian system
\begin{align}\label{eq:PVI-H}
    H_t = \frac{\Tr A_0 A_t}{t} + \frac{\Tr A_1 A_t}{t-1} \, , \quad
    \frac{\partial A_i}{\partial t} = \{ A_i, H_t \}_{\mathrm{PB}} \, ,
\end{align}
with the Poisson bracket
\begin{align}
    \{ (A_i)_{ab}, (A_j)_{cd} \}_{\mathrm{PB}}
    = \delta_{ij} \left( \delta_{ad} (A_i)_{cb} - \delta_{bc} (A_j)_{ad} \right) \, .
\end{align}
The other Painlev\'e equations can be treated in a similar manner using isomonodromic deformation problems involving irregular singularities. In this formulation, the coalescence process can be understood as the collision of punctures. For further mathematical details, we refer the reader to \cite{Jimbo:1981tov, Jimbo:1981-2, Its:2018}.

This isomonodromic construction of the Painlev\'e equations is closely related to their correspondence with 4d $ \mathcal{N}=2 $ supersymmetric field theories. First, the punctured Riemann sphere appearing in the isomonodromic deformation problem also arises in the class $ \mathcal{S} $ construction of the 4d $ \mathcal{N}=2 $ theories. For example, compactifying the 6d $ \mathcal{N}=(2,0) $ $ A_1 $ theory on a Riemann sphere with four regular punctures, corresponding to the setup of the Painlev\'e VI isomonodromic deformation problem, gives a 4d $ \mathcal{N}=2 $ $ \mathrm{SU}(2) $ gauge theory coupled to four fundamental hypermultiplets \cite{Gaiotto:2009we, Gaiotto:2009hg}. Second, the spectral curve defined by $ y^2 = \frac{1}{2} \Tr A^2 $ in the isomonodromic deformation problem is identified with the Seiberg-Witten curve of the corresponding 4d $ \mathcal{N}=2 $ theory. From this perspective, the six Painlev\'e equations can be related to 4d $ \mathcal{N}=2 $ rank-1 theories as summarized in Table~\ref{table:painleve-4d}. The Painlev\'e equations in the first line of the coalescence cascade \eqref{eq:coalescence} correspond to 4d $ \mathrm{SU}(2) $ gauge theories with $ N_f $ fundamental hypermultiplets, whereas those in the second line correspond to the $ H_0=(A_1,A_2) $, $ H_1=(A_1,A_3) $, and $ H_2=(A_1,D_4) $ Argyres-Douglas (AD) theories \cite{Argyres:1995xn, Argyres:1995jj}. Under this identification, the free parameters $ \alpha $, $ \beta $, $ \gamma $, $ \delta $ of the Painlev\'e equations correspond to mass parameters of the 4d theory, and the Painlev\'e time $ t $ is identified with the gauge coupling for $ \mathrm{SU}(2) $ gauge theories or with the relevant deformation coupling for Argyres-Douglas theories. More details on the correspondence between Painlev\'e equations and 4d $ \mathcal{N}=2 $ theories can be found in \cite{Bonelli:2016qwg}.

\begin{table}
    \centering
    \begin{tabular}{cccccccc}
        $ \mathrm{P}_{\mathrm{VI}} $ & $ \mathrm{P}_{\mathrm{V}} $ & $ \mathrm{P}_{\mathrm{III}_1} $ & $ \mathrm{P}_{\mathrm{III}_2} $ & $ \mathrm{P}_{\mathrm{III}_3} $ & $ \mathrm{P}_{\mathrm{IV}} $ & $ \mathrm{P}_{\mathrm{II}} $ & $ \mathrm{P}_{\mathrm{I}} $ \\ \hline
        $ N_f=4 $ & $ N_f=3 $ & $ N_f=2 $ & $ N_f=1 $ & $ N_f=0 $ & $ H_2 $ & $ H_1 $ & $ H_0 $
    \end{tabular}
    \caption{Correspondence between Painlev\'e equations and 4d $ \mathcal{N}=2 $ theories. The Painlev\'e III, V and VI equations correspond to $ \mathrm{SU}(2) $ gauge theories coupled to $ N_f $ hypermultiplets in the fundamental representation, while the Painlev\'e I, II, and IV equations are associated with the $ H_0 $, $ H_1 $, and $ H_2 $ Argyres-Douglas theories, respectively. The number of mass parameters in each 4d theory coincides with the number of free parameters in the associated Painlev\'e equation.} \label{table:painleve-4d}
\end{table}

We now introduce the notion of the \emph{isomonodromic tau-function} of the Painlev\'e equations \cite{Jimbo:1981tov}. It is defined as the generating function of the Hamiltonian $ H(t) $ appearing in the isomonodromic deformation problem as
\begin{align}
    \frac{d}{dt} \log \tau(t) = H(t) \, .
\end{align}
For Painlev\'e VI, the isomonodromic Hamiltonian is given by \eqref{eq:PVI-H}. A remarkable feature of the Painlev\'e VI tau-function is that it can be expressed as a Fourier series of 4-point Virasoro conformal blocks at central charge $ c=1 $ \cite{Gamayun:2012ma, Iorgov:2014vla, Bershtein:2014yia}. By the AGT correspondence \cite{Alday:2009aq}, this conformal block is identified with the Nekrasov partition function of the 4d $ \mathrm{SU}(2) $ gauge theory with four fundamental hypermultiplets in the self-dual $ \Omega $-background $ \epsilon_1=-\epsilon_2=\epsilon $. Similar expressions for the other Painlev\'e equations can be obtained by taking the appropriate coalescence limits, and the Painlev\'e tau-function can be connected to the partition function of the corresponding 4d $ \mathcal{N}=2 $ rank-1 theory as \cite{Gamayun:2013auu, Bonelli:2016qwg}
\begin{align}\label{eq:dualpf}
    \tau(t | \rho, \sigma, m) = \sum_{n\in \mathbb{Z}} e^{in\rho} Z(t, \sigma+n, m) \, ,
\end{align}
where $ Z $ is the Nekrasov(-like) partition function in the self-dual $ \Omega $-background, including the classical, one-loop and instanton contributions. Here, $ m $ denotes the mass parameters of the 4d theory, and $ \sigma=a/\epsilon $ is the dimensionless Coulomb branch parameter. For $ \mathrm{SU}(2) $ gauge theories, $ t $ is identified with the exponentiated gauge coupling, while it corresponds to the relevant deformation coupling in the case of Argyres-Douglas theories \cite{Bonelli:2016qwg}. The summation over integer shifts of the Coulomb modulus can be regarded as a discrete Fourier transformation and is referred to as the (Nekrasov-Okounkov) \emph{dual partition function} \cite{Nekrasov:2003rj}. The Fourier dual variable $ \rho $ conjugate to $ \sigma $ can be interpreted as the dual period $ a_D/\epsilon $ of the 4d theory in the Seiberg-Witten limit $ \epsilon\to 0 $. From the perspective of the Painlev\'e equation, the mass parameters $ m $ correspond to the parameters $ (\alpha,\beta,\gamma,\delta) $ in \eqref{eq:Painleve}, while $ (\sigma, \rho) $ play the role of integration constants.

For example, the tau-function of the Painlev\'e $ \text{III}_{\text{3}} $ near $ t=0 $ can be written as
\begin{align}\label{eq:PIII-tau}
    \tau_{\text{III}_3}(t|s,\sigma) = \sum_{n\in \mathbb{Z}} s^n Z_{\mathrm{SU}(2)}(t, \sigma+n) = \sum_{n\in \mathbb{Z}} s^n t^{(\sigma+n)^2} C(\sigma+n) B(t, \sigma+n) \, ,
\end{align}
where $ Z_{\mathrm{SU}(2)} $ is the Nekrasov partition function of the 4d pure $ \mathrm{SU}(2) $ gauge theory, $ s=e^{i\rho} $ and
\begin{align}
    C(\sigma) &= \frac{1}{G(1+2\sigma) G(1-2\sigma)} \, , \\
    B(t, \sigma) &= \sum_{Y_1, Y_2} t^{|Y_1|+|Y_2|} \prod_{i=1}^2 \prod_{\square \in Y_i} \frac{1}{\prod_{j=1}^2 (\sigma_i-\sigma_j + h_i(\square) + v_j(\square)+1)^2} \\
    &= 1 + \frac{t}{2\sigma^2} + \frac{(1+8\sigma^2)t^2}{4\sigma^2(1-4\sigma^2)^2} + \mathcal{O}(t^3) \, , \nonumber
\end{align}
for the Barnes G-function $ G(z) $ satisfying $ G(1+z) = \Gamma(z) G(z) $, and Young diagrams $ Y_1 $, $ Y_2 $. Here, $ \sigma_1=-\sigma_2=\sigma $, $ \square\in Y_i $ is a box in the Young diagram $ Y_i $, $ h_i(\square) $ is the arm length of the box $ \square $ in $ Y_i $, and $ v_j(\square) $ is the leg length in $ Y_j $. The factors $ C(\sigma) $ and $ B(t,\sigma) $ are respectively the 1-loop vector multiplet contribution and the instanton partition function of the pure $ \mathrm{SU}(2) $ theory, while the factor $ t^{\sigma^2} $ is the classical contribution.

Another interesting feature of the tau-function is that it provides a \emph{bilinearization} of the Painlev\'e equations. The Painlev\'e equations admit a group of \emph{B\"acklund transformations} that act on the solutions $ q(t) $ and the parameters $ (\alpha,\beta,\gamma,\delta) $, thereby mapping solutions with one set of parameters to those with another set. Applying B\"acklund transformations to the tau-function produces a family of tau-functions $ \{\tau_k\} $. The non-linear Painlev\'e equations can then be replaced by bilinear differential relations among the tau-functions of the form
\begin{align}
    P_{kl}(D_{[\log t]}, t, m)(\tau_k \cdot \tau_l) = 0 \, ,
\end{align}
where $ P_{kl} $ is a polynomial in the Hirota derivative $ D_{[\log t]} $ defined as
\begin{align}
    f(e^{\alpha}t) g(e^{-\alpha}t) = \sum_{n=0}^\infty D_{[\log t]}^n(f(t),g(t)) \frac{\alpha^n}{n!} \, .
\end{align}
Such relations are referred to as \emph{Hirota bilinear relations}. The solution $ q(t) $ of the Painlev\'e equation can then be recovered as a ratio of tau-functions. For example, the Painlev\'e $ \text{III}_3 $ tau-function \eqref{eq:PIII-tau} satisfies the bilinear relations \cite{Bershtein:2016uov}
\begin{align}\label{eq:PIII3-bilinear}
    \left\{ \begin{array}{l}
            D^2_{[\log t]}(\tau_1, \tau_1) + 2\sqrt{t} \tau_2^2 = 0 \\[1ex]
            D^2_{[\log t]}(\tau_2, \tau_2) + 2\sqrt{t} \tau_1^2 = 0 
    \end{array}\right. \quad \Leftrightarrow \quad
    \left\{ \begin{array}{l}
            \tau_1 \mathcal{D}^2\tau_1 - (\mathcal{D}\tau_1)^2 + \sqrt{t} \tau_2^2 = 0 \\[1ex]
            \tau_2 \mathcal{D}^2\tau_2 - (\mathcal{D}\tau_2)^2 + \sqrt{t} \tau_1^2 = 0 
        \end{array}\right.
\end{align}
where $ \mathcal{D} = t \frac{d}{dt} $ and
\begin{align}
    \tau_1(t|s,\sigma) = \tau_{\text{III}_3}(t|s,\sigma) \, , \quad
    \tau_2(t|s,\sigma) = \sqrt{s} \tau_{\text{III}_3}(t|s,\sigma+1/2) \, .
\end{align}
Then $ q(t) = \sqrt{t} \tau_1^2 / \tau_2^2 $ satisfies the $ \mathrm{P}_{\mathrm{III}}' $ equation given in footnote~\ref{footnote:PIII'}, with $ \gamma=\delta=0 $ and $ \alpha=-\beta=8 $. In this way, the solutions of the Painlev\'e equations are expressed in terms of the partition functions of the corresponding 4d $ \mathcal{N}=2 $ theories.

\subsection{Generalization to discrete Painlev\'e equations and 5d \texorpdfstring{$ \mathcal{N}=1 $}{N=1} theories} \label{sec:discreteP-5d}

The correspondence between Painlev\'e equations and 4d $ \mathcal{N}=2 $ theories admits a natural lift to a correspondence between discrete generalizations of Painlev\'e equations and 5d $ \mathcal{N}=1 $ theories. Just as the ordinary Painlev\'e equations arise from isomonodromic deformations of linear differential equations, the $ q $-difference Painlev\'e equations are derived from connection preserving deformations of linear $ q $-difference equations \cite{Jimbo:1995, Murata:2008, Yamada:2008}. Analogous to \eqref{eq:fuchsian}, one considers a linear $ q $-difference system
\begin{align}
    Y(qz) = A(z) Y(z) \, ,
\end{align}
where $ A(z) $ is a rational matrix-valued function. The role of monodromy in the differential case is played by the connection matrix which intertwines two solutions of the linear system near $ z=0 $ and $ z=\infty $. A deformation of the linear system that preserves the connection matrix is called a \emph{connection preserving deformation} or a \emph{q-isomonodromic deformation}. As in \eqref{eq:isomonodomic}, such a deformation implies that $ Y $ satisfies
\begin{align}\label{eq:monodromy-preserving}
    Y(qz, t) = A(z, t) Y(z, t) \, , \quad
    Y(z, qt) = B(z, t) Y(z, t) \, ,
\end{align}
where $ t $ is a deformation parameter and $ B $ is a matrix-valued rational function \cite{Jimbo:1995}. The compatibility condition for the system \eqref{eq:monodromy-preserving} is
\begin{align}\label{eq:monodromy-compatibility}
    A(z, qt) B(z, t) = B(qz, t) A(z, t) \, .
\end{align}
This condition yields a $ q $-difference equation that generalizes the Painlev\'e equation. For example, \cite{Jimbo:1995} considers a particular two-dimensional linear system and shows that the compatibility condition \eqref{eq:monodromy-compatibility} is equivalent to the nonlinear $ q $-difference equations for $ f=f(t) $ and $ g=g(t) $, which parameterize $A(z,t)$,  given by
\begin{align}\label{eq:qVI}
    \frac{f\overline{f}}{a_3a_4} = \frac{(\overline{g}-tb_1) (\overline{g}-tb_2)}{(\overline{g}-b_3) (\overline{g}-b_4)} \, , \quad
    \frac{g\overline{g}}{b_3b_4} = \frac{(f-ta_1) (f-ta_2)}{(f-a_3) (f-a_4)} \, , 
\end{align}
where $ \overline{f}=f(qt) $ denotes the discrete time evolution and $ a_i $ and $ b_i $ are parameters satisfying
\begin{align}
    \frac{b_1b_2}{b_3b_4} = q \frac{a_1a_2}{a_3a_4} \, .
\end{align}
Due to the rescaling freedom in $ f $, $ g $, and $ t $, there are only four independent parameters, and the continuous limit $ q\to 1 $ reproduces the differential Painlev\'e VI equation. This $ q $-difference system is referred to as the $ q $-Painlev\'e VI equation.

The $q$-Painlev\'e VI equation is one example of a broader family of discrete Painlev\'e equations. A characteristic property of these equations is \emph{singularity confinement} \cite{Grammaticos:1991zz}, which can be regarded as a discrete analogue of the Painlev\'e property: a singularity produced by special initial data disappears after finitely many iterations and the dependence on the initial data is recovered. These equations are classified geometrically in terms of rational surfaces obtained by blowing up nine points of the projective plane $ \mathbb{P}^2 $, or equivalently eight points of $ \mathbb{P}^1 \times \mathbb{P}^1 $, which serve as the spaces of initial conditions for the discrete time evolution \cite{Sakai:2001}. For a generic configuration of the nine points in $ \mathbb{P}^2 $, the space of initial conditions admits the largest possible symmetry, of type $ E_8^{(1)} $. The extended affine Weyl group of $ E_8^{(1)} $ acts on the parameters and solutions as a symmetry of the equation, namely as B\"acklund transformations, while the time evolution is realized as a translation in the $ E_8^{(1)} $ affine root lattice. When the configuration of blown-up points \emph{degenerates}, part of the symmetry is reduced, leading to discrete Painlev\'e equations with smaller symmetry types. In this way, each discrete Painlev\'e equation is naturally characterized by its affine Weyl symmetry type.\footnote{In the literature, discrete Painlev\'e equations are often labelled by both their \emph{surface} type and their \emph{symmetry} type. Throughout this paper, we label each equation only by its symmetry type.}

This classification further splits the discrete Painlev\'e equations into three classes according to the type of discrete time evolution: \emph{elliptic}, \emph{multiplicative} (or $ q $-difference), and \emph{additive}. The elliptic Painlev\'e equation is the most general member of the hierarchy, with symmetry type $ E_8^{(1)} $; its time evolution corresponds to a translation on a smooth elliptic curve passing through the nine blown-up points of $ \mathbb{P}^2 $. Successive degeneration of the elliptic Painlev\'e equation yields multiplicative $ q $-difference Painlev\'e equations and additive difference Painlev\'e equations, whose time evolutions are $ t\mapsto qt $ and $ t \mapsto t+\delta $, respectively. We present the explicit expressions of the elliptic and $ q $-Painlev\'e equations in Appendix~\ref{app:PainleveExpression}. Finally, the differential Painlev\'e equations \eqref{eq:Painleve} are obtained in the continuous limits $ q\to 1 $ and $ \delta\to 0 $ of $ q $-difference and additive difference equations, respectively. The degeneration pattern of the Painlev\'e hierarchy is summarized in Figure~\ref{fig:PainleveClassify}. In the figure, each equation is labelled by its symmetry type and time evolution type, $ e $-, $ q $-, $ d $- or $ c $-, denoting the elliptic, multiplicative, additive difference, and continuous (differential) types, respectively, and the arrows denote possible degenerations \cite{Rains:2013, Dzhamay:2018}. The $ q $-Painlev\'e VI equation \eqref{eq:qVI} and the differential Painlev\'e VI equation in \eqref{eq:Painleve} correspond to $ q\text{-}D_5^{(1)} $ and $ c\text{-}D_4^{(1)} $, respectively, while the coalescence cascade \eqref{eq:coalescence} of the differential Painlev\'e equations corresponds to the eight cases at the bottom right of Figure~\ref{fig:PainleveClassify} labelled by $ c $. The terminal $ q\text{-}A_0^{(1)} $ node in the second row is an exceptional case: the corresponding symmetry does not admit a non-trivial translation that generates a discrete time evolution. This node is included only to display the terminal geometry in the degeneration pattern, rather than to indicate a non-trivial $ q $-Painlev\'e equation. For the same reason, the continuous Painlev\'e equations with $ A_0^{(1)} $ symmetry type, which are Painlev\'e $ \text{III}_3 $ and I, have no additive difference counterparts.

\begin{figure}
    \centering
    \begin{tikzcd}[row sep=1.5em, column sep=1.1em, scale cd=0.78]
        \underset{(\text{E-string})}{e\text{-}E_8^{(1)}} \arrow[d, violet] & & & & & & & \underset{(\theta=0)}{q\text{-}A_1^{(1)}} \arrow[rdd, blue] \\
        \underset{(N_f=7)}{q\text{-}E_8^{(1)}} \arrow[r] \arrow[rd, red] & \underset{(N_f=6)}{q\text{-}E_7^{(1)}} \arrow[r] \arrow[rd, red] & \underset{(N_f=5)}{q\text{-}E_6^{(1)}} \arrow[r] \arrow[rd, red] & \underset{(N_f=4)}{q\text{-}D_5^{(1)}} \arrow[r] \arrow[rd, blue] & \underset{(N_f=3)}{q\text{-}A_4^{(1)}} \arrow[r] \arrow[rd, blue] & \underset{(N_f=2)}{q\text{-}E_3^{(1)}} \arrow[r] \arrow[rd, blue] \arrow[rdd, orange] & \underset{(N_f=1)}{q\text{-}E_2^{(1)}} \arrow[r] \arrow[ru] \arrow[rd, blue] \arrow[rdd, orange] & \underset{(\theta=\pi)}{q\text{-}\tilde{A}_1^{(1)}} \arrow[r] \arrow[rd, blue] \arrow[rdd, orange] & \underset{(E_0)}{q\text{-}A_0^{(1)}} \\
        & \underset{(\text{MN }E_8)}{d\text{-}E_8^{(1)}} \arrow[r] & \underset{(\text{MN }E_7)}{d\text{-}E_7^{(1)}} \arrow[r] & \underset{(\text{MN }E_6)}{d\text{-}E_6^{(1)}} \arrow[r] & \underset{(N_f=4)}{d/c\text{-}D_4^{(1)}} \arrow[r] & \underset{(N_f=3)}{d/c\text{-}A_3^{(1)}} \arrow[r] \arrow[rd] & \underset{(N_f=2)}{d/c\text{-}2A_1^{(1)}} \arrow[r] \arrow[rd] & \underset{(N_f=1)}{d/c\text{-}A_1^{(1)}} \arrow[r] \arrow[rd] & \underset{(N_f=0)}{c\text{-}A_0^{(1)}} \\
        & & & & & & \underset{(\text{AD }H_2)}{d/c\text{-}A_2^{(1)}} \arrow[r] & \underset{(\text{AD }H_1)}{d/c\text{-}A_1^{(1)}} \arrow[r] & \underset{(\text{AD }H_0)}{c\text{-}A_0^{(1)}}
    \end{tikzcd}
    \caption{Classification of Painlev\'e equations by symmetry type and the corresponding supersymmetric field theories. The labels $ e $, $ q $, $ d $ and $ c $ denote elliptic, multiplicative $ q $-difference, additive difference, and continuous Painlev\'e equations, respectively. We write $ E_3=A_2+A_1 $ and $ E_2=A_1+A_1 $ for simplicity. The arrows represent degenerations of Painlev\'e equations, while the colored arrows indicate dimensional reductions on the field theory side.} \label{fig:PainleveClassify}
\end{figure}

The classification of the Painlev\'e equations suggests a natural correspondence between the Painlev\'e equations and supersymmetric field theories. First, the elliptic Painlev\'e equation can be associated with the 6d $ \mathcal{N}=(1,0) $ SCFT with $ E_8 $ global symmetry, known as the E-string theory \cite{Ganor:1996mu, Seiberg:1996vs}. Its circle compactification gives a 5d Kaluza-Klein (KK) theory whose low-energy description is the 5d $ \mathcal{N}=1 $ $ \mathrm{SU}(2) $ gauge theory coupled to $ N_f=8 $ fundamental hypermultiplets. This theory can be geometrically engineered by compactifying M-theory on the local elliptic Calabi-Yau threefold, given by the total space of the canonical bundle over the nine-point blowup of $ \mathbb{P}^2 $. Second, the $ q $-Painlev\'e equations correspond to the family of 5d $ \mathcal{N}=1 $ SCFTs known as $ E_n $ theories \cite{Seiberg:1996bd, Morrison:1996xf}. For $ 2 \leq n \leq 8 $, these theories admit low-energy descriptions as $ \mathrm{SU}(2) $ gauge theories coupled to $ N_f=n-1 $ fundamental hypermultiplets. The mass deformation from $ N_f=8 $ to $ N_f=7 $ corresponds to the degeneration from the elliptic Painlev\'e equation to the $ q $-Painlev\'e equation $ q\text{-}E_8^{(1)} $, denoted by the violet arrow in Figure~\ref{fig:PainleveClassify}. For $ n=1 $, there are two distinct SCFT fixed points denoted by $ E_1 $ and $ \tilde{E}_1 $, whose IR descriptions are pure $ \mathrm{SU}(2) $ gauge theories with discrete theta angles $ \theta=0 $ and $ \theta=\pi $, respectively. This family of 5d SCFTs has $ E_n $ global symmetry, where $ E_5=\mathrm{SO}(10) $, $ E_4=\mathrm{SU}(5) $, $ E_3=\mathrm{SU}(3) \times \mathrm{SU}(2) $, $ E_2=\mathrm{SU}(2)\times \mathrm{U}(1) $, and $ E_1=\mathrm{SU}(2) $, while the global symmetry of the $ \tilde{E}_1 $ theory is $ \mathrm{U}(1) $. There is a natural RG flow from the $ N_f=7 $ theory to the pure $ \mathrm{SU}(2)_{\theta=0,\pi} $ theories induced by mass deformations that decouple the fundamental hypermultiplets. This flow corresponds to the degeneration from the $ q\text{-}E_8^{(1)} $ Painlev\'e equation to the $ q\text{-}A_1^{(1)} $ and $ q\text{-}\tilde{A}_1^{(1)} $ Painlev\'e equations in Figure~\ref{fig:PainleveClassify}. Furthermore, the 5d $ E_0 $ SCFT, which is also called the local $ \mathbb{P}^2 $ theory and has trivial global symmetry, can be obtained by a mass deformation of the $ \tilde{E}_1 $ SCFT. This flow corresponds to the degeneration of the geometry from $ q\text{-}\tilde{A}_1^{(1)} $ to $ q\text{-}A_0^{(1)} $.

Third, the additive and differential Painlev\'e equations in the third and fourth lines of Figure~\ref{fig:PainleveClassify} correspond to the 4d $ \mathcal{N}=2 $ rank-1 theories. The circle compactification of the 5d $ \mathrm{SU}(2) $ gauge theory with $ N_f \leq 4 $ fundamental hypermultiplets reduces to the 4d $ \mathrm{SU}(2) $ gauge theory with the same number of flavor hypermultiplets; this reduction corresponds to the degeneration denoted by the blue arrows in Figure~\ref{fig:PainleveClassify}. For the 5d $ \mathrm{SU}(2) $ gauge theory with $ N_f=5, 6, 7 $, circle compactification yields the 4d $ \mathcal{N}=2 $ Minahan-Nemeschansky (MN) SCFTs with $ E_6 $, $ E_7 $, and $ E_8 $ global symmetries, respectively \cite{Minahan:1996fg, Minahan:1996cj}. On the side of discrete Painlev\'e equations, there are corresponding degenerations from $ q $-Painlev\'e equations to additive difference Painlev\'e equations, denoted by the red arrows in Figure~\ref{fig:PainleveClassify}. It is therefore natural to conjecture that these additive difference Painlev\'e equations are related to the MN theories, although these difference Painlev\'e equations are not directly related to the usual differential Painlev\'e equations \cite{Boalch:2007}. Because of the non-Lagrangian nature of the MN theories, their connection to Painlev\'e equations is less developed than the connections for the other cases. We do not address this problem in this paper. Finally, the special RG flows from 5d theories to 4d AD theories, indicated by the orange arrows in Figure~\ref{fig:PainleveClassify} have been studied recently in \cite{Bonelli:2025juz}.

In this paper, we study the concrete connections between elliptic/$ q $-Painlev\'e equations and 5d gauge theories through their tau-functions and instanton partition functions. The tau-functions of the $ q $-Painlev\'e equations connected to the differential Painlev\'e equations by the blue arrows in Figure~\ref{fig:PainleveClassify} can be naturally understood as the $ q $-analogues of the tau-functions of the differential Painlev\'e equations. These tau-functions are conjectured to be Fourier transforms of conformal blocks of the $ q $-deformed Virasoro algebra \cite{Bershtein:2016aef, Jimbo:2017ael}, which are identified with dual Nekrasov instanton partition functions of 5d $ \mathrm{SU}(2) $ gauge theories via the five-dimensional extension of the AGT correspondence \cite{Awata:2009ur}. For example, the tau-function of the $ q\text{-}A_1^{(1)} $ Painlev\'e equation is the $ q $-analogue of \eqref{eq:PIII-tau} and is given by \cite{Bershtein:2016aef, Matsuhira:2018qtx, Bonelli:2020dcp}
\begin{align}
    \tau(t | s, a) = \sum_{n\in \mathbb{Z}} s^n Z(t, aq^n, q) \, , \quad
\end{align}
where $ Z $ is the full Nekrasov partition function of the 5d pure $ \mathrm{SU}(2)_0 $ theory defined in section~\ref{sec:5d}, $ t=e^{-1/g_{\mathrm{YM}}^2} $ is the exponentiated inverse gauge coupling squared, $ a=e^{-\phi} $ encodes the Coulomb branch parameter $ \phi $, and $ q $ is the $ \Omega $-deformation parameter in the self-dual $ \Omega $-background. In analogy with \eqref{eq:PIII3-bilinear}, the two tau-functions $ \tau_1(t) = \tau(t|s,a) $ and $ \tau_2(t) = \sqrt{s}\tau(t|s,a\sqrt{q}) $ satisfy the bilinear relations:
\begin{align}\label{eq:qIII3-bilinear}
    \overline{\tau_1} \underline{\tau_1} = \tau_1^2 + \sqrt{t} \tau_2^2 \, , \quad
    \overline{\tau_2} \underline{\tau_2} = \tau_2^2 + \sqrt{t} \tau_1^2 \, ,
\end{align}
where we use the notation $ \overline{\tau_i} = \tau_i(qt) $ and $ \underline{\tau_i} = \tau_i(q^{-1}t) $ for the forward and backward discrete time evolutions. These equations provide a bilinearization of the $ q\text{-}A_1^{(1)} $ equation, also known as the $ q $-Painlev\'e $ \text{III}_3 $ equation, given by
\begin{align}
    f\overline{f} = g^2 \, , \quad
    g\underline{g} = \frac{f(f+t)}{f+1} \, ,
\end{align}
where $ f $ and $ g $ are expressed in terms of the tau-functions as
\begin{align}
    f = \sqrt{t} \frac{\tau_2^2}{\tau_1^2} \, , \quad
    g = q^{1/4} \sqrt{t} \frac{\tau_2 \overline{\tau_2}}{\tau_1 \overline{\tau_1}} \, .
\end{align}
Such connections between $ q $-Painlev\'e equations and 5d instanton partition functions have been studied for the cases $ N_f \leq 4 $ \cite{Bershtein:2016aef, Jimbo:2017ael, Matsuhira:2018qtx,Gavrylenko:2025nuo}, which lie in the coalescence cascade below the $ q\text{-}D_5^{(1)} $ equation in Figure~\ref{fig:PainleveClassify}. We generalize these connections to the cases $ N_f>4 $ and to the elliptic case, and we bootstrap the corresponding bilinear relations from 5d gauge-theory data.

\section{Bilinear relations of tau-functions}\label{sec:5dbootstrap}

In this section, we describe bilinear relations of elliptic and $ q $-Painlev\'e equations from the perspective of supersymmetric field theories. We define tau-functions as dual Nekrasov partition functions of 5d field theories and discuss the bootstrap method for their bilinear relations.

\subsection{5d gauge theories on \texorpdfstring{$ \Omega $}{Ω}-background}\label{sec:5d}

Consider a 5d $ \mathcal{N}=1 $ gauge theory with gauge group $ G $. The theory contains a vector multiplet for the gauge group $ G $ and charged matter hypermultiplets. The vector multiplet contains a vector field $ A_\mu $ and a real scalar field $ \phi $. Each hypermultiplet contains complex scalar fields that form a doublet under the $ \mathrm{SU}(2)_R $ R-symmetry of the superconformal algebra and transform in a representation of the gauge group $ G $. On the Coulomb branch of the moduli space, the scalar field $ \phi $ in the vector multiplet acquires vacuum expectation values $ \phi_i $ in the Cartan subalgebra of $ G $, where $ i=1,\cdots,\mathfrak{r}_G $ with $ \mathfrak{r}_G=\rank(G) $. As a result, the gauge group $ G $ is broken to its maximal torus $ \mathrm{U}(1)^{\mathfrak{r}_G} $. The Coulomb branch is parametrized by these scalar expectation values, and the low-energy effective Abelian gauge theory is characterized by the cubic prepotential $ \mathcal{F}(\phi) $ given by \cite{Witten:1996qb,Seiberg:1996bd, Intriligator:1997pq}
\begin{align}\label{eq:IMS}
    \mathcal{F} = \left(\frac{m_0}{2} K_{ij} \phi_i \phi_j + \frac{\kappa}{6} d_{ijk} \phi_i \phi_j \phi_k\right) + \frac{1}{12} \left( \sum_{\alpha \in \mathbf{R}} |\alpha \cdot \phi|^3 - \sum_f \sum_{w \in \mathbf{w}_f} |w\cdot \phi + m_f|^3  \right) \, .
\end{align}
Here, $ m_0 = 1/g^2 $ is the inverse of the gauge coupling squared, $ K_{ij} = \Tr(T_i T_j) $ is the Killing form, with $ T_i $ denoting generators of $ G $ in the fundamental representation, $ \kappa $ is the classical Chern-Simons level, and $ d_{ijk} = \frac{1}{2} \Tr T_i \{T_j, T_k\} $ is non-zero only for $ G = \mathrm{SU}(N) $ with $ N \geq 3 $. The remaining two terms are one-loop contributions from vector multiplets and hypermultiplets, where $ \mathbf{R} $ denotes the root system of $ G $ and $ \mathbf{w}_f $ is the set of weights of the $ f $-th hypermultiplet with mass $ m_f $. The Coulomb branch is divided into distinct sub-chambers, which are distinguished by the signs of the arguments of the absolute-value terms in \eqref{eq:IMS}, and the prepotential takes different expressions in different sub-chambers.

The derivatives of the prepotential compute supersymmetry protected physical quantities. The first derivative $ T_i = \partial_i \mathcal{F}(\phi) $ is the tension $ T_i $ of the magnetic monopole strings, the second derivative $ (\tau_{\mathrm{eff}})_{ij} = \partial_i \partial_j \mathcal{F}(\phi) $ yields the effective coupling $ \tau_{\mathrm{eff}} $ which is the metric on the Coulomb branch, and the third derivatives $ \partial_i \partial_j \partial_k \mathcal{F}(\phi) $ compute the Chern-Simons couplings. These physical observables can be used to classify possible 5d gauge theories which have UV-completions \cite{Intriligator:1997pq, Jefferson:2017ahm}. The main idea of the classification program is that UV-complete 5d gauge theories have non-trivial physical Coulomb branch, which has a positive semi-definite metric, and on which the tensions of all monopole strings $ T_i $ are non-negative. 

We study tau-functions of discrete Painlev\'e equations using the partition functions of 5d $ \mathrm{SU}(2) $ gauge theories. The partition function of a 5d $ \mathcal{N}=1 $ theory on the $ \Omega $-deformed $ \mathbb{C}^2 \times S^1 $ is the Witten index defined as \cite{Nekrasov:2002qd}
\begin{align}\label{eq:Z}
    Z(\phi, m; \epsilon_1, \epsilon_2) = \Tr\left[ (-1)^F e^{-\beta\{Q,Q^\dagger\}} e^{-\epsilon_1(J_1+J_R)} e^{-\epsilon_2(J_2+J_R)} e^{-\phi \cdot \Pi} e^{-m \cdot H} \right] \, ,
\end{align}
where $ F $ is the fermion number, $ J_1 $, $ J_2 $ are the Cartan generators of the $ \mathrm{SO}(4) $ Lorentz rotation, $ J_R $ is the Cartan generator of the $ \mathrm{SU}(2)_R $ R-symmetry, and $ \Pi $ and $ H $ are the gauge and flavor charges, respectively. The supercharge $ Q $ and its conjugate $ Q^\dagger $ commute with $ J_1+J_R $ and $ J_2+J_R $. We denote by $ \beta $ the radius of $ S^1 $, by $ \epsilon_{1,2} $ the $ \Omega $-deformation parameters, and by $ \phi $ and $ m $ the chemical potentials associated with the gauge and flavor symmetries, respectively. This Witten index counts BPS states annihilated by $ Q $ and $ Q^\dagger $, and is therefore independent of $ \beta $.

The partition function of a 5d gauge theory on the $ \Omega $-background factorizes into the product of a regularization factor $ e^{\mathcal{E}} $, the perturbative partition function $ Z_{\mathrm{pert}} $ and the instanton partition function $ Z_{\mathrm{inst}} $ as
\begin{align}\label{eq:Z-factor}
    Z(\phi, m_0, m_f; \epsilon_1, \epsilon_2) = e^{\mathcal{E}(\phi, m_0, m_f; \epsilon_1, \epsilon_2)} \cdot Z_{\mathrm{pert}}(\phi,m_f;\epsilon_1,\epsilon_2) \cdot Z_{\mathrm{inst}}(\phi,m_0,m_f; \epsilon_1,\epsilon_2) \, ,
\end{align}
where $ m_0 $ denotes the inverse of the gauge coupling squared and $ m_f $ represents the chemical potentials for the other flavor symmetries. The perturbative partition function $ Z_{\mathrm{pert}} $ consists of the 1-loop contributions of vector multiplets and hypermultiplets, and $ \mathcal{E} $ is called the effective prepotential, which is a polynomial in the Coulomb parameter $ \phi $ and arises from the classical contributions together with the regularization of the one-loop terms \cite{Kim:2020hhh}. These factors depend on the boundary conditions at infinity of $ \mathbb{C}^2 $ that preserve the supercharges $ Q $ and $ Q^\dagger $. We choose boundary conditions such that vector multiplets associated with the positive roots of the gauge group, as well as hypermultiplets with positive masses, survive. Under these boundary conditions, the perturbative partition function is given by
\begin{align}\label{eq:pertZ}
    Z_{\mathrm{pert}} = \exp\!\left[ \sum_{n=1}^\infty \! \left( \sum_{\alpha \in \mathbf{R}^+} \frac{-\cosh(n\epsilon_+) e^{-n\alpha \cdot \phi}}{2n\sinh(\frac{n\epsilon_1}{2}) \sinh(\frac{n \epsilon_2}{2})} + \sum_f \sum_{w\in \mathbf{w}_f} \frac{e^{-n|w\cdot \phi+m_f|}}{4n\sinh(\frac{n\epsilon_1}{2}) \sinh(\frac{n \epsilon_2}{2})} \right)\! \right] \!,
\end{align}
where $ \mathbf{R}^+ $ is the set of positive roots of the gauge group $ G $ and $ \epsilon_+ = \frac{\epsilon_1+\epsilon_2}{2} $. The effective prepotential under the same boundary conditions is given by
\begin{align}\label{eq:effprep}
    \mathcal{E} = \frac{1}{\epsilon_1 \epsilon_2} \left( \mathcal{F} + \frac{\epsilon_1^2+\epsilon_2^2}{48} C_i^G \phi_i + \frac{\epsilon_+^2}{2} C_i^R \phi_i \right) \, .
\end{align}
Here, $ \mathcal{F} $ is the cubic prepotential \eqref{eq:IMS}, and $ C_i^G $ and $ C_i^R $ are mixed gauge-gravitational and gauge-$ \mathrm{SU}(2)_R $ Chern-Simons coefficients induced by the 1-loop contributions of the charged fermions \cite{Bonetti:2013ela, Grimm:2015zea, BenettiGenolini:2019zth}
\begin{align}
    C_i^G = -\frac{\partial}{\partial \phi_i} \left( \sum_{\alpha \in \mathbf{R}} |\alpha \cdot \phi| - \sum_f \sum_{w\in \mathbf{w}_f} |w \cdot \phi + m_f| \right) \, , \quad
    C_i^R = \frac{1}{2} \frac{\partial}{\partial \phi_i} \sum_{\alpha \in \mathbf{R}} |\alpha \cdot \phi| \, .
\end{align}
Finally, the instanton partition function $ Z_{\mathrm{inst}} $ is given by
\begin{align}
    Z_{\mathrm{inst}}(\phi,m_0,m_f;\epsilon_1,\epsilon_2) = 1 + \sum_{n=1}^\infty \mathfrak{q}^n Z_n(\phi,m_f; \epsilon_1,\epsilon_2) \, ,
\end{align}
where $ \mathfrak{q} = e^{-m_0} $ is the instanton counting parameter and $ Z_n $ is the $ n $-instanton partition function defined by the path integral over the instanton moduli space. When the ADHM construction of the instanton moduli space is available, the instanton partition function of the 5d gauge theory can be computed by evaluating the path integral of the ADHM quantum mechanics using supersymmetric localization \cite{Nekrasov:2002qd, Nekrasov:2003rj}.

We remark that the partition function \eqref{eq:Z} can also be interpreted as the refined closed topological string partition function. The refined free energy $ F = \log Z $ admits the genus expansion given by
\begin{align}\label{eq:topexpandrefine}
    F(\phi,m;\epsilon_1,\epsilon_2) = \sum_{n,g=0}^\infty (\epsilon_1 \epsilon_2)^{g-1} (\epsilon_1+\epsilon_2)^{2n} F_{(n,g)}(\phi,m) \, .
\end{align}
The effective prepotential \eqref{eq:effprep} is the polynomial part of the $ F_{(n,g)} $ with $ n+g \leq 1 $. In addition to the effective prepotential, the refined free energy encodes information about BPS states of 5d field theories through
\begin{align}\label{eq:gv}
    F(\phi,m;\epsilon_1,\epsilon_2) = \mathcal{E} + \sum_{j_l,j_r,d} \sum_{n=1}^\infty (-1)^{2(j_l+j_r)} N_{j_l,j_r}^{d} \frac{ \chi_{j_l}(n\epsilon_-) \chi_{j_r}(n\epsilon_+)}{4n\sinh(\frac{n \epsilon_1}{2}) \sinh(\frac{n \epsilon_2}{2})} e^{-n d \cdot t} \, ,
\end{align}
where $ N_{j_l,j_r}^{d} $ is the degeneracy of a single-particle BPS state with spin $ (j_l,j_r) = (\frac{J_1-J_2}{2}, \frac{J_1+J_2}{2}) $ under the $ \mathrm{SO}(4) $ Lorentz rotations and charge $ d $ under the gauge and flavor symmetries of the theory. Here, $ \epsilon_\pm = \frac{\epsilon_1\pm \epsilon_2}{2} $ are the $ \Omega $-deformation parameters, $ t=t(\phi,m) $ collectively denotes the chemical potentials, and $ \chi_j $ is the spin-$ j $ character of $ \mathrm{SU}(2) $. For example, the W-boson and perturbative hypermultiplet contributions in the perturbative partition function \eqref{eq:pertZ} correspond to $ (j_l,j_r)=(0,\frac{1}{2}) $ and $ (0,0) $, respectively. In the geometric construction of 5d field theories realized by M-theory compactified on a local Calabi-Yau threefold, this partition function counts BPS states arising from M2-branes wrapping compact 2-cycles in the Calabi-Yau threefold, where $ t $ represents the K\"ahler parameters of the Calabi-Yau geometry. The ordinary topological string partition function $ Z^{\mathrm{top}} = \exp(F^{\mathrm{top}}) $ can be obtained in the unrefined limit $ \epsilon_1 = -\epsilon_2 = \epsilon $ of the $ \Omega $-deformation parameters:
\begin{align}\label{eq:topexpandunrefine}
    F^{\mathrm{top}}(\phi,m;\epsilon) = F(\phi,m;\epsilon,-\epsilon) = \sum_{g=0}^\infty \epsilon^{2g-2} F_g(\phi,m) \, ,
\end{align}
where $ \epsilon $ is identified with the topological string coupling constant and $ F_g $ is called the genus-$ g $ free energy.

\paragraph{Perturbative data of $ \mathrm{SU}(2) $ gauge theories}

In this paper, we mainly consider 5d $ \mathrm{SU}(2) $ gauge theories coupled to $ N_f $ fundamental hypermultiplets. We denote this theory by the $ \mathrm{SU}(2)+N_f\mathbf{F} $ theory. The refined effective prepotential can be written as
\begin{align}\label{eq:F-su2}
    \mathcal{E} &= \frac{1}{\epsilon_1 \epsilon_2} \left( \frac{8-N_f}{6}\phi^3 + m_0\phi^2  - \frac{1}{2} \sum_{i=1}^{N_f} m_i^2 \phi  - \frac{\epsilon_1^2 + \epsilon_2^2}{24} (2-N_f) \phi + \epsilon_+^2 \phi \right) \, ,
\end{align}
where $ \phi $ is the Coulomb branch parameter in the Dynkin basis of the gauge group $ \mathrm{SU}(2) $, and we choose the chamber $ \phi>0 $ and $ \phi \pm m_i \geq 0 $ for the mass parameters $ m_1, \cdots, m_{N_f} $ of the fundamental hypermultiplets. In this chamber, the 1-loop perturbative partition function is given by
\begin{align} \label{eq:su2-pert}
    Z_{\mathrm{pert}} = \exp\left[ \sum_{n=1}^\infty \left( \frac{-\cosh(n\epsilon_+) e^{-2n\phi}}{2n\sinh(\frac{n\epsilon_1}{2}) \sinh(\frac{n \epsilon_2}{2})} + \sum_{i=1}^{N_f} \frac{e^{-n(\phi+m_i)} + e^{-n(\phi-m_i)}}{4n\sinh(\frac{n\epsilon_1}{2}) \sinh(\frac{n \epsilon_2}{2})} \right) \right] .
\end{align}
When $ N_f>8 $, the coefficient of the cubic term in $ \phi $ in the prepotential \eqref{eq:F-su2} is negative, and consequently the Coulomb branch metric becomes negative \cite{Seiberg:1996bd}. This signals the UV inconsistency of the theory. For this reason, we only consider $ N_f \leq 8 $ theories. From the geometric perspective, these theories can be engineered by M-theory compactified on a local Calabi-Yau threefold, namely the total space of the canonical bundle over the $ N_f $-point blowup of $ \mathbb{P}^1 \times \mathbb{P}^1 $, while the $ \mathrm{SU}(2)_\pi $ theory corresponds to the canonical bundle over the one-point blowup of the complex projective plane $ \mathbb{P}^2 $. The 5d $ \mathrm{SU}(2) $ gauge theories with $ N_f \leq 7 $ flow to SCFTs in the UV, while the $ \mathrm{SU}(2)+8\mathbf{F} $ theory is UV-completed by the 6d $ \mathcal{N}=(1,0) $ E-string theory compactified on a circle.

\paragraph{SU(2) gauge theory with $ N_f \leq 4 $}

The instanton partition functions for theories with $ N_f \leq 4 $ can be computed from the path integral of the instanton ADHM quantum mechanics, whose gauge group is $ \mathrm{U}(n) $, via supersymmetric localization \cite{Hwang:2014uwa}:
\begin{align}\label{eq:su2-adhm}
    Z_n = \sum_{|Y_1|+|Y_2|=n} \prod_{i=1}^2 \prod_{s\in Y_i} \frac{(-1)^{n+\kappa-N_f/2} e^{-\kappa \varphi(s)} \prod_{l=1}^{N_f} 2\sinh(\frac{\varphi(s)+m_l}{2})}{\prod_{j=1}^2 2\sinh(\frac{E_{ij}}{2}) \cdot 2\sinh(\frac{E_{ij}-2\epsilon_+}{2})} \, .
\end{align}
Here, $ (Y_1,Y_2) $ is a colored Young diagram, $ s=(s_1,s_2) \in Y_i $ is a box in the Young diagram $ Y_i $ with vertical and horizontal positions $ s_1 $ and $ s_2 $, respectively, measured from the upper-left corner of $ Y_i $. $ E_{ij} $ and $ \varphi(s) $ are defined as
\begin{align}
    \begin{aligned}
        E_{ij} &= (-1)^i \phi - (-1)^j \phi - \epsilon_1 h_i(s) + \epsilon_2(v_j(s)+1) \, , \\
        \varphi(s) &= (-1)^i \phi - \epsilon_+ - (s_1-1)\epsilon_1 - (s_2-1)\epsilon_2 \, ,
    \end{aligned}
\end{align}
where $ h_i(s) $ is the arm length of $ s\in Y_i $ and $ v_j(s) $ is the leg length of $ s \in Y_j $. When $ N_f>0 $, $ \kappa=0 $ for even $ N_f $ and $ \kappa=\pm 1/2 $ for odd $ N_f $. Changing the sign of $ \kappa $ can be absorbed into a redefinition of the mass parameters $ m_l $. In the case $ N_f=0 $, the two possible choices $ \kappa=0 $ and $ \kappa=1 $ lead to distinct instanton partition functions. They correspond to the choices of the $ \mathbb{Z}_2 $-valued discrete theta angle $ \theta=0 $ and $ \theta=\pi $, respectively, which are associated with the non-trivial homotopy group $ \pi_4(\mathrm{SU}(2)) = \mathbb{Z}_2 $ of the gauge group.

\paragraph{SU(2) gauge theory with $ N_f > 4 $}

When there are more than four fundamental hypermultiplets in the 5d $ \mathrm{SU}(2) $ gauge theory, the formula \eqref{eq:su2-adhm} does not give the correct instanton partition function due to the higher-order poles in the localization integral of the ADHM quantum mechanics. Nevertheless, it is possible to compute the instanton partition function by regarding the $ \mathrm{SU}(2) $ gauge theory as an $ \mathrm{Sp}(1) $ gauge theory and constructing the ADHM quantum mechanics, with an $ \mathrm{O}(n) $ gauge group. Unlike \eqref{eq:su2-adhm}, the instanton partition function is not expressed as a sum over Young diagrams.\footnote{The unrefined partition function can be represented as a sum over Young diagrams \cite{Nawata:2021dlk, Chen:2023smd}.} However, one can compute the instanton partition function order by order in the instanton expansion. At the 1-instanton order, the result is
\begin{align}\label{eq:sp1-adhm-1inst}
    Z_1 = \frac{\prod_{l=1}^{N_f} 2\sinh(\frac{m_l}{2})}{32\sinh(\frac{\epsilon_1}{2}) \sinh(\frac{\epsilon_2}{2}) \sinh(\frac{\epsilon_+\pm\phi}{2})} + \frac{\prod_{l=1}^{N_f} 2\cosh(\frac{m_l}{2})}{32\sinh(\frac{\epsilon_1}{2}) \sinh(\frac{\epsilon_2}{2}) \cosh(\frac{\epsilon_+\pm\phi}{2})} \, ,
\end{align}
where we use the shorthand notation $ \sinh(x\pm y) = \sinh(x+y) \sinh(x-y) $. The case $ N_f=0 $ reproduces the instanton partition function of the $ \mathrm{SU}(2)_0 $ theory, while the 1-instanton partition function of the $ \mathrm{SU}(2)_\pi $ theory can be obtained by changing the sign of the first term. When $ N_f \leq 4 $, this formula reproduces the 1-instanton partition function in \eqref{eq:su2-adhm}, up to changes in the signs of the mass parameters $ m_l $ and constant terms that do not depend on the dynamical parameter $ \phi $ of the theory. The computational details for the higher-instanton partition functions are summarized in Appendix~\ref{app:adhm}. In this paper, we choose the signs of the mass parameters in the instanton partition functions as \eqref{eq:sp1-adhm-1inst}. As a result, the 1-instanton partition function for $ N_f \geq 1 $ has the following leading behavior in the $ e^{-\phi} $ expansion:
\begin{align}\label{eq:1inst-leading}
    Z_1 = \frac{\chi_{\mathbf{R}}(m_l) e^{-\phi}}{4\sinh(\frac{\epsilon_1}{2}) \sinh(\frac{\epsilon_2}{2})} + \mathcal{O}(e^{-2\phi}) \, ,
\end{align}
where
\begin{align}
    \chi_{\mathbf{R}}(m_l) = e^{(m_1+\cdots+m_{N_f})/2} \sum_{n=\text{odd}} \sum_{P_{n}} \prod_{l \in P_n} e^{-m_l} \, , \quad
    |P_n| = n \, ,
\end{align}
for the set $ P_n \subset \{1,2,\cdots,N_f\} $ of order $ n $. We remark that $ \chi_{\mathbf{R}} $ is the character of the $ \mathrm{SO}(2N_f) $ flavor symmetry algebra in the $ 2^{N_f-1} $-dimensional representation $ \mathbf{R} $. When $ N_f \leq 7 $, this leading term is the contribution of hypermultiplet states that form part of a representation of the enhanced $ E_{N_f+1} $ flavor symmetry.

\paragraph{SU(2) gauge theory with $ N_f = 8 $}
The 5d $ \mathrm{SU}(2)+8\mathbf{F} $ theory is UV dual to the 6d $ \mathcal{N}=(1,0) $ E-string theory compactified on a circle. This 6d theory has $ E_8 $ flavor symmetry, and its effective prepotential is given by \cite{Kim:2020hhh}
\begin{align}\label{eq:E-Estr}
    \mathcal{E}_{\text{E-string}} = \frac{1}{\epsilon_1 \epsilon_2} \left(  \frac{\mu_0}{2} \varphi^2 - \frac{1}{2} \sum_{l=1}^8 \mu_l^2 \varphi + \frac{\epsilon_1^2+\epsilon_2^2}{4} \varphi + \epsilon_+^2 \varphi \right) \, ,
\end{align}
where $ \varphi $ is the 6d tensor branch parameter, $ \mu_0 = 1/R_{\mathrm{6d}} $ is the inverse radius of the 6d circle and $ \mu_{l=1,\cdots,8} $ are the holonomies for the $ E_8 $ flavor symmetry. The full partition function of E-string theory also contains non-perturbative contributions coming from the instanton strings \cite{Kim:2014dza}. The parametrizations $ (\phi,m_l) $ in 5d and $ (\varphi, \mu_l) $ in 6d are related by
\begin{align}\label{eq:map-6d5d}
    \varphi = \phi+m_0-m_8 \, , \quad
    \mu_0 = 2m_0 \, , \quad
    \mu_l = m_l \, , \ (1 \leq l \leq 7) \, , \quad
    \mu_8 = m_8 - 2m_0 \, .
\end{align}
Consequently, the effective prepotential \eqref{eq:F-su2} with $ N_f=8 $ and \eqref{eq:E-Estr} are equivalent, up to an overall constant that does not depend on the dynamical parameter $ \phi $ or $ \varphi $ of the theory.

\subsection{Bootstrapping bilinear relations}

In this subsection, we describe the bootstrap method for the bilinear relations of tau-functions using 5d gauge theory data. To define the tau-function, we consider the blowup geometry $ \hat{\mathbb{C}}^2 $ obtained by blowing up the origin of $ \mathbb{C}^2 $ to an exceptional divisor $ \mathbb{P}^1 $. The projective coordinates of $ \hat{\mathbb{C}}^2 $ can be defined by $ (z_0, z_1, z_2) \sim (\lambda^{-1} z_0, \lambda z_1, \lambda z_2) $ for $ \lambda \in \mathbb{C}^* $, where the exceptional $ \mathbb{P}^1 $ corresponds to the $ z_0=0 $ locus. The two generators $ J_1 $ and $ J_2 $ of Lorentz rotations act on the projective coordinates by
\begin{align}
    (z_0, z_1, z_2) \mapsto (z_0, e^{\epsilon_1} z_1, e^{\epsilon_2} z_2)  \, ,
\end{align}
where $ \epsilon_1 $ and $ \epsilon_2 $ are Cartans of the $ \mathrm{SO}(4) $ rotation. The north pole $ (0,1,0) $ and the south pole $ (0,0,1) $ of the $ \mathbb{P}^1 $ are fixed points under the rotation. The $ \mathbb{C}^* $-invariant local coordinates $ (U,V) $ near the fixed points are $ (z_0 z_1, z_2/z_1) $ and $ (z_1/z_2, z_0z_2) $, respectively. The generators of the Lorentz rotation $ J_1 $ and $ J_2 $ act as
\begin{align}\label{eq:local-coord}
    (U,V) \mapsto \left\{\begin{array}{ll}
            (e^{\epsilon_1} U, e^{\epsilon_2-\epsilon_1}V ) & \quad \text{(north pole)} \\
            (e^{\epsilon_1-\epsilon_2} U, e^{\epsilon_2}V ) & \quad \text{(south pole),}
    \end{array}\right.
\end{align}
near the fixed points. This is illustrated in Figure~\ref{fig:blowup}.

\begin{figure}
    \centering
    \begin{subfigure}[b]{0.4\linewidth}
        \centering
        \begin{tikzpicture}[scale=1.2]
            \draw [<->, thick] (0,2) -- (0,0) -- (2,0);
            \filldraw (0,0) circle (0.07);
            \draw (1.9, 1.9) node {$ \mathbb{C}^2 $};
            \draw (1.3, 0) [partial ellipse = 90:270:1.3 and 0.2];
            \draw (1.3, 0) ellipse (0.1 and 0.2);
            \draw (0, 1.3) [partial ellipse = 0:-180:0.2 and 1.3];
            \draw (0, 1.3) ellipse (0.2 and 0.1);
            \draw (1, -0.4) node {$ \epsilon_1 $};
            \draw (-0.4, 1) node {$ \epsilon_2 $};
        \end{tikzpicture}
        \caption{}
    \end{subfigure}
    \begin{subfigure}[b]{0.4\linewidth}
        \centering
        \begin{tikzpicture}[scale=1.2]
            \draw [<->, thick] (0,2) -- (0,0.7) -- (0.7,0) -- (2,0);
            \draw (1.9, 1.9) node {$ \hat{\mathbb{C}}^2 $};
            \draw (0,0) node {$ \mathbb{P}^1 $};
            \draw (1.7, 0) [partial ellipse = 90:270:1 and 0.2];
            \draw (1.7, 0) ellipse (0.1 and 0.2);
            \draw (0, 1.7) [partial ellipse = 0:-180:0.2 and 1];
            \draw (0, 1.7) ellipse (0.2 and 0.1);
            \draw [rotate around={-45:(0.35,0.35)}] (0.35, 0.35) ellipse (0.495 and 0.2);
            \draw (1.5, -0.4) node {$ \epsilon_1 $};
            \draw (-0.45, 1.5) node {$ \epsilon_2 $};
            \draw (1.15, 0.35) node {$ \epsilon_2-\epsilon_1 $};
            \draw (0.75, 0.8) node {$ \epsilon_1-\epsilon_2 $};
        \end{tikzpicture}
        \caption{}
    \end{subfigure}
    \caption{The blowup geometry $ \hat{\mathbb{C}}^2 $ is obtained by blowing up the origin of $ \mathbb{C}^2 $ and replacing it by $ \mathbb{P}^1 $.} \label{fig:blowup}
\end{figure}

The full partition function $ \hat{\mathcal{Z}} $ on $ \hat{\mathbb{C}}^2 \times S^1 $ can be obtained by performing supersymmetric localization on the Coulomb branch. This involves summing over dynamical magnetic fluxes $ n=(n_i)_{i=1}^{\mathfrak{r}_G} $ in the coweight lattice of the gauge algebra, where $ \mathfrak{r}_G $ is the rank of the gauge group. The background magnetic fluxes $ B=(b_i)_{i=1}^{\mathfrak{r}_F} $ for the maximal torus $ \mathrm{U}(1)^{\mathfrak{r}_F} $ of the flavor symmetry group can also be turned on, where $ \mathfrak{r}_F $ is the rank of the flavor symmetry. These magnetic fluxes are supported on the compact 2-cycle $ \mathbb{P}^1 $. For each flux sector, the partition function is localized at the two fixed points on the $ \mathbb{P}^1 $. As a consequence, $ \hat{\mathcal{Z}} $ can be written as
\begin{align}\label{eq:blowup}
    \hat{\mathcal{Z}} = \sum_{n\in \mathbb{Z}^{\mathfrak{r}_G}+r} (-1)^{|n|} \hat{Z}(\phi_j+n_j \epsilon_1, m_k+b_k \epsilon_1; \epsilon_1, \epsilon_2-\epsilon_1) \hat{Z}(\phi_j + n_j\epsilon_2, m_k+b_k\epsilon_2; \epsilon_1-\epsilon_2, \epsilon_2) \, ,
\end{align}
where $ |n|=\sum n_j $, and we assume that $ \phi_j $ is the Coulomb branch parameter in the Dynkin basis, so that $ r $ is the fractional part of the dynamical magnetic flux $ n $. The two $ \hat{Z} $ factors on the RHS correspond to the partition functions localized at the north and south poles, respectively. They are defined in the same way as \eqref{eq:Z}, but with $ (-1)^F $ replaced by $ (-1)^{2J_R} $ due to the effect of the magnetic flux on the angular momentum of charged states \cite{Huang:2017mis, Kim:2020hhh}. This replacement is equivalent to the shift $ \epsilon_1 \to \epsilon_1+2\pi i $ for the $ \Omega $-deformation parameter in the partition function $ Z $ except for the regularization factor $ e^{\mathcal{E}} $. This formula is called the \emph{blowup equation} \cite{Nakajima:2003pg, Nakajima:2005fg, Gottsche:2006bm}.

The magnetic fluxes $ (n,B) $ on $ \mathbb{P}^1 $ satisfy a quantization condition. For a BPS state with spin $ (j_l, j_r) $ under the $ \mathrm{SO}(4) $ Lorentz rotation and charges $ (\Pi, H) $ under the gauge and flavor symmetries, the magnetic fluxes $ n $ and $ B $ satisfy \cite{Huang:2017mis, Kim:2020hhh, Closset:2022vjj}
\begin{align}\label{eq:quantization}
    (-1)^{2j_l + 2j_r + 1} = (-1)^{n \cdot \Pi + B \cdot H} \, .
\end{align}
The vector and hypermultiplets have $ (j_l,j_r)=(0,\frac{1}{2}) $ and $ (0,0) $, respectively, and the charges of W-bosons and perturbative hypermultiplets constrain the quantization of the magnetic fluxes. For special sets of magnetic fluxes satisfying the quantization condition \eqref{eq:quantization} studied in \cite{Huang:2017mis, Kim:2019uqw, Kim:2020hhh}, the LHS of the blowup equation $ \hat{\mathcal{Z}} $ becomes the partition function $ \hat{Z} $, which is equivalent to the original partition function defined on $ \mathbb{C}^2 \times S^1 $. For more general sets of magnetic fluxes, $ \hat{\mathcal{Z}} $ can be written in terms of the partition functions of $ \mathbb{C}^2 \times S^1 $ in the presence of half-BPS Wilson loop operators inserted at the origin \cite{Kim:2021gyj, Wang:2023zcb}.

We now consider the Nekrasov-Shatashvili (NS) limit $ (\epsilon_1, \epsilon_2) \to (\epsilon, 0) $ of $ \hat{\mathbb{C}}^2_{\epsilon_1,\epsilon_2} $. From Figure~\ref{fig:blowup}(b), the equivariant weights of the coordinates near the north pole of the exceptional divisor are $ (\epsilon,-\epsilon) $. The partition function localized at this patch becomes the partition function in the self-dual background $ \mathbb{C}^2_{\epsilon,-\epsilon} $ with shifted chemical potentials. On the other hand, the local patch near the south pole of the $ \mathbb{P}^1 $ has the equivariant weights $ (\epsilon, 0) $, and the behavior of the normalized localized partition function in the NS limit is
\begin{align}
    \frac{\hat{Z}(\phi_j+n_j \epsilon_2, m_k+b_k \epsilon_2; \epsilon_1-\epsilon_2, \epsilon_2)}{\hat{Z}(\phi,m;\epsilon_1,\epsilon_2)} = \exp\left(n_j \frac{\partial \mathcal{W}}{\partial \phi_j} + b_k \frac{\partial \mathcal{W}}{\partial m_k} - \frac{\partial \mathcal{W}}{\partial \epsilon_1} + \mathcal{O}(\epsilon_2) \right) \, ,
\end{align}
where $ \mathcal{W} $ is the effective twisted superpotential defined as
\begin{align}
    \mathcal{W}(\phi,m,\epsilon_1) = \lim_{\epsilon_2 \to 0} \epsilon_2 \hat{F}(\phi,m;\epsilon_1,\epsilon_2) \, ,
\end{align}
and $\hat{F}=
\log \hat Z$.
Thus, the NS limit of the blowup formula \eqref{eq:blowup} can be expressed as
\begin{equation}\label{eq:ns-blowup}
    \lim_{\epsilon_2 \to 0} \frac{\hat{\mathcal{Z}}}{\hat{Z}} = \sum_n \exp\left(n_j \left( \frac{\partial \mathcal{W}}{\partial \phi_j} + \pi i \right) + b_k \frac{\partial \mathcal{W}}{\partial m_k} - \frac{\partial \mathcal{W}}{\partial \epsilon_1} \right) \hat{Z}(\phi_j+n_j \epsilon, m_k+b_k \epsilon; \epsilon, -\epsilon) \, .
\end{equation}
This has the same structure as the Nekrasov-Okounkov dual partition function \eqref{eq:dualpf}. Motivated by this, we define the tau-function associated with the 5d gauge theory as \cite{Bershtein:2018srt, Bonelli:2024wha, Bonelli:2025juz}
\begin{align}\label{eq:tau}
    \tau\genfrac[]{0pt}{1}{r}{B}(s,\phi,m,\epsilon) = \sum_{n\in \mathbb{Z}^{\mathfrak{r}_G}+r} \bigg[ \prod_{j=1}^{\mathfrak{r}_G} s_j^{n_j} \bigg] \hat{Z}(\phi_j+n_j \epsilon, m_k+b_k \epsilon; \epsilon) \, ,
\end{align}
where $ \hat{Z}(\phi,m;\epsilon) $ is the partition function of the 5d theory in the self-dual $ \Omega $-background $ (\epsilon_1, \epsilon_2)=(\epsilon,-\epsilon) $, and $ s $ is the fugacity for the dynamical magnetic flux $ n $ on the exceptional divisor $ \mathbb{P}^1 $. The tau-function is thus the grand-canonical partition function on $ \hat{\mathbb{C}}^2 \times S^1 $ in the NS limit, and it is connected to the blowup formula by specializing the fugacity $ s $ as in \eqref{eq:ns-blowup}.

We now describe the bootstrapping algorithm for the bilinear relations of the tau-functions. Motivated by the bilinear relation \eqref{eq:qIII3-bilinear} of the $ q $-Painlev\'e $ \mathrm{III}_3 $ system, we write an ansatz for a bilinear relation consisting of three terms:
\begin{align}\label{eq:ansatz}
    \Lambda_1(m,\epsilon) \tauftn{r_1}{B_1} \tauftn{r_1'}{B_1'} + \Lambda_2(m,\epsilon) \tauftn{r_2}{B_2} \tauftn{r_2'}{B_2'} + \Lambda_3(m,\epsilon) \tauftn{r_3}{B_3} \tauftn{r_3'}{B_3'} = 0 \, ,
\end{align}
where we assume that $ \Lambda_i(m,\epsilon) $ do not depend on the dynamical parameter $ \phi $ of the 5d field theory, and the magnetic fluxes $ (r_i, B_i) $ and $ (r_i',B_i') $ satisfy the appropriate quantization condition \eqref{eq:quantization}. Using the definition \eqref{eq:tau} of the tau-function, this can be rewritten as
\begin{equation}
    \sum_{i=1}^3 \Lambda_i \sum_{n,n' \in \mathbb{Z}^{\mathfrak{r}_G}} s^{n+n'+r_i+r_i'} \hat{Z}\big(\phi+(n+r_i)\epsilon, m+B_i\epsilon,\epsilon\big) \hat{Z}\big(\phi+(n'+r_i')\epsilon, m+B_i'\epsilon,\epsilon\big) = 0 ,
\end{equation}
where $ s=(s_1,\cdots,s_{\mathfrak{r}_G}) $ collectively represents the fugacities for the dynamical magnetic fluxes on $ \mathbb{P}^1 $. Since the coefficient of each power of $ s $ must vanish separately for a consistent bilinear relation, we require that the fractional parts of the dynamical magnetic fluxes satisfy
\begin{align}\label{eq:total-flux-cond-r}
    r_1 + r_1' \equiv r_2+r_2' \equiv r_3+r_3' \pmod{\mathbb{Z}^{\mathfrak{r}_G}} \, .
\end{align}
Furthermore, as we will argue in Section~\ref{sec:topological-modularity}, an additional analysis of modularity imposes the following constraints on the background fluxes
\begin{align}\label{eq:total-flux-cond-B}
    B_1 + B_1' = B_2 + B_2' = B_3 + B_3' \, .
\end{align}
We remark that the tau-functions are labeled by the magnetic fluxes $ (r,B) $, and we expect that changing these fluxes corresponds to a B\"acklund transformation on the integrable system side.

Our proposal is that the coefficients $ \Lambda_i $ of the bilinear relation \eqref{eq:ansatz} can be determined from the classical and perturbative information of the 5d gauge theory, as well as its symmetry data or the spinning spectrum of BPS states carrying the minimal gauge charges. We now describe an algorithm for finding the coefficients $ \Lambda_i $. First, fix a set of magnetic fluxes $ (r,B) $ that satisfy the quantization condition \eqref{eq:quantization}. Since imposing only this quantization condition yields infinitely many possible magnetic fluxes, we impose an upper bound for the background magnetic fluxes, such as $ |b_i| < b_{\mathrm{max}} $ or $ \lVert B \rVert < B_{\mathrm{bound}} $. Let us denote the set of such magnetic fluxes $ (r,B) $ by $ \mathcal{S} $. For each element of $ \mathcal{S} $, we associate a tau-function \eqref{eq:tau}. Second, set a \emph{total flux} $ \Phi = (r_{\mathrm{tot}}; B_{\mathrm{tot}}) $ and find a set $ \mathcal{B} $ of pairs of magnetic fluxes $ \{(r,B),(r',B')\} $ defined as
\begin{align}
    \mathcal{B} = \big\{
        \{(r,B),(r',B')\} \mid
        (r,B), (r',B') \in \mathcal{S} \, , \
        r+r' \equiv r_{\mathrm{tot}} \bmod{\mathbb{Z}^{\mathfrak{r}_G}} \, , \
        B+B' = B_{\mathrm{tot}}
    \big\} \, .
\end{align}
Then, choosing three different elements of $ \mathcal{B} $ determines the ansatz \eqref{eq:ansatz} of the bilinear relations.

To find the coefficients $ \Lambda_i $ of the ansatz, we introduce the following assumption on the partition function $ \hat{Z} $. We assume that the partition function written in the form \eqref{eq:gv} has a well-defined power series expansion in $ e^{-d \cdot t} $. This is because the mass of a BPS particle in the 5d field theory is given by $ d \cdot t $ at a generic point of the Coulomb branch, and this quantity is \emph{positive}. More precisely, the K\"ahler parameters $ t $ can be expressed as $ t = C\cdot \phi + A \cdot m $, where $ C $ and $ A $ are constant $ (\mathfrak{r}_G+\mathfrak{r}_F)\times \mathfrak{r}_G $ and $ (\mathfrak{r}_G+\mathfrak{r}_F) \times \mathfrak{r}_F $ matrices, respectively. The components of $ t $ give a convenient set of coordinates on the extended Coulomb branch, and the mass of any BPS particle can be expressed as a non-negative linear combination $ d \cdot t $ of these components. The UV completeness of the 5d field theory requires the existence of a non-trivial physical Coulomb branch, namely a chamber in which all K\"ahler parameters are non-negative, $ t_a \geq 0 $ for all $ a=1,\cdots,\mathfrak{r}_G+\mathfrak{r}_F $ in the $ m\to 0 $ limit \cite{Jefferson:2017ahm, Jefferson:2018irk}.\footnote{The physical Coulomb branch also requires the tensions of monopole strings to be non-negative. This condition reads $ \partial \mathcal{F} / \partial \phi \geq 0 $.} Inside this chamber, the partition function admits an expansion in positive powers of $ e^{-d\cdot t} $. We can therefore take the limit $ t\to \infty $ by sending the dynamical Coulomb moduli $ C \cdot \phi $ to infinity, while keeping the mass parameters $ m $ fixed. In this limit, only the prepotential part $ \mathcal{E} $ in \eqref{eq:gv} contributes to the bilinear relations, since the other terms are suppressed by positive powers of $ e^{-t} $. On the other hand, the coefficients $ \Lambda_i $ are not affected by this limit, since they do not depend on the dynamical parameters $ \phi $ of the theory. We require that the ansatz \eqref{eq:ansatz} be consistent in this limit, which implies that the leading order in the $ e^{-t} $ expansion of the ansatz vanishes at each power of $ s $ separately. This gives strong constraints on the coefficients $ \Lambda_i $. For 5d gauge theories, further constraints on the ans\"atze can be imposed by considering the $ m_0 \to \infty $ limit. In this limit, both the prepotential $ \mathcal{E} $ and the perturbative partition function $ Z_{\mathrm{pert}} $ contribute to the bilinear relations. Again, one needs to impose that each order of $ s $ in the ansatz cancels separately for the chosen coefficients $ \Lambda_i $.

However, in some cases, the leading order of the bilinear relations in the $ t \to \infty $ and $ m_0 \to \infty $ limits is insufficient to determine whether the ansatz is fully consistent, even though the coefficients $ \Lambda_i $ can be determined. A bilinear relation that is consistent at leading order may become inconsistent after the full partition function is included in the tau-functions. Two pieces of information about the 5d field theory are useful for detecting such inconsistent bilinear relations. The first is the subleading order of the partition function and tau-functions in the $ e^{-t} $ expansion, which encodes the BPS states carrying the non-zero minimal gauge charges. These data can be determined for general 5d gauge theories using blowup equations \cite{Kim:2020hhh}. The second is the symmetry of the 5d theory. The full partition function $ \hat{Z} $ is invariant under the Weyl reflection $ w\in W $, where $ W $ is the Weyl group of the global symmetry of the given 5d field theory. Here, one needs to be careful about how a Weyl reflection acts on the parameters of the partition function, since the Weyl reflection can act not only on the mass parameters of the theory, but also on the Coulomb branch parameters $ \phi $. Nevertheless, there is a suitable reparametrization $ (\phi, m) \mapsto (\varphi, \mu) $ such that $ \varphi $ is the \emph{invariant Coulomb branch parameter}, that is, invariant under the Weyl reflections \cite{Hayashi:2019jvx}. In this parametrization, the Weyl reflection of the full partition function $ \hat{Z} = \hat{Z}(\varphi,\mu;\epsilon) $ is given by $ w(\hat{Z}(\varphi,\mu;\epsilon)) = \hat{Z}(\varphi, w(\mu);\epsilon) = \hat{Z}(\varphi,\mu;\epsilon) $. If we denote by $ r $ and $ B $ the fractional part of the magnetic flux and the background magnetic fluxes associated with $ \varphi $ and $ \mu $, then the tau-function defined from the flux $ (r,B) $ transforms under the Weyl reflection $ w \in W $ as
\begin{align}
    w(\tauftn{r}{B}) = \tauftn{r}{w(B)} \, .
\end{align}
Thus, a Weyl reflection maps a bilinear relation to another bilinear relation. Consequently, an ansatz for a bilinear relation can be ruled out if any member of its Weyl orbit is inconsistent. If a gauge theory has symmetry enhancement due to non-perturbative instanton effects, the Weyl reflections of the enhanced symmetry contain non-trivial information beyond the perturbative data of the gauge theory.

This paper focuses on 5d $ \mathrm{SU}(2) + N_f\mathbf{F} $ theories, which are connected to discrete Painlev\'e systems. From \eqref{eq:F-su2} and \eqref{eq:su2-pert}, the unrefined effective prepotential and perturbative partition function that we use are
\begin{align}
    \mathcal{E} &= -\frac{1}{\epsilon^2} \left( \frac{8-N_f}{6}\phi^3 + m_0\phi^2  - \frac{1}{2} \sum_{l=1}^{N_f} m_l^2 \phi  - \frac{\epsilon^2}{12} (2-N_f) \phi \right) \, , \label{eq:su2F-unref-pre} \\
    \hat{Z}_{\mathrm{pert}} &= \prod_{i,j=0}^\infty \frac{\prod_{l=1}^{N_f} (1 + e^{-(\phi+m_l)} q^{i+j+1}) (1+ e^{-(\phi-m_l)} q^{i+j+1})}{(1- e^{-2\phi} q^{i+j+1})^2} \, , \label{eq:su2F-unref-pert}
\end{align}
where $ q=e^{-\epsilon} $. The plus sign in the numerator of $ \hat{Z}_{\mathrm{pert}} $ is the effect of the replacement $ (-1)^F \to (-1)^{2J_R} $ in the hypermultiplet contributions. The perturbative partition function can be expanded in positive powers of $ e^{-\phi} $ as $ \hat{Z}_{\mathrm{pert}} = 1 + \mathcal{O}(e^{-\phi}) $, which is consistent with our expectation. We further assume that the $ k $-instanton partition function $ Z_k $ has the same property:
\begin{align}\label{eq:inst-assume}
    \hat{Z}_k(\phi,m_f; \epsilon) \sim \mathcal{O}(e^{-\phi}) \, ,
\end{align}
for all $ k $. Note that the instanton partition functions from ADHM computations sometimes contain $ \mathcal{O}(1) $ terms in the $ e^{-\phi} $ expansion. However, such terms can be interpreted as extra factors that decouple from the 5d field theory, and we subtract them so that the partition functions are free of such contributions \cite{Hwang:2014uwa, Kim:2020hhh}. The quantization of magnetic fluxes can be determined from the perturbative partition function and the leading order of the $ e^{-\phi} $ expansion of the partition function \eqref{eq:1inst-leading} as
\begin{align}\label{eq:Nf-quantization}
    2r \in \mathbb{Z} \, , \quad
    r + b_i \in \mathbb{Z} + \frac{1}{2} \, , \quad
    r + b_0 + b_i - \frac{1}{2} \sum_{l=1}^{N_f} b_l \in \mathbb{Z} + \frac{1}{2} \, , \quad (1 \leq i \leq N_f) \, .
\end{align}
We bootstrap bilinear relations of elliptic and $ q $-Painlev\'e equations from the prepotential \eqref{eq:su2F-unref-pre}, the 1-loop perturbative partition function \eqref{eq:su2F-unref-pert} and the property \eqref{eq:inst-assume} of the partition function. Requiring the consistency conditions for the ansatz \eqref{eq:ansatz} in the limits $ \phi \to \infty $ and $ m_0 \to \infty $ is enough to determine the coefficients $ \Lambda_i $ in the bilinear relations. However, some of the determined ans\"atze turn out to be inconsistent if we include the instanton corrections. Such incorrect ans\"atze can be ruled out by using either the leading order of the 1-instanton partition function \eqref{eq:1inst-leading} or the enhanced global symmetry of the $ \mathrm{SU}(2)+N_f \mathbf{F} $ theory.

\subsection{Constraint from modularity}\label{sec:topological-modularity}

From the viewpoint of M-theory, a UV-consistent 5d $\mathcal{N}=1$ theory can be geometrically engineered by compactifying M-theory on a non-compact Calabi-Yau threefold $X$ \cite{Witten:1996qb,Intriligator:1997pq}, and the partition function is identified with the refined topological string partition function as \eqref{eq:topexpandrefine}. In this subsection, we argue the magnetic flux condition \eqref{eq:total-flux-cond-B} from the topological B-model formulation.

The topological string partition function is a wave function obtained by quantizing the symplectic space of B-model periods, and a change of symplectic frame acts as a change of polarization \cite{Bershadsky:1993ta,Bershadsky:1993cx,Witten:1993ed,Gunaydin:2006bz}. In a real polarization, the genus-$g$ amplitudes with $g>1$ are therefore weight-zero quasi-modular objects rather than ordinary modular functions \cite{Aganagic:2006wq,Huang:2006hq,Grimm:2007tm}. This modular property provides a strict constraint on the blowup equations \cite{Huang:2017mis}. By performing an analysis similar to that in \cite{Huang:2017mis}, we find that it also constrains the bilinear relations of tau-functions used in our bootstrap procedure. In the following, we consider the leading order in the $\epsilon$ expansion in \eqref{eq:topexpandunrefine}.

Let $\phi$ and $m$ be the collections of Coulomb parameters $\phi=(\phi_1,\cdots,\phi_{\mathfrak{r}_G})$ and mass parameters $m=(m_0,m_1,\cdots,m_{\mathfrak{r}_F})$, respectively, and define their dual parameters by
\begin{align}
    \phi_{D,i}=\frac{\partial \hat F_0}{\partial\phi_i},
    \qquad
    m_{D,f}=\frac{\partial  \hat F_0}{\partial m_f}.
\end{align}
Here, $ \hat F_0 $ is the genus-zero free energy defined in \eqref{eq:topexpandunrefine} for the hatted partition function $\hat{Z}$. It is useful to organize the second order derivatives of $F_0$ as
\begin{align}\label{eq:tauvw}
    \tau_{ij}
    &=
    \frac{\partial^2\hat F_0}{\partial\phi_i\partial\phi_j},
    &
    v_{if}
    &=
    \frac{\partial^2\hat F_0}{\partial\phi_i\partial m_f},
    &
    \omega_{fg}
    &=
    \frac{\partial^2\hat F_0}{\partial m_f\partial m_g}.
\end{align}
Here, $\tau_{ij}$ is the effective coupling matrix, while $v_{if}$ and $\omega_{fg}$ describe the mixing of the dynamical and flavor parameters. We will drop the indices and use bold symbols $\boldsymbol{\tau}$, $\boldsymbol v$ and $\boldsymbol\omega$ when no confusion arises. In the B-model, $\phi$ and $\phi_D$ form the dynamical periods and $m$ are the constant periods. Consider a monodromy transformation acting on the periods as\footnote{In the standard normalization, conventional factors of $2\pi\mathrm{i}$ should accompany the periods and the associated modular and elliptic variables. We suppress these factors throughout this subsection, with the understanding that they are absorbed consistently into the definitions of these quantities.}
\begin{align}\label{eq:modulartransformation}
    \begin{pmatrix}
        \phi_D'\\
        \phi'
    \end{pmatrix}
    &=
    M
    \begin{pmatrix}
        \phi_D\\
        \phi
    \end{pmatrix}, \quad
    M =
    \begin{pmatrix}
        \mathsf{A} & \mathsf{B}\\
        \mathsf{C} & \mathsf{D}
    \end{pmatrix}
    \in\Gamma\subset\mathrm{Sp}(2\mathfrak{r}_G,\mathbb{Z}), \quad
    \text{and}\quad m'=m.
\end{align}
The symplectic condition $M^T\Omega M=\Omega$, with
\begin{align}
    \Omega
    =
    \begin{pmatrix}
        0 & \mathbf{1}_{\mathfrak{r}_G}\\
        -\mathbf{1}_{\mathfrak{r}_G} & 0
    \end{pmatrix},
\end{align}
implies
\begin{align}\label{eq:Sprelations}
    \mathsf{A}^T\mathsf{C}
    &=
    \mathsf{C}^T\mathsf{A},
    &
    \mathsf{B}^T\mathsf{D}
    &=
    \mathsf{D}^T\mathsf{B},
    &
    \mathsf{A}^T\mathsf{D}
    -
    \mathsf{C}^T\mathsf{B}
    &=
    \mathbf{1}_{\mathfrak{r}_G}.
\end{align}

The modular transformations of the second-order derivatives of $F_0$ can be derived from the monodromy transformations \cite{deWit:1996gjy}. From the differentials
\begin{align}\label{eq:dphi}
    d\phi_D'
    &=
    (\mathsf{A}\boldsymbol\tau+\mathsf{B})d\phi+\mathsf{A}\boldsymbol v\,dm,
    &
    d\phi'
    &=
    (\mathsf{C}\boldsymbol\tau+\mathsf{D})d\phi+\mathsf{C}\boldsymbol v\,dm.
\end{align}
we obtain, at fixed $m'$, 
\begin{align}
    \boldsymbol\tau'=\frac{\partial \phi_D'}{\partial \phi'}=(\mathsf{A}\boldsymbol \tau+\mathsf{B})(\mathsf{C}\boldsymbol\tau+\mathsf{D})^{-1},
\end{align}
which is precisely the modular transformation of the effective coupling $\boldsymbol \tau $. Next, we derive the modular transformation for $\boldsymbol v$ and $\boldsymbol \omega$. From the symplectic condition, we have
\begin{align}\label{eq:symplecticrelations}
   \begin{pmatrix}
        \phi_D'^T\ 
        \phi'^T
    \end{pmatrix}\Omega \begin{pmatrix}
        d\phi_D'\\
       d \phi'
    \end{pmatrix}=\begin{pmatrix}
        \phi_D^T\ 
        \phi^T
    \end{pmatrix}\Omega \begin{pmatrix}
        d\phi_D\\
       d \phi
    \end{pmatrix}.
\end{align}
Notice that by requiring that there is no integration constant involving $m$, the integration of \eqref{eq:modulartransformation} gives \cite[eq. (9)]{deWit:1996gjy}
\begin{align}\label{eq:Fharrelations}
    \hat F'_0=\hat F_0-\frac{1}{2}\phi^T \phi_{D}+\frac{1}{2}\phi'{}^T\phi_{D}' \, ,
\end{align}
where $\hat F'_0$ and $\hat F_0$ represent the genus zero free energies $\hat F_0'(\phi',m')$ and $\hat F_0(\phi,m)$ after and before the modular transformation respectively. Utilizing \eqref{eq:symplecticrelations}, the differential of \eqref{eq:Fharrelations} can be expressed as
\begin{align}
    d\hat F'=d\hat F+\phi_D'{}^Td\phi'-\phi_D^Td\phi=\phi_D'{}^Td\phi'+m_D^T dm=\phi_D'{}^Td\phi'+m_D^T dm' \, ,
\end{align}
where $m_D=\partial \hat{F_0}/\partial m$. Then, at fixed $\phi'$, we have
\begin{align}
    \frac{\partial \hat{F}_0'}{\partial m'}=m_D'=m_D \, .
\end{align}
This implies the equality for $dm_D'=dm_D$ which gives
\begin{align}
    \boldsymbol v'{}^Td\phi'+\boldsymbol\omega'dm'=\boldsymbol v^Td\phi+\boldsymbol\omega dm \, .
\end{align}
At fixed $m'$, we obtain
\begin{align}
    \boldsymbol v'=(\mathsf{C}\boldsymbol\tau+\mathsf{D})^{-T}\boldsymbol v \, .
\end{align}
Moreover, \eqref{eq:dphi} at fixed $ \phi' $ implies
\begin{align}
    \boldsymbol\omega' &=\boldsymbol\omega- \boldsymbol v^T(\mathsf{C}\boldsymbol\tau+\mathsf{D})^{-1}\mathsf{C}\boldsymbol v.
\end{align}
After obtaining these transformation rules, we apply them to the tau-functions to see whether a covariant form of the bilinear relations provides additional constraints on the ansatz. Utilizing the notation \eqref{eq:tauvw}, the genus expansion of the tau-function can be expressed as
\begin{align}\label{eq:Zinvstau}
    \hat Z^{-1}\tauftn{r}{B}=e^{\frac{B^Tm_D}{\epsilon}+\frac{1}{2}B^T\boldsymbol\omega B}
    \left(\Theta_r(\boldsymbol\tau,z)+\mathcal{O}(\epsilon)\right),
\end{align}
where $\hat{Z}$ is the full self-dual Nekrasov partition function and $\Theta_r(\boldsymbol\tau,z)$ is the Riemann theta function
\begin{align}
    \Theta_r(\boldsymbol\tau,z)
    =
    \sum_{n\in\mathbb{Z}^{\mathfrak{r}_G}+r}
    \exp\left(
        \frac{1}{2}n^T\boldsymbol\tau n+n^Tz
    \right), \quad z=\log s+\frac{\phi_D}{\epsilon}+ \boldsymbol vB.
\end{align}
Under a monodromy transformation, $\boldsymbol\tau$ transforms as the period
matrix, while $\phi_D'=\mathsf A\phi_D+\mathsf B\phi,\boldsymbol v'=(\mathsf C\boldsymbol\tau+\mathsf D)^{-T}\boldsymbol v$. We choose the transformation of the fugacity $s$ such that the complete theta argument $z$ transforms as $z'=(\mathsf{C}\boldsymbol\tau+\mathsf{D})^{-T}z$. Thus, we can identify the monodromy transformation as a modular transformation for the Riemann theta function appearing in \eqref{eq:Zinvstau}. At leading order in $\epsilon$ and up to a phase factor, the tau-function \eqref{eq:Zinvstau} then transforms as\footnote{Here we impose another condition that the characteristic of the Riemann theta function appearing in \eqref{eq:Zinvstau} is invariant under modular transformation. Together with the flux quantization condition on $r$, this requirement imposes constraints on the transformation matrix $M$, restricting it to a subgroup $\Gamma$ as written in \eqref{eq:modulartransformation}.}
\begin{align}\label{eq:taumonodromy}
    \hat Z^{-1}\tauftn{r}{B} \
    \xrightarrow{\mathrm{monodromy}} \
    &\det(\mathsf{C}\boldsymbol\tau+\mathsf{D})^{1/2}
    \exp\Bigg[
            \frac{1}{2}z^T(\mathsf{C}\boldsymbol\tau+\mathsf{D})^{-1}\mathsf{C}z
            -\frac{1}{2}B^T\boldsymbol v^T(\mathsf{C}\boldsymbol\tau+\mathsf{D})^{-1}
            \mathsf{C}\boldsymbol vB
    \Bigg] \nonumber \\
    &\qquad \cdot \hat Z^{-1}\tauftn{r}{B} \left(1+\mathcal{O}(\epsilon)\right).
\end{align}
Here the determinant and the first term in the exponential are obtained from the modular transformation for a standard Riemann theta function. Using $z=\log s+\phi_D/\epsilon+\boldsymbol vB$ and the symmetry of $(\mathsf{C}\boldsymbol\tau+\mathsf{D})^{-1}\mathsf{C}$, the exponent in \eqref{eq:taumonodromy} becomes
\begin{align}
    &
    \frac{1}{2}z^T(\mathsf{C}\boldsymbol\tau+\mathsf{D})^{-1}\mathsf{C}z
    -\frac{1}{2}B^T\boldsymbol v^T(\mathsf{C}\boldsymbol\tau+\mathsf{D})^{-1}
    \mathsf{C}\boldsymbol vB \\
    &=
    \frac{1}{2}
    \left(\log s+\frac{\phi_D}{\epsilon}\right)^T
    (\mathsf{C}\boldsymbol\tau+\mathsf{D})^{-1}\mathsf{C}
    \left(\log s+\frac{\phi_D}{\epsilon}\right)
    +
    \left(\log s+\frac{\phi_D}{\epsilon}\right)^T
    (\mathsf{C}\boldsymbol\tau+\mathsf{D})^{-1}\mathsf{C}\boldsymbol vB . \nonumber
\end{align}
In particular, the terms quadratic in $B$ cancel. It follows that the product of two tau-functions transforms as
\begin{align}
\hat Z^{-2}\tauftn{r}{B}\tauftn{r'}{B'}
\quad\xrightarrow{\mathrm{monodromy}}\quad
&
\det(\mathsf{C}\boldsymbol \tau+\mathsf{D})\times
\hat Z^{-2}\tauftn{r}{B}\tauftn{r'}{B'}
\left(1+\mathcal{O}(\epsilon)\right)
\nonumber\\
&\times
\exp\Bigg[
\left(\log s+\frac{\phi_D}{\epsilon}\right)^T
(\mathsf{C}\boldsymbol \tau+\mathsf{D})^{-1}\mathsf{C}
\left(\log s+\frac{\phi_D}{\epsilon}\right)
\nonumber\\
&\hspace{12mm}
+
\left(\log s+\frac{\phi_D}{\epsilon}\right)^T
(\mathsf{C}\boldsymbol \tau+\mathsf{D})^{-1}\mathsf{C}\boldsymbol v(B+B')
\Bigg].
\end{align}
Therefore, for the bilinear relation to be consistent, the exponential factor should be the same for all terms, which requires
\begin{align}
B_1+B_1'=B_2+B_2'=B_3+B_3'=B_{\mathrm{tot}}.
\end{align} 
This is precisely the total flux condition \eqref{eq:total-flux-cond-B} imposed on the bilinear ansatz \eqref{eq:ansatz}.

\subsection{Instructive examples} \label{sec:easy-example}

In this subsection, we present illustrative examples of bootstrapping bilinear relations from 5d gauge theory in detail.

\subsubsection{Bilinear relations from pure \texorpdfstring{$ \mathrm{SU}(2) $}{SU(2)} gauge theories} \label{sec:Nf0-bilinear}

As the simplest example, let us consider the 5d pure $ \mathrm{SU}(2) $ gauge theories with discrete theta angles $ \theta=0 $ and $ \pi $. These theories are perturbatively indistinguishable, and the unrefined effective prepotential and 1-loop partition function are therefore given by
\begin{align}
    \mathcal{E}(\phi,m_0,\epsilon) &= -\frac{8\phi^3 + 6m_0 \phi^2 - \epsilon^2\phi}{6\epsilon^2} \, , \quad
    Z_{\mathrm{pert}}(\phi, \epsilon) = \prod_{i,j=0}^\infty \frac{1}{(1-a^2 q^{i+j+1})^2} \, , \label{eq:pure-su2-unref}
\end{align}
where $ a=e^{-\phi} $ and $ q=e^{-\epsilon} $. However, they are non-perturbatively different, and this is reflected in distinct choices of magnetic flux quantization on $ \mathbb{P}^1 $ \cite{Kim:2020hhh}. Depending on the discrete theta angle, the possible quantizations are 
\begin{align}
    \begin{aligned}\label{eq:su2-quantization}
        \theta = 0 \quad &: \quad r=0 , \ B \in \mathbb{Z} \quad \text{ or } \quad r=\frac{1}{2}, \ B \in \mathbb{Z} \, , \\
        \theta = \pi \quad &: \quad r = 0 , \ B \in \mathbb{Z}+\frac{1}{2} \quad \text{ or } \quad r = \frac{1}{2} , \ B \in \mathbb{Z} \, ,
    \end{aligned}
\end{align}
where $ B $ is the background magnetic flux for the $ \mathrm{U}(1) $ instanton symmetry. For a given magnetic flux $ (r, B) $ satisfying the quantization condition \eqref{eq:su2-quantization}, we define the tau-function as
\begin{align}
    \tauftn{r}{B}(s, \phi, m_0, \epsilon)
    = \sum_{n \in \mathbb{Z}+r} s^n \hat{Z}(\phi+n\epsilon, m_0+B \epsilon; \epsilon) \, .
\end{align}

We now illustrate the explicit procedure for the bootstrap program in the $ \theta=0 $ case. To be concrete, let us consider the ansatz \eqref{eq:ansatz} with total flux $ \Phi=(0;0) $. Among the tuples of magnetic fluxes, we first focus on the tau-functions defined from the magnetic fluxes
\begin{align}\label{eq:su2-correct-flux-eg}
    (r_1,B_1) = (0, 1) \, , \quad
    (r_2,B_2) = (0, 0) \, , \quad
    (r_3,B_3) = (1/2, 0) \, ,
\end{align}
where the other three tau-functions $ \tauftn{r_i'}{B_i'} $ are automatically determined from the total flux. Using the effective prepotential \eqref{eq:pure-su2-unref}, the three terms in the bilinear relation are
\begin{align}
    \begin{aligned}\label{eq:su2-correct-prep}
        \tauftn{r_1}{B_1} \tauftn{r_1'}{B_1'} &= \sum_{N \in \mathbb{Z}} f(N) s^N \sum_{n\in \mathbb{Z}} \mathfrak{q}^{2n(2n-N)} a^{4n(1+2n-2N)} q^{2n(1+2n-2N)} (1+\mathcal{O}(\mathfrak{q},a)) \, , \\
        \tauftn{r_2}{B_2} \tauftn{r_2'}{B_2'} &= \sum_{N \in \mathbb{Z}} f(N) s^N \sum_{n\in \mathbb{Z}} \mathfrak{q}^{2n(2n-N)} a^{8n(n-N)+2N} q^{N(4n^2-4nN+N)} (1+\mathcal{O}(\mathfrak{q},a)) \, , \\
        \tauftn{r_3}{B_3} \tauftn{r_3'}{B_3'} &= \sum_{N \in \mathbb{Z}} f(N) s^N \sum_{n\in \mathbb{Z}} \mathfrak{q}^{\frac{1}{2}(1+2n)(1+2n-2N)} a^{8n(1+n-N)-2N+2} \\
        & \hspace{15em} \cdot q^{N(4n(1+n-N)-N+1)} (1+\mathcal{O}(\mathfrak{q},a)) \, ,
    \end{aligned}
\end{align}
where $ f(N) $ is the common factor for the three terms, which does not depend on the summation index $ n $, and the $ \mathcal{O}(\mathfrak{q},a) $ terms are the contributions from the 1-loop and instanton partition functions. For the bilinear relation to be satisfied, the linear combinations of \eqref{eq:su2-correct-prep} at each power of $ s $ must vanish separately. Under the assumptions on the behavior of the instanton partition functions \eqref{eq:inst-assume}, the $ s^0 $ and $ s^{1} $ parts of the ansatz in the $ \phi \to \infty $ limit are given by
\begin{align}
    \begin{aligned}
        s^0 \ &: \ \left(1 + \mathcal{O}(a)\right) \Lambda_1 + \left(1 + \mathcal{O}(a)\right) \Lambda_2 + \left(\mathcal{O}(a^2)\right) \Lambda_3 = 0 \, , \\
        s^1 \ &: \ \left(1 + \mathcal{O}(a)\right) \Lambda_1 + \left(\mathcal{O}(a^2)\right) \Lambda_2 + \left(\frac{1}{\sqrt{\mathfrak{q}}} + \mathcal{O}(a) \right) \Lambda_3 = 0 \, .
    \end{aligned}
\end{align}
This determines the coefficients $ \Lambda_i $ in the bilinear relation as
\begin{align}
    \Lambda_1 = -1 \, , \quad
    \Lambda_2 = 1 \, , \quad
    \Lambda_3 = \sqrt{\mathfrak{q}} \, .
\end{align}
They are compatible with the other powers of $ s $ in the $ \phi\to \infty $ limit. Using the 1-loop partition function \eqref{eq:pure-su2-unref}, one can also verify that these coefficients are compatible with the limit $ m_0 \to \infty $.

Instead of \eqref{eq:su2-correct-flux-eg}, we next construct the ansatz from the magnetic fluxes
\begin{align}\label{eq:su2-wrongflux}
    (r_1,B_1) = (0, 2) \, , \quad
    (r_2,B_2) = (0, 1) \, , \quad
    (r_3,B_3) = (0, 0) \, ,
\end{align}
with total flux $ (0,0) $. Each term in the bilinear relation is then
\begin{align}\label{eq:su2-wrong-bilinear}
        \tauftn{r_1}{B_1} \tauftn{r_1'}{B_1'} &= \sum_{N \in \mathbb{Z}} f(N) s^N \sum_{n\in \mathbb{Z}} \mathfrak{q}^{2n(n-N)} a^{8n(n-N+1)} q^{4nN(n-N+1)} (1+\mathcal{O}(\mathfrak{q},a)) \, , \\
        \tauftn{r_2}{B_2} \tauftn{r_2'}{B_2'} &= \sum_{N \in \mathbb{Z}} f(N) s^N \sum_{n\in \mathbb{Z}} \mathfrak{q}^{2n(n-N)} a^{4n(2n-2N+1)+2N} q^{N(2n(2n-2N+1)+N)} (1+\mathcal{O}(\mathfrak{q},a)) \, , \nonumber \\
        \tauftn{r_3}{B_3} \tauftn{r_3'}{B_3'} &= \sum_{N \in \mathbb{Z}} f(N) s^N \sum_{n\in \mathbb{Z}} \mathfrak{q}^{2n(n-N)} a^{8n(n-N)+4N} q^{2N(2n(n-N)+N)} (1+\mathcal{O}(\mathfrak{q},a)) \, , \nonumber
\end{align}
where $ f(N) $ is the common factor of the three terms independent of the summation index $ n $. The $ s^0 $ and $ s^1 $ parts of the ansatz in the large Coulomb parameter $ \phi \to \infty $ limit are
\begin{align}
    \begin{aligned}\label{eq:su2-wrongeq}
        s^0 \ &: \ \left(1+\mathfrak{q}^2 + \mathcal{O}(a) \right) \Lambda_1 + \left( 1 + \mathcal{O}(a) \right) \Lambda_2 + \left( 1 + \mathcal{O}(a) \right) \Lambda_3 = 0 \\
        s^1 \ &: \ \left(1+\mathcal{O}(a) \right) \Lambda_1 + \left( \mathcal{O}(a^2) \right) \Lambda_2 + \left( \mathcal{O}(a^4) \right) \Lambda_3 = 0 \, .
    \end{aligned}
\end{align}
This implies $ \Lambda_1=0 $ and $ \Lambda_2+\Lambda_3=0 $. We also need to verify whether this choice is compatible with the perturbative partition function \eqref{eq:pure-su2-unref}. At order $ s^1 $, the $ n=0 $ and $ n=1 $ sectors of \eqref{eq:su2-wrong-bilinear} are the leading contributions in the $ m_0\to \infty $ limit. Since the contributions from the perturbative partition functions are the same for all three terms \eqref{eq:pure-su2-unref} in this case, the $ s^1 $ part of the ansatz in the $ m_0 \to \infty $ limit is given by
\begin{align}
    \left( 1+q^4 a^8 + \mathcal{O}(\mathfrak{q}) \right) \Lambda_1 + \left( qa^2 + q^3a^6 + \mathcal{O}(\mathfrak{q}) \right) \Lambda_2 + \left (2q^2 a^4 + \mathcal{O}(\mathfrak{q}) \right) \Lambda_3 = 0 \, .
\end{align}
This is incompatible with \eqref{eq:su2-wrongeq}, so we conclude that the magnetic fluxes \eqref{eq:su2-wrongflux} do not yield a consistent bilinear relation.

In this way, we scan the possible tuples of fluxes in the set
\begin{align}
    \begin{aligned}
        \mathcal{S} &= \left\{ (r,B) \, \bigg| \, r=0 \text{ or } \frac{1}{2}, \ B  \in \mathbb{Z} , \ |B| \leq 5 \right\} \, , \\
        \mathcal{B} &= \left\{ \{(r,B),(r',B')\} \mid (r,B), (r',B') \in \mathcal{S} , \ r+r'\in \mathbb{Z} , \ B+B' = 0 \, \right\} \, ,
    \end{aligned}
\end{align}
and find only four bilinear relations of the form \eqref{eq:ansatz}:
\begin{gather}\label{eq:su(2)0-bilinear}
    \begin{gathered}
    \overline{\tau_1} \underline{\tau_1} = \tau_1^2 + \sqrt{\mathfrak{q}} \tau_2^2 \, , \quad
    \overline{\tau_2} \underline{\tau_2} = \tau_2^2 + \sqrt{\mathfrak{q}} \tau_1^2 \, , \\
    \overline{\tau_1} \underline{\tau_1} = (1-\mathfrak{q}) \tau_1^2 + \sqrt{\mathfrak{q}} \overline{\tau_2}{\underline{\tau_2}} \, , \quad
    \overline{\tau_2} \underline{\tau_2} = (1-\mathfrak{q}) \tau_2^2 + \sqrt{\mathfrak{q}} \overline{\tau_1}{\underline{\tau_1}} \, .
    \end{gathered}
\end{gather}
Here, the two tau-functions $ \tau_1 $ and $ \tau_2 $ are defined as
\begin{align}
    \tau_1 = \tauftn{0}{0} \, , \quad
    \tau_2 = \tauftn{1/2}{0} \, ,
\end{align}
and we denote the forward and backward discrete time evolutions as
\begin{align}
    \overline{\tau_i} = \tau_i(s,\phi,m_0+\epsilon,\epsilon) \, , \quad
    \underline{\tau_i} = \tau_i(s,\phi,m_0-\epsilon,\epsilon) \, .
\end{align}
Although we derive these relations from the classical and perturbative data of the 5d $ \mathrm{SU}(2) $ gauge theory, they are compatible with the instanton partition function \eqref{eq:su2-adhm}. We have checked the validity of the bilinear relations up to 3-instanton order. Note that the bilinear relations in the second line are linear combinations of those in the first line, so it suffices to consider the first two bilinear relations. On the other hand, we do not obtain any bilinear relation from the total fluxes $ \Phi= (0;1) $, $ (1/2; 0) $ and $ (1/2;1) $. As discussed at the end of section~\ref{sec:discreteP-5d}, the two bilinear relations in the first line of \eqref{eq:su(2)0-bilinear} are equivalent to the $ q $-Painlev\'e equation given by
\begin{align}
    f\overline{f} = g^2 \, , \quad
    g\underline{g} = \frac{f(f+\mathfrak{q})}{f+1} \qquad \Leftrightarrow \qquad
    \overline{f}\underline{f} = \left( \frac{f+\mathfrak{q}}{f+1} \right)^2 \, ,
\end{align}
where $ f $ and $ g $ are given by
\begin{align}
    f = \sqrt{\mathfrak{q}} \frac{\tau_2^2}{\tau_1^2} \, , \quad
    g = q^{1/4} \sqrt{\mathfrak{q}} \frac{\tau_2 \overline{\tau_2}}{\tau_1 \overline{\tau_1}} \, .
\end{align}
This is the $ q\text{-}A_1^{(1)} $ equation \eqref{eq:qA1}.

The bilinear relations of the tau-functions from the $ \mathrm{SU}(2)_\pi $ theory can be treated in a similar way. By scanning the possible tuples of magnetic fluxes for the total fluxes $ \Phi = (0;0) $ and $ (0;1) $ up to the bound $ |B_i| \leq 5 $, we find two bilinear relations
\begin{align}
    \begin{aligned}\label{eq:su2-pi-bilinear}
        \tauftn{0}{-1/2} \tauftn{0}{3/2} &= \tauftn{0}{1/2}^2 + q^{1/4} \sqrt{\mathfrak{q}} \tauftn{1/2}{0} \tauftn{1/2}{1} \, , \\
        \tauftn{1/2}{1} \tauftn{1/2}{-1} &= \tauftn{1/2}{0}^2 + \sqrt{\mathfrak{q}} \tauftn{0}{-1/2} \tauftn{0}{1/2} \, .
    \end{aligned}
\end{align}
We do not obtain any additional bilinear relations from the total fluxes $ \Phi = (1/2,\pm 1/2) $. These bilinear relations describe another $ q $-Painlev\'e system $ q\text{-}\tilde{A}_1^{(1)} $. The discrete time evolution for this $ q $-Painlev\'e equation is defined as \cite{Bershtein:2018srt}
\begin{align}
    \overline{\tau}(s,\phi,m_0,\epsilon) = \tau(s,\phi+\tfrac{\epsilon}{2}, m_0+\tfrac{\epsilon}{2},\epsilon) \, , \quad
    \underline{\tau}(s,\phi,m_0,\epsilon) = \tau(s,\phi-\tfrac{\epsilon}{2}, m_0-\tfrac{\epsilon}{2},\epsilon) \, .
\end{align}
If we define $ \tau_1 = \underline{\tauftn{0}{1/2}} $, then the two bilinear relations in \eqref{eq:su2-pi-bilinear} can be written as a single bilinear relation
\begin{align}
    \overline{\overline{\tau_1}} \underline{\underline{\tau_1}} = \tau_1^2 + \sqrt{\mathfrak{q}} \overline{\tau_1} \underline{\tau_1} \, .
\end{align}
The variables $ (f,g) $ defined by
\begin{align}
    f = \sqrt{\mathfrak{q}} \frac{\overline{\tau_1} \underline{\tau_1}}{\tau_1^2} \, , \quad
    g = q^{1/4} \mathfrak{q} \frac{\overline{\overline{\tau_1}} \underline{\tau_1}}{\overline{\tau_1} \tau_1} \, ,
\end{align}
satisfy
\begin{align}
    f\overline{f} = g \, , \quad
    g\underline{g} = \mathfrak{q}^2 (1+f) \, ,
    \qquad \Leftrightarrow \qquad
    \overline{f}f^2\underline{f} = \mathfrak{q}^2 (1+f) \, ,
\end{align}
which is the $ q\text{-}\tilde{A}_1^{(1)} $ equation \eqref{eq:qA1t}, with redefinitions $ f\to 1/f $, $ g\to 1/g $.

\subsubsection{Bilinear relations from \texorpdfstring{$ \mathrm{SU}(2)+1\mathbf{F} $}{SU(2)+1F}}

As the next simplest example, let us consider the 5d $ \mathrm{SU}(2) $ gauge theory coupled to one fundamental hypermultiplet, which has $ E_2 = \mathrm{SU}(2) \times \mathrm{U}(1) $ global symmetry. We relate this theory to the $ q $-Painlev\'e equation $ q\text{-}E_2^{(1)} $. The unrefined prepotential and 1-loop contributions of the vector and hypermultiplet are given by the $ N_f=1 $ specialization of \eqref{eq:su2F-unref-pre} and \eqref{eq:su2F-unref-pert}. The small $ a=e^{-\phi} $ expansion of the unrefined partition function is given by
\begin{align}\label{eq:Nf1-invarZ}
    \hat{Z} = e^{\mathcal{E}} \cdot \left( 1 + \frac{q}{(1-q)^2} \left( M_1 + \frac{1}{M_1} + \mathfrak{q} \sqrt{M_1} \right) a + \mathcal{O}(a^2) \right) \, ,
\end{align}
where $ M_1=e^{-m_1} $ and the first order in $ a $ represents the three hypermultiplet contributions. The possible magnetic flux quantizations from \eqref{eq:Nf-quantization} are given by
\begin{align}
    \begin{aligned}\label{eq:Nf1-quantization}
        & r = 0 \, , \quad
        B \in \left( \mathbb{Z} + \frac{1}{4}, \, 2\mathbb{Z} + \frac{1}{2} \right) \quad \text{or} \quad
        B \in \left( \mathbb{Z} - \frac{1}{4}, \, 2\mathbb{Z} - \frac{1}{2} \right) \, , \\
        & r = \frac{1}{2} \, , \quad
        B \in \left( \mathbb{Z}, \, 2\mathbb{Z} \right) \quad \text{or} \quad
        B \in \left( \mathbb{Z} + \frac{1}{2}, \, 2\mathbb{Z} + 1 \right) \, .
    \end{aligned}
\end{align}
We then fix a set $ \mathcal{S} $ of candidate magnetic fluxes $ (r,B) $ satisfying \eqref{eq:Nf1-quantization} with a chosen bound $ |b_0|, |b_1| < b_{\mathrm{max}} = 3 $.

The bilinear relations of tau-functions can be bootstrapped as in section~\ref{sec:Nf0-bilinear}. For a given trial total flux $ \Phi $, we find the set $ \mathcal{B} $ of all possible tuples of magnetic fluxes $ \{(r_i,B_i), (r_i', B_i')\} $. By considering the prepotential and 1-loop contributions in the bilinear relations, we can completely fix the coefficients $ \Lambda_i $ of the ansatz. If we set a total flux $ \Phi=(0;0,0) $, we obtain bilinear relations of tau-functions
\begin{gather}
    \begin{gathered}\label{eq:Nf1-tau}
        \tau_1 = \tauftn{0}{-1/4,-1/2} \, , \quad
        \tau_2 = \tauftn{0}{1/4, 1/2} \, , \quad
        \tau_3 = \tauftn{1/2}{0,0} \, , 
    \end{gathered}
\end{gather}
given by
\begin{align}\label{eq:Nf1-bilinear1}
    \overline{\tau_1} \underline{\tau_2} = \tau_1 \tau_2 + \sqrt{\mathfrak{q}{M_1}} \tau_3^2 \, , \quad
    \overline{\tau_3} \underline{\tau_3} = \tau_3^2 + \sqrt{\mathfrak{q}} \tau_1 \tau_2 \, ,
\end{align}
as well as their linear combinations, where $ \overline{\tau_i} = \tau_i(m_0+\epsilon) $ and $ \underline{\tau_i} = \tau_i(m_0-\epsilon) $ represent the discrete time evolution. We have checked these bilinear relations using the instanton partition function of the $ \mathrm{SU}(2)+1\mathbf{F} $ theory up to 3-instanton order. Additionally, the tau-functions 
\begin{align}
    \tau_4 = \tauftn{1/2}{-1/2, -1} \, , \quad
    \tau_5 = \tauftn{1/2}{1/2, 1} \, ,
\end{align}
lead to the following candidate bilinear relations:
\begin{align}
    \begin{aligned}\label{eq:Nf1-wrong-bilinear}
        \tau_4 \tau_5 = \overline{\tau_4} \underline{\tau_5} + \sqrt{\mathfrak{q}} \left( M_1 - \frac{1}{M_1} \right) \overline{\tau_1} \underline{\tau_2} \, , \quad
        \tau_4 \tau_5 = \overline{\tau_4} \underline{\tau_5} + \sqrt{\mathfrak{q}} \left( M_1 - \frac{1}{M_1} \right) \tau_1 \tau_2 \, , \\
        \sqrt{\mathfrak{q}} (1-M_1^2) \underline{\tau_1} \overline{\tau_2} - \mathfrak{q}\sqrt{M_1} (1-\sqrt{M_1}) \tau_4 \tau_5 - M_1 (1-\mathfrak{q}M_1^{3/2}) \overline{\tau_4} \underline{\tau_5} = 0 \, .
    \end{aligned}
\end{align}
Each of these bilinear relations holds in the $ \phi \to \infty $ and $ m_0 \to \infty $ limits. However, these bilinear relations are mutually inconsistent, as the first two relations imply $ \tau_1\tau_2=\overline{\tau_1}\underline{\tau_2} $. Together with \eqref{eq:Nf1-bilinear1}, this would imply that $ \tau_3 $ vanishes identically for all parameters, which is false already at the level of the prepotential and perturbative parts. Indeed, the three bilinear relations in \eqref{eq:Nf1-wrong-bilinear} turn out to be inconsistent once the $ a^1 $ information in \eqref{eq:Nf1-invarZ} is included.

We can find more bilinear relations by changing the total flux. In the case of total flux $ \Phi=(0;1,0) $, we find one additional bilinear relation:
\begin{align}\label{eq:Nf1-bilinear2}
    \tau_1 \overline{\tau_2} = \overline{\tau_1} \tau_2 + q^{1/4} \sqrt{\mathfrak{q}M_1} \overline{\tau_3} \tau_3 \, .
\end{align}
Although it is possible to construct more bilinear relations by changing the total flux, the three bilinear relations \eqref{eq:Nf1-bilinear1} and \eqref{eq:Nf1-bilinear2} are already sufficient to describe the $ q $-Painlev\'e equation $ q\text{-}E_2^{(1)} $. This can be checked by defining
\begin{align}
    f = \sqrt{\mathfrak{q}} \frac{\tau_3^2}{\tau_1 \tau_2} \, , \quad
    g = q^{1/4} \sqrt{\mathfrak{q}} \frac{\overline{\tau_3} \tau_3}{\overline{\tau_1} \tau_2} \, .
\end{align}
It is straightforward to check that $ (f,g) $ satisfies
\begin{align}
    f\overline{f} = \frac{g^2}{1+\sqrt{M_1} g} \, , \quad
    g\underline{g} = \frac{f(f+\mathfrak{q})}{1+\sqrt{M_1}f} \, .
\end{align}
These equations form the $ q\text{-}E_2^{(1)} $ system \eqref{eq:qE2}.

\section{Applications}\label{sec:applications}

In this section, we bootstrap bilinear relations of discrete Painlev\'e systems associated with 5d $ \mathrm{SU}(2) $ gauge theories coupled to $ N_f $ fundamental hypermultiplets. These theories have enhanced global symmetry $E_{N_f+1}$ for $N_f\le 7$ and affine symmetry $E_8^{(1)}$ for $N_f=8$. Each bilinear relation that we find is tested using the instanton partition function discussed in section~\ref{sec:5d} and Appendix~\ref{app:adhm}. Alternatively, the bilinear relations are also verified using the BPS data with manifest enhanced global symmetry obtained in \cite{Huang:2013yta}.

\subsection{E-string and elliptic Painlev\'e equation}\label{sec:elliptic}

We first consider the elliptic Painlev\'e equation $ e\text{-}E_8^{(1)} $ and the 5d $ \mathrm{SU}(2) $ gauge theory coupled to 8 fundamental hypermultiplets. The elliptic Painlev\'e equation is the most general member of the hierarchy of discrete Painlev\'e equations. This 5d gauge theory is UV-completed by the 6d $ \mathcal{N}=(1,0) $ SCFT with $ E_8 $ symmetry, known as the E-string theory. We begin with the unrefined effective prepotential \eqref{eq:E-Estr} of the E-string theory:
\begin{align}\label{eq:E-Estr2}
    \mathcal{E}_{\text{E-string}} = -\frac{1}{\epsilon^2} \left( \frac{\mu_0}{2} \varphi^2 - \frac{1}{2} \sum_{l=1}^8 \mu_l^2 \varphi + \frac{\epsilon^2}{2} \varphi \right) \, .
\end{align}
The set $ \mathcal{S}_{\text{6d}} $ of candidate magnetic fluxes is given by
\begin{align}\label{eq:Estr-flux}
    \begin{aligned}
        \mathcal{S}_{\text{6d}} = \bigg\{ (r, B) \, &\bigg| \, r = \frac{1}{2} \, , B=(b_0, \alpha) \, , b_0 \in \mathbb{Z} , \, |b_0| \leq b_{\mathrm{max}}, \\
        &\qquad \qquad \, \langle \alpha, v \rangle \in \mathbb{Z} \text{ for } v \in \Delta(E_8) , \, \lVert \alpha \rVert \leq \alpha_{\mathrm{max}} \bigg\} ,
    \end{aligned}
\end{align}
where $ r $ is the fractional part of the dynamical magnetic flux associated with $ \varphi $, $ B=(b_0, \alpha) $ denotes the background magnetic fluxes associated with $ (\mu_0, \mu_{l=1,\cdots,8}) $, and $ \Delta(E_8) $ is the root system of the $ E_8 $ algebra, expressed in an orthonormal basis:
\begin{align}
    \Delta(E_8) = \{ 0, \pm e_i \pm e_j \mid i\neq j \} \cup \left\{ \frac{1}{2} \sum_{i=1}^8 (\pm e_i ) \,\bigg|\, \text{even number of }(-) \text{ signs} \right\} \, .
\end{align}
Here, $ e_i \in \mathbb{R}^8 $ denotes an element of the orthonormal basis with $ \langle e_i, e_j \rangle = \delta_{ij} $. The magnetic flux quantization in \eqref{eq:Estr-flux} can be derived either from the leading order of the 6d partition function on the tensor branch \cite{Huang:2017mis, Kim:2020hhh, Kim:2014dza}, or from \eqref{eq:Nf-quantization} together with the mapping \eqref{eq:map-6d5d}. For $ \alpha_{\mathrm{max}}=\sqrt{2} $, the possible values of $ \alpha $ also form the $ E_8 $ root system. Let us further restrict to $ b_{\mathrm{max}}=0 $. Then, if we construct an ansatz by choosing magnetic fluxes $ (r_i,B_i) $ and $ (r_i',B_i') $ in $ \mathcal{S}_{\text{6d}} $ satisfying the total flux condition \eqref{eq:total-flux-cond-B}, together with $ \alpha_i, \alpha_i' \neq 0 $, $ \mu_0 $ becomes the modular parameter of the bilinear relation in the $ \varphi\to \infty $ limit:
\begin{align}\label{eq:Estr-tau-pert}
    \tauftnd{r_i}{B_i} \tauftnd{r_i'}{B_i'} = \sum_{N\in \mathbb{Z}} f(N) s^N \left( \theta_{I(N)}\left(2\mu_0, z_i \right) + \mathcal{O}(e^{-\varphi}) \right) \, ,
\end{align}
where $ \tau_{\mathrm{6d}} $ is the tau-function defined in terms of the partition function of the 6d E-string theory, $ f(N) $ is a common factor for all $ i=1,2,3 $, $ \theta_{I(N)} $ is the Jacobi theta function defined in \eqref{eq:theta}, and
\begin{align}
    I(N) = \left\{ \begin{array}{ll}
            2 & \ (N=\text{even}) \\
            3 & \ (N=\text{odd}) \, ,
    \end{array}\right.
    \qquad
    z_i = \langle \alpha_i - \alpha_i', \mu \rangle \, ,
\end{align}
with the inner product $ \langle \cdot, \cdot \rangle $ between 8-dimensional vectors. We treat cases in which $ \alpha_i $ or $ \alpha_i' $ is zero separately. Since the bilinear relation must hold for each power of $ s $, the coefficients $ \Lambda_i $ are determined from \eqref{eq:Estr-tau-pert} as
\begin{align}
    \Lambda_i = \theta_1\Big(\mu_0, \frac{z_j+z_k}{2}\Big) \theta_1\Big(\mu_0, \frac{z_j-z_k}{2} \Big) \, , 
\end{align}
where $ (i,j,k)=(1,2,3) $ and its cyclic permutations. Consequently, the $ \varphi \to \infty $ limit of the bilinear relation reduces to the following identity for Jacobi theta functions:
\begin{align}
    \theta_1\Big(\mu_0, \frac{z_j+z_k}{2}\Big) \theta_1\Big(\mu_0, \frac{z_j-z_k}{2}\Big) \theta_I(2\mu_0, z_i) + \text{cyclic in }(i,j,k) = 0 \, ,
\end{align}
for $ I=2, 3 $.

Although the leading order of the ansatz for the bilinear relations in the $ \varphi\to \infty $ limit yields identities of Jacobi theta functions for every choice of magnetic fluxes in \eqref{eq:Estr-flux} with $ b_{\mathrm{max}}=0 $, $ \alpha_{\mathrm{max}}=\sqrt{2} $ and $ \alpha \neq 0 $, some candidate relations fail at higher orders in $ e^{-\varphi} $. We eliminate these bilinear relations using the $ E_8 $ Weyl group symmetry and 5d perturbative data. Let us denote by $ \Phi_{\mathrm{6d}} = (0; 0, \lambda) = (0; 0, \alpha_i + \alpha_i') $ the total magnetic flux of the bilinear relation constructed from the 6d E-string partition function. Since the full partition function $ \hat{Z} $ is invariant under the $ E_8 $ Weyl group action, the tau-function transforms under the Weyl group $ W(E_8) $ as
\begin{align}
    w\Big(\tauftnd{0}{0,\alpha}\Big) = \tauftnd{0}{0,w(\alpha)} \, , \quad w \in W(E_8) \, .
\end{align}
Consequently, $ w \in W(E_8) $ maps a bilinear relation to another bilinear relation. For this reason, it is enough to consider the possible values of $ \lambda $ up to Weyl transformations. Scanning all pairs with $ \alpha_i, \alpha_i' \in \Delta(E_8) $ yields five inequivalent choices
\begin{align}
    \lambda = 0, \, e_7+e_8 , \, 2e_7+2e_8 , \, e_6+e_7+2e_8 , \, 2e_8 \, .
\end{align}
However, for $ \lambda=2e_7+2e_8 $ and $ \lambda=e_6+e_7+2e_8 $, only one flux pair $ (\alpha_i, \alpha_i') $ satisfies $ \alpha_i+\alpha_i' = \lambda $. These cases therefore admit no three-term bilinear relations.

The remaining three cases admit non-trivial bilinear relations. There are 120, 28, and 7 possible pairs $ (\alpha_i, \alpha_i') $ of magnetic fluxes for $ \lambda=0 $, $ e_7+e_8 $, and $ 2e_8 $, respectively. We further constrain the magnetic fluxes using the 5d perturbative partition function \eqref{eq:su2F-unref-pert}. Under the reparametrization \eqref{eq:map-6d5d}, the unrefined prepotential \eqref{eq:E-Estr2} maps to \eqref{eq:su2F-unref-pre} with $ N_f=8 $, up to terms independent of the dynamical parameter $ \phi $. The magnetic fluxes that we are considering are also mapped under \eqref{eq:Estr-flux} as
\begin{align}\label{eq:8F-flux}
    \mathcal{S}_{\text{5d}} = \left\{ (0,B) \, \bigg| \, 
    B = (0, \alpha) , \ \langle \alpha, e_8 \rangle \in \mathbb{Z}+\frac{1}{2} \right\} \cup
    \left\{ \Big( \frac{1}{2},B \Big) \, \bigg| \, B=(0, \alpha) , \ \langle \alpha, e_8 \rangle \in \mathbb{Z}  \right\} .
\end{align}
Here, the first component of $ (r,B) \in \mathcal{S}_{\text{5d}} $ denotes the fractional part of the dynamical magnetic flux for $ \phi $, and $ \alpha $ is the background magnetic flux associated with the eight mass parameters $ m_{1,\cdots,8} $. Using the prepotential \eqref{eq:su2F-unref-pre} and the perturbative partition function \eqref{eq:su2F-unref-pert} of $ \mathrm{SU}(2)+8\mathbf{F} $, we test whether the ans\"atze for the bilinear relations with the total fluxes $ \lambda=0, e_7+e_8 $ and $ 2e_8 $ are consistent in the $ m_0\to \infty $ limit.

For $ \lambda=0 $ and $ \lambda=e_7+e_8 $, the $ m_0 \to \infty $ limit rules out some flux choices but does not exclude all of them. To show that all ans\"atze within these classes are inconsistent, we use the following lemma: two bilinear relations constructed from the magnetic fluxes $ \{(\alpha_i, \alpha_i')\} $ and $ \{(\beta_i, \beta_i')\} $ are connected to each other by a Weyl transformation if the Gram matrices $ \langle x_i, x_j \rangle $ and $\langle y_i, y_j \rangle $ are identical, where $ x_i=\alpha_i-\alpha_i' $ and $ y_i=\beta_i-\beta_i' $. In other words, there exists a Weyl group element $ w \in W(E_8) $ such that $ w(\alpha_i)=\beta_i $ and $ w(\alpha_i') = \beta_i' $ if $ \langle x_i, x_j \rangle = \langle y_i, y_j \rangle $ for all $ i $ and $ j $. The proof of this lemma is given in Appendix~\ref{app:lemma}.

This lemma implies that if the 5d perturbative data exclude one magnetic flux set, they also exclude every magnetic flux set with the same Gram matrix $ \langle x_i, x_j \rangle $, even when some of these sets appear consistent at the leading order as $ m_0 \to \infty $. In the case of $ \lambda = e_7+e_8 $, there are two inequivalent Gram matrices up to permutations of the magnetic fluxes:
\begin{align}
        \langle x_i, x_j \rangle =
        \begin{pmatrix}
            6 & 2 & 2 \\  2 & 6 & 2 \\ 2 & 2 & 6
        \end{pmatrix} ,
        \begin{pmatrix}
            6 & -2 & 2 \\  -2 & 6 & 2 \\ 2 & 2 & 6
        \end{pmatrix} .
\end{align}
For each Gram matrix, we find a set of magnetic fluxes that leads to an inconsistency in the ansatz in the $ m_0 \to \infty $ limit. Similarly, there are five inequivalent Gram matrices
\begin{align}
    \langle x_i, x_j \rangle = 
        \begin{pmatrix}
            8 & 4 & 4 \\ 4 & 8 & 4 \\ 4 & 4 & 8
        \end{pmatrix} ,
        \begin{pmatrix}
            8 & 0 & 4 \\ 0 & 8 & 4 \\ 4 & 4 & 8
        \end{pmatrix} ,
        \begin{pmatrix}
            8 & 0 & 0 \\ 0 & 8 & 4 \\ 0 & 4 & 8
        \end{pmatrix} ,
        \begin{pmatrix}
            8 & 0 & 0 \\ 0 & 8 & 0 \\ 0 & 0 & 8
        \end{pmatrix} ,
        \begin{pmatrix}
            8 & -4 & 4 \\ -4 & 8 & 4 \\ 4 & 4 & 8
        \end{pmatrix} ,
\end{align}
up to permutations of the magnetic fluxes, in the case of $ \lambda=0 $. We find a set of magnetic fluxes for each Gram matrix whose ansatz is inconsistent in the $ m_0 \to \infty $ limit. Thus, there are no consistent bilinear relations for the total fluxes $ \lambda=e_7+e_8 $ and $ \lambda=0 $.

On the other hand, all ans\"atze associated with $ \lambda=2e_8 $ are consistent with the $ m_0 \to \infty $ limit. There are 7 pairs of magnetic fluxes $ \mathcal{B} = \{(r_i,B_i), (r_i',B_i')\} $ in this class:
\begin{align}
    r_i = \frac{1}{2} \, , \ B_i = (0, e_i+e_8) \, , \quad
    r_i' = \frac{1}{2} \, , \ B_i' = (0, -e_i+e_8) \, , \quad
    (i = 1,2,\cdots,7) \, .
\end{align}
There is only one inequivalent Gram matrix, $ \langle x_i, x_j \rangle = 4\mathbb{I}_3 $, where $ \mathbb{I}_3 $ is the $ 3\times 3 $ identity matrix. Accordingly, no inconsistency arises for any ansatz in this class, and the bilinear relations take the form
\begin{align}\label{eq:elliptic-bilinear}
    \theta_1(\mu_0, \mu_j\pm \mu_k) \tauftnd{1/2}{0, e_i+e_8} \tauftnd{1/2}{0, -e_i+e_8} + (i,j,k)\text{-cyclic} = 0 \, ,
\end{align}
where we use the shorthand notation $ \theta_1(\mu_0, \mu_j\pm \mu_k) = \theta_1(\mu_0, \mu_j+\mu_k) \theta_1(\mu_0, \mu_j-\mu_k) $, and $ i,j,k=1,\cdots, 7 $ are distinct integers. We have checked this class of bilinear relations up to 3-instanton order in the 5d description, and also up to order $ e^{-3\varphi} $ in the $ e^{-\varphi} $ expansion using the elliptic genus of the E-string theory.

Finally, we consider the case $ \alpha=0 $, corresponding to one of the indices $ i,j,k $ being $ 8 $ in \eqref{eq:elliptic-bilinear}. In this case, requiring the total flux $ \lambda=2e_8 $, we must include the magnetic flux $ B=(0,2e_8) $ among the allowed magnetic fluxes, which satisfies the quantization condition \eqref{eq:Estr-flux}. The prepotential contribution to the ansatz yields the bilinear relations
\begin{align}\label{eq:elliptic-bilinear-affine}
    \theta_1(\mu_0, \tilde{\mu}_j \pm \tilde{\mu}_k) \tauftnd{1/2}{0, e_i+e_8} \tauftnd{1/2}{0, -e_i+e_8} + (i,j,k)\text{-cyclic} = 0 \, ,
\end{align}
where $ i,j,k=1,\cdots,8 $ are distinct integers, and $ \tilde{\mu}_i $ is defined by $ \tilde{\mu}_i=\mu_i + \delta_{i8} \epsilon $. These bilinear relations remain consistent after the inclusion of the 5d perturbative partition function. We have also checked these bilinear relations up to 3-instanton order in the 5d description, and up to order $ e^{-3\varphi} $ in the $ e^{-\varphi} $ expansion using the elliptic genus of the E-string theory. In fact, the bilinear relations containing the magnetic fluxes $ (\alpha,\alpha')=(0,2e_8) $ can be mapped to the other bilinear relations without this flux by an \emph{affine} Weyl reflection of the $ E_8^{(1)} $ algebra. We present this in Appendix~\ref{app:elliptic}.

We now express the bilinear relations in the 5d $ \mathrm{SU}(2)+8\mathbf{F} $ parametrization. We split the bilinear relations into two cases, based on the quantization of the first component of the total flux, one with $ \Phi = (0; 0,2e_8) $ and the other with $ \Phi = (1/2; 0, \rho+e_8) $, where $ \rho = -e_1 + \sum_{i=1}^8 e_i/2 $. For each case, the bilinear relations can be represented by
\begin{align}
    \begin{aligned}\label{eq:elliptic-bilinear-5d}
        &\Lambda_{ijk}^{(1)} \tauftn{1/2}{0, e_i+e_8} \tauftn{1/2}{0, -e_i+e_8} + (i,j,k)\text{-cyclic} = 0 \, , \\
        &\Lambda_{ijk}^{(2)} \tauftn{1/2}{0,\sigma_i e_i+e_8} \tauftn{0}{0,\rho- \sigma_i e_i} + (i,j,k)\text{-cyclic} = 0 \, ,
    \end{aligned}
\end{align}
where $ \tau $ denotes the tau-function defined in terms of the 5d $ \mathrm{SU}(2)+8\mathbf{F} $ partition function, $ i,j,k=1,\cdots,8 $ are distinct integers, and $ \sigma_i = (-1)^{\delta_{i1}+\delta_{i8}} $. The coefficients of the bilinear relations are determined as
\begin{align}
    \Lambda_{ijk}^{(1)} = \theta_1(2m_0, \tilde{m}_j \pm \tilde{m}_k) \, , \quad
    \Lambda_{ijk}^{(2)} = C_i^{(2)} \theta_1(2m_0, \tilde{m}_k-\tilde{m}_j) \theta_1(2m_0, \mathfrak{m}-\tilde{m}_j-\tilde{m}_k) \, ,
\end{align}
where we use the notation
\begin{gather}
    \begin{gathered}
        \tilde{m}_1 = -m_1 \, , \quad
        \tilde{m}_i = m_i \, , \ (2 \leq i \leq 7) \, , \quad
        \tilde{m}_8 = m_8 + \epsilon \, , \\
        C_i^{(2)} = e^{-\tilde{m}_i/2} \, , \quad
        \mathfrak{m} = -m_0 - m_1 - m_8 + \frac{1}{2}\sum_{l=1}^8 m_l \, .
    \end{gathered}
\end{gather}
The bilinear relations \eqref{eq:elliptic-bilinear-affine}, or \eqref{eq:elliptic-bilinear-5d} are equivalent to the elliptic Painlev\'e equation \cite{Kajiwara:2003, Kajiwara:2004, Masuda:2011, Kajiwara:2015}. For the $ q $-Painlev\'e equations discussed in section~\ref{sec:easy-example}, which correspond to pure $ \mathrm{SU}(2) $ gauge theories or to the theory with one fundamental hypermultiplet, the discrete time evolution is realized by shifting the instanton fugacity $ \mathfrak{q} $. In contrast, the time evolution of the elliptic Painlev\'e equation is given by a translation element of the affine Weyl group. The equivalence of the bilinear relations and the elliptic Painlev\'e equation, including the precise definition of the discrete time evolution, is discussed in Appendix~\ref{app:elliptic}.

\subsection{\texorpdfstring{$ \mathrm{SU}(2) + 7\mathbf{F} $}{SU(2) + 7F}} \label{sec:7F}

In this subsection, we consider the $ q $-Painlev\'e equation $ q\text{-}E_8^{(1)} $ and the 5d $ \mathrm{SU}(2) $ gauge theory coupled to 7 fundamental hypermultiplets, which has $ E_8 $ global symmetry. The unrefined effective prepotential and perturbative partition function are given in \eqref{eq:su2F-unref-pre} and \eqref{eq:su2F-unref-pert} with $ N_f=7 $. This theory can be obtained by decoupling one fundamental hypermultiplet from the $ \mathrm{SU}(2)+8\mathbf{F} $ theory. At the level of the partition function, this decoupling is achieved by taking the $ m_0 - m_8/2 \to \infty $ limit. As in the E-string theory, the reparametrization given by
\begin{align}
    \phi = \varphi + \mu_8 \, , \quad
    m_i = \mu_i \, , \quad (1 \leq i \leq 7) \, , \quad
    m_0 = -\frac{1}{2}\mu_8 \, ,
\end{align}
yields the $ E_8 $-invariant form of the prepotential:
\begin{align}
    \mathcal{E}_{\mathrm{inv}} = -\frac{1}{\epsilon^2} \left( \frac{1}{6}\varphi^3 - \frac{1}{2} \sum_{l=1}^8 \mu_l^2 \varphi + \frac{5\epsilon^2}{12}\varphi \right) \, .
\end{align}
The $ q $-Painlev\'e equation $ q\text{-}E_8^{(1)} $ and its bilinear relations can be deduced from the elliptic Painlev\'e equation and its bilinear relations. Since the $ \mathrm{SU}(2)+7\mathbf{F} $ theory has $ E_8 $ symmetry, we can use a method similar to that used for the E-string theory in section~\ref{sec:elliptic}. Using \eqref{eq:8F-flux}, let us consider the following set of magnetic fluxes:
\begin{align}
    \mathcal{S} = \left\{ (r,B) \, \bigg| \, r \equiv \frac{1}{2} + \langle \alpha, e_8 \rangle \bmod{\mathbb{Z}} , \ B = \Bigl(-\frac{1}{2} \langle \alpha, e_8 \rangle, \langle \alpha, e_i \rangle_{i=1}^7\Bigr) , \ (r;0,\alpha) \in \mathcal{S}_{\mathrm{5d}}^{8\mathbf{F}} \right\} ,
\end{align}
where $ \mathcal{S}_{\mathrm{5d}}^{8\mathbf{F}} $ is the set of magnetic fluxes for the $ \mathrm{SU}(2)+8\mathbf{F} $ theory. Here, $ r $ and $ B $ denote, respectively, the fractional part of the dynamical magnetic flux associated with $ \phi $, and the background magnetic flux associated with the mass parameters $ (m_0,m_1,\cdots,m_7) $.

As in the $ \mathrm{SU}(2)+8\mathbf{F} $ theory, it is enough to consider the bilinear relations up to $ E_8 $ Weyl transformations. We then find that the bilinear relations arising from the partition function of the $ \mathrm{SU}(2)+7\mathbf{F} $ theory are trigonometric versions of the bilinear relations of the elliptic Painlev\'e equation. For the choice of total flux $ \Phi = (0; -1,0,\cdots,0) $, the prepotential and perturbative partition function of the $ \mathrm{SU}(2)+7\mathbf{F} $ theory yield the bilinear relations
\begin{align}
    \begin{aligned}
        &\Lambda_{ijk}^{(1)} \tauftn{1/2}{-1/2,e_i} \tauftn{1/2}{-1/2,-e_i} + (i,j,k)\text{-cyclic} = 0 \, , \\
        &\Lambda_{0jk}^{(1)}\tauftn{1/2}{0,0} \tauftn{1/2}{-1,0} + \Lambda_{jk0}^{(1)} \tauftn{1/2}{-1/2,e_j} \tauftn{1/2}{-1/2,-e_j} + \Lambda_{k0j}^{(1)} \tauftn{1/2}{-1/2,e_k} \tauftn{1/2}{-1/2,-e_k} = 0  \, ,
    \end{aligned}
\end{align}
where $ i,j,k=1,\cdots,7 $ are distinct integers. The coefficients $ \Lambda_{ijk}^{(1)} $ are determined as
\begin{align}\label{eq:7F-bilinear1}
    \Lambda_{ijk}^{(1)} = \sinh\Big(\frac{\tilde{m}_j + \tilde{m}_k}{2}\Big) \sinh\Big(\frac{\tilde{m}_j - \tilde{m}_k}{2}\Big) \, ,
\end{align}
including the case in which one of the indices is $ 0 $, where
\begin{align}\label{eq:7F-tildem}
    \tilde{m}_0 = 2m_0 - \epsilon \, , \quad
    \tilde{m}_1 = -m_1 \, , \quad
    \tilde{m}_i = m_i \, , \quad (i>1) \, .
\end{align}
This is the trigonometric version of the first elliptic bilinear relation in \eqref{eq:elliptic-bilinear}. We have checked these bilinear relations up to 3-instanton order using the instanton partition function of the $ \mathrm{SU}(2)+7\mathbf{F} $ theory.

Another type of bilinear relations, corresponding to the second elliptic bilinear relations in \eqref{eq:elliptic-bilinear} can be obtained by considering the total flux $ \Phi = (\frac{1}{2}; -\frac{3}{4}, \rho) $, where $ \rho = -e_1 + \sum_{i=1}^7 e_i/2 $. Using the prepotential and perturbative partition function of the $ \mathrm{SU}(2)+7\mathbf{F} $ theory we find the following bilinear relations:
\begin{align}\label{eq:7F-bilinear2}
    &\Lambda_{ijk}^{(2)} \tauftn{1/2}{-1/2, e_i} \tauftn{0}{-1/4, \rho-e_i} + \text{cyclic in } (i,j,k) = 0 \, , \\
    &\Lambda_{1jk}^{(2)} \tauftn{1/2}{-1/2, -e_1} \tauftn{0}{-1/4, \rho+e_1} + \Lambda_{jk1}^{(2)} \tauftn{1/2}{1/2,e_j} \tauftn{0}{-1/4,\rho-e_j} + \Lambda_{k1j}^{(2)} \tauftn{1/2}{1/2,e_k} \tauftn{0}{-1/4,\rho-e_k} = 0 \, , \nonumber \\
    &\Lambda_{0jk}^{(2)} \tauftn{1/2}{0,0} \tauftn{0}{-3/4,\rho} + \Lambda_{jk0}^{(2)} \tauftn{1/2}{1/2,e_j} \tauftn{0}{-1/4,\rho-e_j} + \Lambda_{k0j}^{(2)} \tauftn{1/2}{1/2,e_k} \tauftn{0}{-1/4,\rho-e_k} = 0 \, , \nonumber
\end{align}
where $ i,j,k = 2,3,\cdots,7 $ are distinct integers. The coefficients $ \Lambda_{ijk}^{(2)} $ are given by
\begin{align}\label{eq:7F-bilinear2-Lambda}
    \Lambda_{ijk}^{(2)} &= C_i \sinh\Big(\frac{\tilde{m}_k-\tilde{m}_j}{2}\Big) \sinh\Big(\frac{\mathfrak{m} - \tilde{m}_j - \tilde{m}_k}{2}\Big) \, ,
\end{align}
for all distinct integers $ i,j,k=0,1,\cdots,7 $, where $ \tilde{m}_i $ are defined in \eqref{eq:7F-tildem} and
\begin{equation}\label{eq:7F-C}
    C_0 = \frac{q^{9/8}\prod_{l=2}^7 \sqrt{M_l}}{\mathfrak{q} \sqrt{M_1}} , \quad C_1 = \frac{1}{\sqrt{M_1}} , \quad C_i = \sqrt{M_i} ,  \quad
    \mathfrak{m} = m_0 - m_1 + \frac{1}{2} \sum_{l=1}^7 m_l ,
\end{equation}
for $ \mathfrak{q}=e^{-m_0} $ and $ M_i=e^{-m_i} $. We have checked these bilinear relations using the instanton partition function of the $ \mathrm{SU}(2)+7\mathbf{F} $ theory, up to 3-instanton order. The bilinear relations \eqref{eq:7F-bilinear1} and \eqref{eq:7F-bilinear2} are equivalent to the $ q $-Painlev\'e equation $ q\text{-}E_8^{(1)} $ given in \eqref{eq:qE8} \cite{Masuda:2011}. As in the elliptic Painlev\'e equation, the discrete time evolution is not simply a shift of the instanton fugacity $ \mathfrak{q} $, but is instead a translation element in the affine Weyl group $ W(E_8^{(1)}) $. We present the equivalence of the bilinear relations and the $ q\text{-}E_8^{(1)} $ Painlev\'e equation in Appendix~\ref{app:qE8}.

\subsection{\texorpdfstring{$ \mathrm{SU}(2) + 6\mathbf{F} $}{SU(2) + 6F}}\label{sec:6F}

We next consider the $ q $-Painlev\'e equation $ q\text{-}E_7^{(1)} $ and the 5d $ \mathrm{SU}(2)+6\mathbf{F} $ theory, which has enhanced $ E_7 $ global symmetry. The unrefined effective prepotential and perturbative partition function are given by \eqref{eq:su2F-unref-pre} and \eqref{eq:su2F-unref-pert} with $ N_f=6 $. Although the candidate magnetic fluxes can again be obtained from those of the $ \mathrm{SU}(2)+8\mathbf{F} $ theory, here we consider the set
\begin{align}
    \mathcal{S} = \left\{ (r,B) \, \big| \, (r,B) \text{ satisfying \eqref{eq:Nf-quantization}}, \ |b_i| \leq 1 \text{ for } 0 \leq i \leq 6 \right\} \, .
\end{align}
Here, $ r $ is the fractional part of the dynamical magnetic flux associated with $ \phi $ and $ B=(b_i)_{i=0}^6 $ denotes the background magnetic fluxes associated with $ (m_i)_{i=0}^6 $.

By considering the total fluxes $ \Phi = (0; 1,0) $ and $ \Phi = (1/2; 1,\rho) $ as in $ \mathrm{SU}(2)+7\mathbf{F} $ theory, where $ \rho = -e_1 + \sum_{i=1}^6 e_i/2 $, we obtain the following consistent bilinear relations:
\begin{align}
    \begin{aligned}\label{eq:6F-bilinear1}
        & \Lambda_{ijk}^{(1)} \tauftn{1/2}{1/2,e_i} \tauftn{1/2}{1/2,-e_i} + (i,j,k)\text{-cyclic} = 0 \, , \\
        & \Lambda_{7jk}^{(1)} \tauftn{1/2}{0,0} \tauftn{1/2}{1,0} - \tauftn{1/2}{1/2,e_j} \tauftn{1/2}{1/2,-e_j} + \tauftn{1/2}{1/2,e_k} \tauftn{1/2}{1/2,-e_k} = 0 \, , \\
        &\Lambda_{ijk}^{(2)} \tauftn{1/2}{1/2, \sigma_i e_i} \tauftn{0}{1/2, \rho- \sigma_i e_i} + (i,j,k)\text{-cyclic} = 0 \, , \\
        &\Lambda_{7jk}^{(2)} \tauftn{1/2}{0,0} \tauftn{0}{1,\rho} + \Lambda_{jk7}^{(2)} \tauftn{1/2}{1/2, \sigma_j e_j} \tauftn{0}{1/2,\rho- \sigma_j e_j} + \Lambda_{k7j}^{(2)} \tauftn{1/2}{1/2, \sigma_k e_k} \tauftn{0}{1/2,\rho- \sigma_k e_k} = 0 \, , \\
        &\Lambda_{8jk}^{(2)} \tauftn{1/2}{1,0} \tauftn{0}{0,\rho} + \Lambda_{jk8}^{(2)} \tauftn{1/2}{1/2, \sigma_j e_j} \tauftn{0}{1/2,\rho- \sigma_j e_j} + \Lambda_{k8j}^{(2)} \tauftn{1/2}{1/2, \sigma_k e_k} \tauftn{0}{1/2,\rho- \sigma_k e_k} = 0 \, .
    \end{aligned}
\end{align}
Here, $ 1 \leq i, j, k \leq 6 $ are distinct integers, $ \sigma_i= (-1)^{\delta_{i1}} $, and the coefficients of the bilinear relations are given by
\begin{alignat}{2}
    &\Lambda_{ijk}^{(1)} = \Lambda_{7jk}^{(1)} = 4\sinh\Big(\frac{m_j + m_k}{2}\Big) \sinh\Big(\frac{m_j - m_k}{2}\Big) \, , \\
    &\Lambda_{ijk}^{(2)} = \sinh\Big(\frac{\sigma_j m_j - \sigma_k m_k}{2}\Big) \, ,
    &&\Lambda_{njk}^{(2)} = C_n^{(2)} \sinh\Big(\frac{\sigma_j m_j - \sigma_k m_k}{2}\Big) \nonumber \\
    &\Lambda_{jkn}^{(2)} = \sinh\Big(\frac{-\mathfrak{m} + m_1 + \sigma_k m_k + \gamma_n}{2}\Big) \, ,
    &&\Lambda_{knj}^{(2)} = \sinh\Big(\frac{\mathfrak{m} - m_1 - \sigma_j m_j - \gamma_n}{2}\Big) \, , \nonumber
\end{alignat}
for $ 1 \leq i,j,k \leq 6 $ and $ n=7,8 $, where we define
\begin{gather}
    \begin{gathered}\label{eq:6F-tildem}
        C_7^{(2)} = \frac{M_1^{1/4}}{q^{5/8} \sqrt{\mathfrak{q}} \prod_{l=2}^6 M_l^{1/4}} \, , \quad
        C_8^{(2)} = \frac{q^{1/8} \prod_{l=2}^6 M_l^{1/4}}{\sqrt{\mathfrak{q}} M_1^{1/4}} \, , \\
        \gamma_7 = -m_0-\epsilon \, , \quad
        \gamma_8 = m_0 \, , \quad
        \mathfrak{m} = \frac{1}{2}\sum_{l=1}^6 m_l \, .
    \end{gathered}
\end{gather}
The coefficients of the bilinear relations are determined from the $ \phi \to \infty $ and $ m_0 \to \infty $ limits of the ans\"atze. We further use the subleading order of the 1-instanton partition function \eqref{eq:1inst-leading} to rule out inconsistent bilinear relations. We have checked the bilinear relations \eqref{eq:6F-bilinear1} using the instanton partition function of the $ \mathrm{SU}(2)+6\mathbf{F} $ theory, up to 3-instanton order.

One can find many other bilinear relations by changing the total flux $ \Phi $. We present another type of bilinear relation, which comes from the total flux $ \Phi=(0;1,e_5+e_6) $. From the gauge theory data of the $ \mathrm{SU}(2)+6\mathbf{F} $ theory, we find
\begin{align}\label{eq:6F-bilinear3}
    &\Lambda_{ijkl}^{(3)} \tauftn{0}{1/2,\lambda-e_i-e_j} \tauftn{0}{1/2,\lambda-e_k-e_l} + (i,j,k)\text{-cyclic} = 0 \, , \\
    &\tauftn{0}{1/2,\lambda-e_1-e_4} \tauftn{0}{1/2,\lambda-e_2-e_3} - \tauftn{0}{1/2,\lambda-e_1-e_3} \tauftn{0}{1/2,\lambda-e_2-e_4} + \Lambda_{53}^{(3)} \tauftn{1/2}{1/2,e_5} \tauftn{1/2}{1/2,e_6} = 0 \, , \nonumber \\
    &\Lambda_{71}^{(3)} \tauftn{0}{1/2,\lambda-e_1-e_4} \tauftn{0}{1/2,\lambda-e_2-e_3} \!+\! \Lambda_{72}^{(3)} \tauftn{0}{1/2,\lambda-e_1-e_3} \tauftn{0}{1/2,\lambda-e_2-e_4} \!+\! \Lambda_{73}^{(3)} \tauftn{1/2}{0,e_5+e_6} \tauftn{1/2}{1,0} \!=\! 0 , \nonumber  \\
    &\Lambda_{81}^{(3)} \tauftn{0}{1/2,\lambda-e_1-e_4} \tauftn{0}{1/2,\lambda-e_2-e_3} \!+\! \Lambda_{82}^{(3)} \tauftn{0}{1/2,\lambda-e_1-e_3} \tauftn{0}{1/2,\lambda-e_2-e_4} \!+\! \Lambda_{83}^{(3)} \tauftn{1/2}{0,0} \tauftn{1/2}{1,e_5+e_6} \!=\! 0 , \nonumber
\end{align}
where $ 1 \leq i,j,k,l \leq 4 $ are distinct integers, $ \lambda=\sum_{i=1}^6 e_i/2 $, and the coefficients are determined by
\begin{align}
    \Lambda_{ijkl}^{(3)} &= \sinh\Big(\frac{m_i-m_j}{2}\Big) \sinh\Big(\frac{m_k-m_l}{2}\Big) , \
    \Lambda_{n3}^{(3)} = 4C_n^{(3)} \sinh\Big(\frac{m_1-m_2}{2}\Big) \sinh\Big(\frac{m_3-m_4}{2}\Big) , \nonumber \\
    \Lambda_{n1}^{(3)} &= 4\sinh\Big(\frac{\mathfrak{m}-m_1-m_3 - \gamma_n}{2}\Big) \sinh\Big(\frac{\mathfrak{m}-m_2-m_4 - \gamma_n}{2}\Big) \, , \nonumber \\
    \Lambda_{n2}^{(3)} &= -4\sinh\Big(\frac{\mathfrak{m}-m_1-m_4 - \gamma_n}{2}\Big) \sinh\Big(\frac{\mathfrak{m}-m_2-m_3 - \gamma_n}{2}\Big) \, .
\end{align}
Here, $ n=5,7,8 $, $ \gamma_n $ and $ \mathfrak{m} $ are defined in \eqref{eq:6F-tildem}, and
\begin{gather}
    C_5^{(3)} = q^{1/4} \sqrt{\mathfrak{q}} \, , \quad
    C_7^{(3)} = q^{1/4} \sqrt{\frac{M_5 M_6}{\mathfrak{q}}} \, , \quad
    C_8^{(3)} = \frac{1}{q^{3/4} \sqrt{\mathfrak{q}M_5M_6}} \, .
\end{gather}
The indices $ (1,2,3,4) $ can be permuted in the last three equations of \eqref{eq:6F-bilinear3}. The bilinear relations \eqref{eq:6F-bilinear1} and \eqref{eq:6F-bilinear3} are sufficient to describe the $ q\text{-}E_7^{(1)} $ system \eqref{eq:qE7}, where the discrete time evolution is given by a translation element of the affine $ E_7^{(1)} $ Weyl group. The equivalence is discussed in Appendix~\ref{app:E7}, by comparing our result with \cite{Masuda:2009}.

\subsection{\texorpdfstring{$ \mathrm{SU}(2) + 5\mathbf{F} $}{SU(2) + 5F}}

We next consider the bilinear relations coming from the 5d $ \mathrm{SU}(2)+5\mathbf{F} $ theory, which has $ E_6 $ global symmetry. The unrefined effective prepotential and perturbative partition function are given by \eqref{eq:su2F-unref-pre} and \eqref{eq:su2F-unref-pert} with $ N_f=5 $. To bootstrap the bilinear relations of tau-functions, we consider the set of candidate magnetic fluxes given by
\begin{align}
    \mathcal{S} = \left\{ (r,B) \, \big| \, (r,B) \text{ satisfying \eqref{eq:Nf-quantization}}, \ |b_0| \leq 3/2 , \ |b_i| \leq 1 , \ (1 \leq i \leq 5) \right\} \, ,
\end{align}
where $ r $ is the fractional part of the dynamical magnetic flux associated with $ \phi $ and $ B=(b_i)_{i=0}^5 $ denotes the background magnetic fluxes associated with the mass parameters $ (m_i)_{i=0}^5 $.

Due to the complexity of the bilinear relations, we present only a minimal set of bilinear relations that we expect to be equivalent to the $ q\text{-}E_6^{(1)} $ Painlev\'e equation. Systems of bilinear relations for a given total flux $ \Phi $ can be found in a similar manner to the previous subsections. Let us define nine tau-functions as
\begin{alignat}{3}
    \begin{aligned}\label{eq:5F-tau}
        &\tau_1^{(1)} = \tauftn{0}{-3/4, \rho-e_2} \, , \quad
        &&\tau_2^{(1)} = \tauftn{1/2}{-1/2,-e_1} \, , \quad
        &&\tau_3^{(1)} = \tauftn{0}{-1/4, \rho-e_2-e_3} \, , \\
        &\tau_1^{(2)} = \tauftn{0}{-3/4, \rho-e_3} \, , \quad
        &&\tau_2^{(2)} = \tauftn{1/2}{-1/2, e_4} \, , \quad
        &&\tau_3^{(2)} = \tauftn{1/2}{-1, 0} \, , \\
        &\tau_1^{(3)} = \tauftn{1/2}{0, 0} \, , \quad
        &&\tau_2^{(3)} = \tauftn{1/2}{-1/2, e_5} \, , \quad
        &&\tau_3^{(3)} = \tauftn{0}{-1/4, \rho} \, ,
    \end{aligned}
\end{alignat}
where $ \rho = -e_1 + \sum_{i=1}^5 e_i/2 $. These tau-functions then satisfy the nine bilinear relations
\begin{align}\label{eq:5F-bilinear}
    \Lambda_{i1}^{(j)} \tau_i^{(j)} \ell_j(\tau_{i+1}^{(j)}) + \Lambda_{i2}^{(j)} \ell_j(\tau_{i}^{(j)}) \tau_{i+1}^{(j)} + \Lambda_{i3}^{(j)} \tau_{i+2}^{(j+1)} \tau_{i+2}^{(j+2)} = 0 \, ,
\end{align}
where $ i,j=1,2,3 $ are understood modulo 3, and $ \ell_j $ acts on the tau-functions as
\begin{gather}
    \begin{gathered}
        \ell_1\left(\tauftn{r}{b_0, b_i e_i}\right) = \tauftn{r}{b_0, b_ie_i +e_1+e_2} \, , \\
        \ell_2\left(\tauftn{r}{b_0, b_i e_i}\right) = \tauftn{r+1/2}{b_0+3/4, b_i e_i+\rho-e_2-e_4} \, , \quad
        \ell_3\left(\tauftn{r}{b_0, b_i e_i}\right) = \tauftn{r+1/2}{b_0-3/4, b_i e_i-\rho-e_1+e_4} \, ,
    \end{gathered}
\end{gather}
These maps satisfy $ \ell_1 \ell_2 \ell_3 = 1 $. The coefficients $ \Lambda_{ik}^{(j)} $ are determined as
\begin{gather}
    \begin{gathered}
        \Lambda_{i1}^{(1)} = 1 \, , \qquad
        \Lambda_{12}^{(1)} = \Lambda_{22}^{(1)} = -\sqrt{M_1M_2} \, , \qquad
        \Lambda_{32}^{(1)} = -1 \, , \\
        \Lambda_{13}^{(1)} = \frac{-1 + M_1M_2}{q^{1/8}\sqrt{M_2}} \, , \quad
        \Lambda_{23}^{(1)} = q^{1/8} \frac{-1+M_1M_2}{\sqrt{M_1}} \, , \quad
        \Lambda_{33}^{(1)} = \frac{\sqrt{\mathfrak{q}} (-1+M_1M_2)}{q^{1/4} \sqrt{M_1M_2M_3}} \, ,
    \end{gathered}
\end{gather}
for $ j=1 $;
\begin{gather}
    \begin{gathered}
        \Lambda_{11}^{(2)} = \Lambda_{31}^{(2)} = 1 \, , \quad
        \Lambda_{21}^{(2)} = \frac{1}{q^{1/8} \sqrt{M_4}} \, , \quad
        \Lambda_{12}^{(2)} = -q^{-1/8} \sqrt{\frac{\mathfrak{q}M_3}{M_2}} \, , \\
        \Lambda_{22}^{(2)} = -\frac{1}{q^{1/4} M_4} \, , \quad
        \Lambda_{32}^{(2)} = -q^{1/4} \sqrt{\frac{\mathfrak{q}M_5}{M_1}} \, , \quad
        \Lambda_{i3}^{(2)} = -1 + \mathfrak{q} \sqrt{\frac{M_3M_5}{M_1M_2M_4}} \, ,
    \end{gathered}
\end{gather}
for $ j=2 $; and
\begin{gather}
    \begin{gathered}
        \Lambda_{11}^{(3)} = q^{1/4} M_5 \, , \quad
        \Lambda_{21}^{(3)} = q^{-3/8} \sqrt{\mathfrak{q}M_2M_3} \, , \quad
        \Lambda_{31}^{(3)} = q^{-3/4} \sqrt{\frac{\mathfrak{q}M_1}{M_4}} \, , \\
        \Lambda_{12}^{(3)} = -q^{1/8} \sqrt{M_5} \, , \quad
        \Lambda_{22}^{(3)} = \Lambda_{32}^{(3)} = -1 \, , \quad
        \Lambda_{i3}^{(3)} = 1 - \frac{\mathfrak{q}}{q} \sqrt{\frac{M_1M_2M_3M_5}{M_4}} \, ,
    \end{gathered}
\end{gather}
for $ j=3 $. We have checked that these bilinear relations remain valid in the presence of instanton corrections using the instanton partition function of the $ \mathrm{SU}(2)+5\mathbf{F} $ theory, up to 3-instanton order. These nine bilinear relations \eqref{eq:5F-bilinear} have exactly the same structure as the bilinear relations studied in \cite{Tsuda:2008}. Thus, we expect that the tau-functions defined from the 5d $ \mathrm{SU}(2)+5\mathbf{F} $ theory describe the $ q\text{-}E_6^{(1)} $ Painlev\'e system. The detailed comparision is presented in Appendix~\ref{app:E6}.

\subsection{\texorpdfstring{$ \mathrm{SU}(2) + 4\mathbf{F} $}{SU(2) + 4F}}\label{sec:4F}

In this subsection, we consider the bilinear relations of $ q\text{-}D_5^{(1)} $, which is also called the $ q $-Painlev\'e VI equation \cite{Jimbo:1995}, from the 5d $ \mathrm{SU}(2)+4\mathbf{F} $ theory, which has enhanced $ D_5 $ global symmetry. The bilinear relations of this $ q $-Painlev\'e equation were studied in \cite{Tsuda:2006, Jimbo:2017ael}, and we can reproduce them using the bootstrap method based on the 5d field theory data. The unrefined effective prepotential and perturbative partition function are given by \eqref{eq:su2F-unref-pre} and \eqref{eq:su2F-unref-pert} with $ N_f=4 $. We consider the set $ \mathcal{S} $ of magnetic fluxes given by
\begin{align}
    \mathcal{S} = \left\{ (r,B) \, \big| \, (r,B) \text{ satisfying \eqref{eq:Nf-quantization}}, \ |b_0| \leq 3/2 , \  |b_i| \leq 1 , \ (1 \leq i \leq 4) \right\} \, ,
\end{align}
where $ r $ is the fractional part of the dynamical magnetic flux associated with $ \phi $ and $ B = (b_0,b_1,\cdots,b_4) $ denotes the background magnetic fluxes associated with $ (m_0,m_1,\cdots,m_4) $.

Let us consider the total fluxes $ \Phi = (0; 0, e_3-e_4) $ and $ \Phi = (0; 1, e_3-e_4) $, where $ e_i $ are unit vectors of $ \mathbb{R}^4 $. Then, by considering the prepotential, the 1-loop perturbative partition function, and the leading order of the 1-instanton partition function which encodes the $ D_5 $ symmetry, we find that the eight tau-functions defined by
\begin{alignat}{4}
    \begin{aligned}
        &\tau_1 = \tauftn{0}{-1/2, \rho-e_1-e_2} \, , \
        &&\tau_2 = \tauftn{0}{1/2, \rho} \, , \
        &&\tau_3 = \tauftn{0}{-1/2, \rho-e_1} \, , \
        &&\tau_4 = \tauftn{0}{1/2, \rho-e_2} \, , \\
        &\tau_5 = \tauftn{1/2}{0,0} \, , \
        &&\tau_6 = \tauftn{1/2}{0,e_3-e_4} \, , \
        &&\tau_7 = \tauftn{1/2}{1/2,e_3} \, , \
        &&\tau_8 = \tauftn{1/2}{-1/2,-e_4} \, ,
    \end{aligned}
\end{alignat}
where $ \rho = -e_4 + \sum_{i=1}^4 e_i/2 $, satisfy
\begin{alignat}{2}
    \begin{aligned}\label{eq:4F-bilinear}
        &\tau_1 \tau_2 = \tau_3 \tau_4 + \sqrt{\mathfrak{q}M_1M_2} \tau_5 \tau_6 \, , \quad
        &&\overline{\tau_1} \underline{\tau_2} = \tau_3 \tau_4 + \sqrt{\frac{\mathfrak{q}}{M_1M_2}} \tau_5 \tau_6 \, , \\
        &\tau_7 \tau_8 = \tau_5 \tau_6 + \sqrt{\mathfrak{q}M_3M_4} \tau_3 \tau_4 \, , \quad
        &&\underline{\tau_7} \overline{\tau_8} = \tau_5 \tau_6 + \sqrt{\frac{\mathfrak{q}}{M_3M_4}} \tau_3 \tau_4 \, , \\
        & \tau_3 \overline{\tau_4} = \overline{\tau_1} \tau_2 + q^{1/4} \sqrt{\frac{\mathfrak{q} M_1}{M_2}} \tau_7 \overline{\tau_8} \, , \quad
        && \overline{\tau_3} \tau_4 = \overline{\tau_1} \tau_2 + q^{1/4} \sqrt{\frac{\mathfrak{q}M_2}{M_1}} \tau_7 \overline{\tau_8} \, , \\
        & \tau_5 \overline{\tau_6} = \tau_7 \overline{\tau_8} + q^{3/4} \sqrt{\frac{\mathfrak{q}M_3}{M_4}} \overline{\tau_1} \tau_2 \, , \quad
        && \overline{\tau_5} \tau_6 = \tau_7 \overline{\tau_8} + q^{-1/4} \sqrt{\frac{\mathfrak{q}M_4}{M_3}} \overline{\tau_1}\tau_2 \, .
    \end{aligned}
\end{alignat}
Here, the overbar and underbar denote the forward and backward discrete time evolutions, which are defined as
\begin{align}
    \overline{\tau_i} = \tau_i(m_0+\epsilon) \, , \quad
    \underline{\tau_i} = \tau_i(m_0-\epsilon) \, .
\end{align}
We remark that the discrete time evolution can be regarded as a simple shift of the instanton fugacity. We have checked that these bilinear relations remain valid after including the instanton corrections, using the instanton partition function of the $ \mathrm{SU}(2)+4\mathbf{F} $ theory up to 3-instanton order. 

The eight bilinear relations in \eqref{eq:4F-bilinear} are not the only bilinear relations that we can find. If one starts from other choices of total flux, one obtains additional bilinear relations for the tau-functions defined from a different set of magnetic fluxes. However, the bilinear relations \eqref{eq:4F-bilinear} are sufficient to describe the $ q\text{-}D_5^{(1)} $ system. By defining $ f $ and $ g $ in terms of tau-functions as
\begin{align}
    f = \sqrt{\mathfrak{q}} \frac{\tau_5 \tau_6}{\tau_3 \tau_4} \, , \quad
    g = q^{1/4} \sqrt{\mathfrak{q}} \frac{\tau_7 \overline{\tau_8}}{\overline{\tau_1} \tau_2} \, ,
\end{align}
we find that they satisfy the non-linear $ q $-difference equations
\begin{align}
    f\overline{f} = \frac{(g+q\mathfrak{q}\sqrt{M_3/M_4}) (g+\mathfrak{q}\sqrt{M_4/M_3})}{(g + \sqrt{M_2/M_1}) (g+\sqrt{M_1/M_2})} \, , \quad
    g\underline{g} = \frac{(f+\frac{\mathfrak{q}}{\sqrt{M_3M_4}})(f + \mathfrak{q}\sqrt{M_3M_4})}{(f+\frac{1}{\sqrt{M_1M_2}}) (f+\sqrt{M_1M_2})} \, .
\end{align}
This is the $ q $-Painlev\'e $ q\text{-}D_5^{(1)} $ system given in \eqref{eq:qD5}.

\subsection{\texorpdfstring{$ \mathrm{SU}(2) + 3\mathbf{F} $}{SU(2) + 3F}}

The bilinear relations of the $ q\text{-}A_4^{(1)} $ equation and the connection between its $ \tau $-functions and the 5d instanton partition function of the $ \mathrm{SU}(2)+3\mathbf{F} $ theory were studied in \cite{Matsuhira:2018qtx, Bonelli:2022dse}. This $ q $-Painlev\'e equation is also called the $ q $-Painlev\'e V equation, which corresponds to the $ q $-uplift of the differential Painlev\'e V equation. In this subsection, we bootstrap the bilinear relations of $ q\text{-}A_4^{(1)} $ from the gauge theory data of the 5d $ \mathrm{SU}(2)+3\mathbf{F} $ theory, which has $ E_4 = \mathrm{SU}(5) $ global symmetry. The unrefined effective prepotential and perturbative partition function are given by \eqref{eq:su2F-unref-pre} and \eqref{eq:su2F-unref-pert} with $ N_f=3 $. We consider the set $ \mathcal{S} $ of magnetic fluxes given by
\begin{align}
    \mathcal{S} = \left\{ (r,B) \, \big| \, (r,B) \text{ satisfying \eqref{eq:Nf-quantization}}, \ |b_i| \leq 2 , \ (0 \leq i \leq 3) \right\} \, ,
\end{align}
where $ r $ is the fractional part of the dynamical magnetic flux associated with $ \phi $, and $ B=(b_0,b_1,b_2,b_3) $ denotes the background magnetic fluxes associated with the mass parameters $ (m_0,\cdots,m_3) $.

For a fixed trial total flux $ \Phi $, we scan possible pairs in the candidate set of magnetic fluxes. Using the unrefined prepotential and perturbative partition function, we fix the coefficients $ \Lambda_i $ in the ansatz for the bilinear relations. Some of the ans\"atze are further ruled out by using the leading order information of the 1-instanton partition function. By considering two total fluxes $ \Phi = (0; 0, e_2-e_3) $ and $ \Phi = (0; 1, e_2-e_3) $, where $ e_i $ are unit vectors of $ \mathbb{R}^3 $, we find that the six tau-functions defined as
\begin{alignat}{3}
    \begin{aligned}
        &\tau_1 = \tauftn{0}{-1/4,\rho-e_1} \, , \
        &&\tau_2 = \tauftn{0}{1/4, \rho} \, , \
        &&\tau_3 = \tauftn{1/2}{-1/2,-e_3} \, , \\
        &\tau_4 = \tauftn{1/2}{1/2,e_2} \, , \
        &&\tau_5 = \tauftn{1/2}{0,0} \, , \
        &&\tau_6 = \tauftn{1/2}{0,e_2-e_3} \, ,
    \end{aligned}
\end{alignat}
with $ \rho = (e_1+e_2-e_3)/2 $, satisfy
\begin{alignat}{2}
    \begin{aligned}\label{eq:3F-bilinear}
        & \overline{\tau_1} \underline{\tau_2} = \tau_1 \tau_2 + \sqrt{\frac{\mathfrak{q}}{M_1}} \tau_5 \tau_6 \, , \quad
        && \tau_3 \tau_4 = \tau_5 \tau_6 + \sqrt{\mathfrak{q}M_2M_3} \tau_1 \tau_2 \, , \\
        & \overline{\tau_3} \underline{\tau_4} = \tau_5 \tau_6 + \sqrt{\frac{\mathfrak{q}}{M_2M_3}} \tau_1 \tau_2 \, , \quad
        && \tau_1 \overline{\tau_2} = \overline{\tau_1} \tau_2 + q^{1/4} \sqrt{\mathfrak{q}M_1} \overline{\tau_3} \tau_4 \, , \\
        & \tau_5 \overline{\tau_6} = \overline{\tau_3} \tau_4 + q^{3/4} \sqrt{\frac{\mathfrak{q}M_2}{M_3}} \overline{\tau_1} \tau_2 \, , \quad
        && \overline{\tau_5} \tau_6 = \overline{\tau_3} \tau_4 + q^{-1/4} \sqrt{\frac{\mathfrak{q}M_3}{M_2}} \overline{\tau_1} \tau_2 \, .
    \end{aligned}
\end{alignat}
Here, $ \overline{\tau_i} = \tau_i(m_0+\epsilon) $ and $ \underline{\tau_i}=\tau_i(m_0-\epsilon) $ denote the discrete time evolution. We have checked that these bilinear relations remain valid after including instanton corrections, using the instanton partition function of the $ \mathrm{SU}(2)+3\mathbf{F} $ theory up to 3-instanton order.

It is possible to find additional sets of bilinear relations by considering other choices of total flux $ \Phi $. Nevertheless, the six bilinear relations \eqref{eq:3F-bilinear} are sufficient to describe the $ q $-Painlev\'e system. Let us define $ (f,g) $ by
\begin{align}
    f = \sqrt{\mathfrak{q}} \frac{\tau_5 \tau_6}{\tau_1\tau_2} \, , \quad
    g = q^{1/4} \sqrt{\mathfrak{q}} \frac{\overline{\tau_3} \tau_4}{\overline{\tau_1}\tau_2} \, .
\end{align}
From the bilinear relations \eqref{eq:3F-bilinear}, we find that $ f $ and $ g $ satisfy
\begin{align}
    f\overline{f} = \frac{(g+q\mathfrak{q}\sqrt{M_2/M_3}) (g+\mathfrak{q}\sqrt{M_3/M_2})}{\sqrt{M_1} g+1} \, , \quad
    g\underline{g} = \sqrt{M_1} \frac{(f+\mathfrak{q}\sqrt{M_2M_3}) (f+\frac{\mathfrak{q}}{\sqrt{M_2M_3}})}{f+\sqrt{M_1}} .
\end{align}
This is the $ q $-Painlev\'e system $ q\text{-}A_4^{(1)} $ as given in \eqref{eq:qA4}.

\subsection{\texorpdfstring{$ \mathrm{SU}(2) + 2\mathbf{F} $}{SU(2) + 2F}}\label{sec:2F}

Finally, we consider the connection between the $ q\text{-}E_3^{(1)} $ Painlev\'e equation and the 5d $ \mathrm{SU}(2)+2\mathbf{F} $ theory, which has $ E_3 = \mathrm{SU}(3) \times \mathrm{SU}(2) $ global symmetry. This Painlev\'e equation is also called the $ q $-Painlev\'e $ \text{III}_1 $ equation, which corresponds to the $ q $-deformation of the differential Painlev\'e $ \text{III}_1 $ equation. The unrefined effective prepotential and perturbative partition function are given by \eqref{eq:su2F-unref-pre} and \eqref{eq:su2F-unref-pert} with $ N_f=2 $. Using the prepotential, the 1-loop perturbative partition function and the leading order of the partition function which encodes $ E_3 $ symmetry, we can bootstrap the tau-functions of the $ q\text{-}E_3^{(1)} $ equation. We consider the set $ \mathcal{S} $ of magnetic fluxes given by
\begin{align}
    \mathcal{S} = \left\{ (r,B) \, \big| \, (r,B) \text{ satisfying \eqref{eq:Nf-quantization}}, \ |b_i| \leq 2 , \ (0 \leq i \leq 2) \right\} \, ,
\end{align}
where $ r $ denotes the fractional part of the dynamical magnetic flux associated with $ \phi $, and $ B=(b_0,b_1,b_2) $ denotes the background magnetic fluxes associated with the mass parameters $ (m_0,m_1,m_2) $.

For the set of candidate magnetic fluxes, we construct the tau-functions and find their bilinear relations. By considering two total fluxes $ \Phi = (0; \pm 1/2, 0,1) $, we find that the four tau-functions
\begin{align}
    \tau_1 = \tauftn{0}{0,-1/2,1/2} \, , \quad
    \tau_2 = \tauftn{0}{-1/2,1/2,1/2} \, , \quad
    \tau_3 = \tauftn{1/2}{0,0,0} \, , \quad
    \tau_4 = \tauftn{1/2}{-1/2,0,1} \, ,
\end{align}
satisfy
\begin{alignat}{2}
    \begin{aligned}
        &\overline{\tau_1}\tau_2 = \tau_1 \overline{\tau_2} + q^{1/8} \sqrt{\frac{\mathfrak{q}}{M_1}} \tau_3 \overline{\tau_4} \, , \quad
        &&\overline{\tau_3} \tau_4 = \tau_3 \overline{\tau_4} + q^{-1/8} \sqrt{\frac{\mathfrak{q}}{M_2}} \tau_1 \overline{\tau_2} \, , \\
        &\underline{\tau_1} \overline{\tau_2} = \tau_1 \tau_2 + q^{-1/8} \sqrt{\mathfrak{q}M_1} \tau_3 \tau_4 \, , \quad
        &&\underline{\tau_3} \overline{\tau_4} = \tau_3 \tau_4 + q^{1/8} \sqrt{\mathfrak{q}M_2} \tau_1 \tau_2 \, ,
    \end{aligned}
\end{alignat}
where the discrete time evolution is denoted as $ \overline{\tau_i} = \tau_i(m_0+\epsilon) $ and $ \underline{\tau_i} = \tau_i(m_0-\epsilon) $. We have checked the validity of these bilinear relations using the instanton partition function of the $ \mathrm{SU}(2)+2\mathbf{F} $ theory, up to 3-instanton order. Furthermore, it is straightforward to check that $ (f,g) $ defined by
\begin{align}
    f = \sqrt{\mathfrak{q}} \frac{\tau_3 \tau_4}{\tau_1 \tau_2} \, , \quad
    g = q^{1/4} \sqrt{\mathfrak{q}} \frac{\tau_3 \overline{\tau_4}}{\tau_1 \overline{\tau_2}} \, ,
\end{align}
satisfy the non-linear $ q $-difference equations given by
\begin{align}
    f\overline{f} = q^{1/8} \sqrt{M_1} \frac{g(g+q^{1/8} \mathfrak{q} M_2^{-1/2})}{g+q^{1/8} \sqrt{M_1}} \, , \quad
    g\underline{g} = \frac{q^{1/8}}{\sqrt{M_1}} \frac{f(f+q^{1/8} \mathfrak{q}\sqrt{M_2})}{f+q^{1/8} M_1^{-1/2}} \, .
\end{align}
This system is the $ q\text{-}E_3^{(1)} $ equation given in \eqref{eq:qE3}.

\section{Conclusion} \label{sec:conclusion}

In this work, we studied bilinear relations of tau-functions for the elliptic and $ q $-Painlev\'e equations through their correspondence with 5d $ \mathcal{N}=1 $ supersymmetric gauge theories. This correspondence can be viewed as a five-dimensional analogue of the relation between differential Painlev\'e equations and 4d $ \mathcal{N}=2 $ theories. We defined the tau-functions as grand-canonical partition functions of 5d gauge theories on $ \hat{\mathbb{C}}^2 \times S^1 $ in the NS limit, and developed a systematic bootstrap method for deriving their bilinear relations. The bootstrap uses the effective prepotential, the perturbative partition function, and the global symmetry (or the leading-order behavior of the partition function) of the theory. In particular, the full non-perturbative instanton partition function is not needed as an input for constructing the bilinear relations. We applied this bootstrap method to the full sequence of 5d $ \mathrm{SU}(2)+N_f\mathbf{F} $ theories with $ N_f \leq 8 $. Starting from the E-string theory which UV-completes the 5d $ \mathrm{SU}(2)+8\mathbf{F} $ theory, we obtained the bilinear relations of the elliptic Painlev\'e equation. By considering the theories with fewer fundamental hypermultiplets, we derived the bilinear relations for the corresponding $ q $-Painlev\'e systems from the gauge theory data. We have further verified these bilinear relations using the explicit instanton partition functions.

We comment on the case of the 5d $ \mathcal{N}=1 $ rank-1 SCFT with trivial flavor symmetry, called the $ E_0 $ theory. This theory is a non-Lagrangian theory which does not admit a low-energy gauge theory description. It can be geometrically engineered by M-theory compactified on local $ \mathbb{P}^2 $, and can also be obtained by decoupling an instantonic hypermultiplet from the $ \mathrm{SU}(2)_\pi $ theory. The corresponding geometry appears in Sakai's classification of rational surfaces \cite{Sakai:2001}, as the final element in the degeneration sequence of the second row of Figure~\ref{fig:PainleveClassify}. However, the relevant Weyl group does not contain a non-trivial translation element generating a discrete time evolution, which indicates that there is no non-trivial $ q $-Painlev\'e dynamics associated with this case. This is consistent with the perspective of the 5d field theory. The partition function of the $ E_0 $ theory can be computed using the blowup equation \cite{Kim:2020hhh, Huang:2017mis}, so one may still define tau-functions by the discrete Fourier transformation \eqref{eq:tau}. Since this theory has no flavor symmetry, the tau-functions are labelled only by the fractional part $ r $ of the dynamical magnetic flux. The allowed values are $ r=\pm 1/6 $ and $ r=1/2 $, as discussed in \cite{Kim:2020hhh}. With this restricted set of tau-functions, one cannot construct a non-trivial three-term ansatz of the form \eqref{eq:ansatz} satisfying the total flux condition \eqref{eq:total-flux-cond-r}. Therefore, within the class of bilinear relations considered in this paper, the $ E_0 $ theory does not give a non-trivial bilinear relation.

There are several directions for future work. First, it would be interesting to extend the bootstrap method to the tau-functions associated with 4d $ \mathcal{N}=2 $ rank-1 theories. In particular, the correspondence between the Painlev\'e systems and 4d Minahan-Nemeschansky (MN) SCFTs with $ E_6 $, $ E_7 $, and $ E_8 $ flavor symmetries is still not as explicit as in the cases covered by the standard differential Painlev\'e-gauge correspondence given in Table~\ref{table:painleve-4d}. These are non-Lagrangian theories, and they do not admit a standard ADHM construction of the instanton moduli space. Nevertheless, the topological string approach has recently been used to compute the partition function of MN theories on the $ \Omega $-background \cite{Fucito:2023txg}. Combining these results with our bootstrap method may provide an explicit connection between the Painlev\'e equations and 4d MN theories.

Second, the present formulation should admit generalizations beyond the 5d $ \mathrm{SU}(2)+N_f\mathbf{F} $ theories. Bilinear relations associated with higher-rank pure gauge theories have been studied \cite{Bonelli:2022iob, Bershtein:2017swf,Bershtein:2018srt}. Since the bootstrap method uses only the prepotential, perturbative partition function, symmetry constraints, and the leading order information of the partition function, all of which are available for general 5d/6d SCFTs with UV completions \cite{Kim:2020hhh}, we expect that one can find non-trivial bilinear relations of tau-functions defined from the partition functions of general 5d and 6d theories. In particular, there are three more rank-1 UV-complete theories in five dimensions, namely $ \mathrm{SU}(2)_\theta + 1\mathbf{Adj} $ with $ \theta=0,\pi $, which are the circle compactification of the 6d $ \mathcal{N}=(2,0) $ $ A_1 $ theory and the twisted circle compactification of the 6d $ \mathcal{N}=(2,0) $ $ A_2 $ theory, respectively, and the local $ \mathbb{P}^2 + 1\mathbf{Adj} $ SCFT, obtained by a mass deformation of the $ \mathrm{SU}(2)_\pi + 1\mathbf{Adj} $ theory \cite{Bhardwaj:2019jtr, Kim:2020hhh}. In the 4d limit, the $ \mathrm{SU}(2)_0 + 1\mathbf{Adj} $ theory reduces to the 4d $ \mathcal{N}=2^* $ theory, whose connection to integrable systems has been studied in \cite{Bonelli:2019boe}. Finding the bilinear relations of their tau-functions and studying the associated integrable systems would be a promising direction.

Finally, the bilinear relations studied in this paper are formulated on the self-dual $ \Omega $-background, but there is a generalization of the Painlev\'e-gauge correspondence in the fully refined $ \Omega $-background $ \epsilon_1+\epsilon_2 \neq 0 $ \cite{Bonelli:2025rih, Bonelli:2025mfw}. For instance, the tau-function defined from the instanton partition function of the 4d $ \mathrm{SU}(2)+4\mathbf{F} $ theory can be identified with the Fourier series of the 4-point Virasoro conformal block at central charge $ c = 1+6\frac{(\epsilon_1+\epsilon_2)^2}{\epsilon_1\epsilon_2} $, from the perspective of the AGT correspondence. Generalizing the bilinear relations and the bootstrap method to the fully refined $ \Omega $-background would be an interesting open problem.

\acknowledgments

We would like to thank Gu Jie, Hee-Cheol Kim and Kimyeong Lee for valuable discussions and comments. The research of MK is supported by a KIAS Individual Grant QP097502. XW is supported by the Fundamental Research Funds for the Central Universities (Grants No. WK2030250140) and the National Natural Science Foundation of China (Grant No.12247103).

\appendix

\section{Proof of the lemma} \label{app:lemma}

In section~\ref{sec:elliptic}, we claim that two bilinear relations from the magnetic fluxes $ \{(\alpha_i, \alpha_i')\} $ and $ \{(\beta_i, \beta_i')\} $ are connected to each other by a Weyl transformation if $ \langle x_i, x_j \rangle = \langle y_i, y_j \rangle $ for $ x_i=\alpha_i-\alpha_i' $ and $ y_i=\beta_i-\beta_i' $. This claim is equivalent to the following mathematical lemma.
\begin{lemma}
    Let $ \{\alpha_{ia}\}_{i=1,2,3}^{a=1,2} $ and $ \{\beta_{ia}\}_{i=1,2,3}^{a=1,2} $ be two sets of nonzero $ E_8 $ root vectors satisfying the constant-sum conditions
    \begin{align}
        \alpha_{i1}+\alpha_{i2}=\lambda \, , \quad
        \beta_{i1}+\beta_{i2}=\sigma \, ,
    \end{align}
    for all $ i $, and set $ x_i = \alpha_{i1}-\alpha_{i2} $ and $ y_i=\beta_{i1}-\beta_{i2} $. If the Gram matrices of $ x_i $ and $ y_i $ agree, $ \langle x_i, x_j \rangle = \langle y_i, y_j \rangle $ for all $ i,j $, then there exists $ w \in W(E_8) $ such that $ w(\alpha_{ia}) = \beta_{ia} $.
\end{lemma}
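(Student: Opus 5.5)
We plan to reduce the statement to a finite, structured transitivity problem for $W(E_8)$ acting on short vectors of the $E_8$ lattice $\Gamma_8$. From $\alpha_{i1}+\alpha_{i2}=\lambda$ and $\alpha_{i1}-\alpha_{i2}=x_i$ we get $\alpha_{i1}=\frac12(\lambda+x_i)$ and $\alpha_{i2}=\frac12(\lambda-x_i)$. All $\alpha_{ia}$ are roots, so $\langle\alpha_{ia},\alpha_{ia}\rangle=2$. Subtracting the two norm equations gives $\langle\lambda,x_i\rangle=0$, and adding them gives $\lVert\lambda\rVert^2+\lVert x_i\rVert^2=8$ for every $i$. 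The same holds for $\sigma$ and the $y_i$.

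Since the Gram matrices of the $x_i$ and $y_i$ agree, $\lVert x_1\rVert^2=\lVert y_1\rVert^2$, and hence $\lVert\lambda\rVert^2=\lVert\sigma\rVert^2$. Together with $\lambda\perp x_i$ and $\sigma\perp y_i$, this shows that the full Gram matrices of $(\lambda,x_1,x_2,x_3)$ and $(\sigma,y_1,y_2,y_3)$ coincide. Since $w$ is linear, it therefore suffices to find $w\in W(E_8)$ with $w(\lambda)=\sigma$ and $w(x_i)=y_i$. Then $w(\alpha_{ia})=\frac12(\sigma\pm y_i)=\beta_{ia}$ follows automatically.

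We will build $w$ stage by stage, using the standard fact (Steinberg/Chevalley) that the stabilizer in $W(E_8)$ of a vector $v$ is the parabolic Weyl group generated by the reflections in the roots orthogonal to $v$.

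\emph{Step 1.} Map $\lambda$ to $\sigma$. Both are lattice vectors (sums of roots) of norm $0,2,4,6$ or $8$; the case $\lVert\lambda\rVert^2=8$ would force $x_i=0$ and is excluded by the three-term setting. For norms $2,4,6$, $W(E_8)$ is transitive on vectors of that norm in $\Gamma_8$ (they number $240$, $2160$ and $6720$, each a single orbit). Norm $0$ is trivial. This gives $w_0$ with $w_0(\lambda)=\sigma$, and we may assume $\lambda=\sigma$, in line with the five representatives of $\lambda$ listed in section~\ref{sec:elliptic}.

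\emph{Step 2.} Work inside $W_\lambda=\mathrm{Stab}(\lambda)$, which is the Weyl group of the root subsystem $R_\lambda=\lambda^\perp\cap\Delta(E_8)$. This subsystem is $E_8$, $E_7$, $D_7$ or $E_6$ according as $\lVert\lambda\rVert^2=0,2,4,6$ (for $\lambda=2e_8$ and $e_7+e_8$, read off from the orthonormal basis). The $x_i$ lie in $\lambda^\perp$ and have the fixed norm $8-\lVert\lambda\rVert^2$. Moreover, $x_i\equiv\lambda \pmod 2$ as lattice vectors, because $x_i=\lambda-2\alpha_{i2}$.

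We then need two transitivity statements:
\begin{enumerate}[(a)]
\item $W_\lambda$ acts transitively on $\{x\in\Gamma_8\cap\lambda^\perp : \lVert x\rVert^2=8-\lVert\lambda\rVert^2,\ x\equiv\lambda \ (\mathrm{mod}\ 2\Gamma_8),\ \tfrac12(\lambda\pm x)\in\Delta(E_8)\}$;
\item the successive stabilizers $W_{\lambda,x_1}$ and $W_{\lambda,x_1,x_2}$, again parabolic, act transitively on the admissible $x_2$ (respectively $x_3$) with prescribed inner products.
\end{enumerate}
Each stabilizer is the Weyl group of the roots orthogonal to the span, so the candidate sets are explicit and small: at most $120$ pairs, as counted in the text.

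The main obstacle is checking transitivity (a) and (b) at each stage. Orbits of a parabolic subgroup on vectors of fixed norm and inner products need not be unique; for instance, norm-$8$ vectors in $\Gamma_8$ split into $2\times$roots and the rest. The parity condition $x\equiv\lambda$ together with the root condition on $\frac12(\lambda\pm x)$ should cut out exactly one orbit.

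To settle this, we will first try a uniform argument: identify the admissible $x$ for fixed $\lambda$ with a single orbit of minuscule-type weights of $R_\lambda$. For example, for $\lambda=2e_8$ one has $x=\pm2e_i$ with $i\le7$, an orbit of $W(D_7)$ up to sign; for $\lambda=e_7+e_8$ the $x$ correspond to the $56$ of $E_7$. Orbits of minuscule weights under a Weyl group are single, and point stabilizers are again Weyl groups of smaller minuscule systems, which gives the induction. If this uniform route breaks down for $\lambda=0$ (where the $x$ are twice roots and pairs with inner product $\pm4$ need care), we fall back on an explicit finite check in the orthonormal basis. That check needs only the five $\lambda$-representatives and the Gram-matrix classes listed in section~\ref{sec:elliptic}.
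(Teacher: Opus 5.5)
Your proposal is correct in outline, but it takes a genuinely different route from the paper. Both arguments start from $\alpha_{ia}=\frac12(\lambda\pm x_i)$ and $\lVert\lambda\rVert^2=\lVert\sigma\rVert^2$. From there the paper notes that $\alpha_{ia}\mapsto\beta_{ia}$ is an isometry between the root subsystems $\Xi_A,\Xi_B$ generated by the two sextuples. It then cites Oshima's classification: every subsystem type of rank at most $4$ except $4A_1$ has a single $W(E_8)$-class of embeddings, and $4A_1$ is ruled out by a short argument from the constant-sum condition. That proof is brief but is essentially a citation. You instead build $w$ one vector at a time, using Steinberg's theorem (pointwise stabilizers are parabolic) and explicit orbit decompositions. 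This is more elementary and self-contained. It also shows why the $4A_1$ ambiguity, which comes from a reducible complement $D_4+A_1$, never enters. The price is a case split on $\lVert\lambda\rVert^2$ and some branching rules.

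As written, though, (a) and (b) are only asserted (``should cut out exactly one orbit''), and they carry the whole content, so you must actually verify them. They do hold, and uniformly. Prescribing the inner products of a new $x_k$ with $\lambda$ and the earlier $x_j$ is the same as prescribing the orthogonal projection of the root $\alpha_{k1}=\frac12(\lambda+x_k)$ onto the span $U$ already fixed. In every case that occurs, the roots with a given projection form a single orbit of $W(U^\perp\cap\Delta(E_8))$:
\begin{itemize}
\item For $\lVert\lambda\rVert^2=2$, first use $56\to 1+27+\overline{27}+1$ under $W(E_6)\subset W(E_7)$. Next, $\lambda,\alpha_{11},\alpha_{21}$ span an $A_3$ whose complement is $D_5$, and $27\to 1+10+16$ (likewise for $\overline{27}$), with the $10$ and the $16$ separated by their inner product with $x_2$.
\item For $\lVert\lambda\rVert^2=4$, the admissible vectors are $\{\pm 2e_i\}$, and these form single orbits under the chain $W(D_7)\supset W(D_6)\supset W(D_5)$.
\item For $\lambda=0$, where $x_i=2\gamma_i$, it follows from the decompositions of $E_8$ under $E_7\times A_1$, $D_6\times 2A_1$ and $E_6\times A_2$.
\end{itemize}

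Two slips also need fixing. For $\lVert\lambda\rVert^2=6$ the orthogonal subsystem is $E_6+A_1$, not $E_6$, and the $A_1$ reflection is essential: with $\lambda=\alpha+\beta$ the admissible $x$ are $\pm(\alpha-\beta)$, which $W(E_6)$ alone fixes. Second, $\lVert\lambda\rVert^2=8$ is not excluded by the lemma's hypotheses. It is trivial, however, since then all $\alpha_{ia}=\lambda/2$ and all $\beta_{ia}=\sigma/2$.
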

\begin{proof}
    Since $ \alpha_{ia} $ and $ \beta_{ia} $ are nonzero $ E_8 $ roots, $ \lVert \alpha_{ia} \rVert^2 = \lVert \beta_{ia} \rVert^2 = 2 $, and  $ \langle \lambda, x_i \rangle = \langle \sigma, y_i \rangle = 0 $. The roots are decomposed as 
    \begin{align}\label{eq:E8root-decompose}
        \alpha_{i1} = \frac{\lambda+x_i}{2} \, , \quad
        \alpha_{i2} = \frac{\lambda-x_i}{2} \, , \quad
        \beta_{i1} = \frac{\sigma+y_i}{2} \, , \quad
        \beta_{i2} = \frac{\sigma-y_i}{2} \, .
    \end{align}
    Squaring \eqref{eq:E8root-decompose} gives $ \lVert \lambda \rVert^2 + \lVert x_i \rVert^2 = \lVert\sigma\rVert^2 + \lVert y_i \rVert^2 = 8 $. Combined with $ \lVert x_i \rVert^2 = \lVert y_i \rVert^2 $, this yields $ \lVert \lambda \rVert^2 = \lVert \sigma \rVert^2 $. Then a direct computation gives 
    \begin{align}\label{eq:app-inner}
       \langle \alpha_{ia}, \alpha_{jb} \rangle = \frac{\lVert \lambda \rVert^2 \pm \langle x_i, x_j \rangle}{4} = \frac{\lVert \sigma \rVert^2 \pm \langle y_i, y_j \rangle}{4} = \langle \beta_{ia}, \beta_{jb} \rangle \, ,
    \end{align}
    where the sign is plus if $ a=b $ and minus if $ a\neq b $. 

    Let $ \Sigma $ be the $ E_8 $ root system and $ \Xi_A, \Xi_B \subset \Sigma $ be the smallest root subsystems generated by $ \{\alpha_{ia}\} $ and $ \{\beta_{ia}\} $, respectively. Since $ \alpha_{ia} \in \operatorname{Span}\{\lambda, x_1, x_2, x_3 \} $ and $ \beta_{ia} \in \operatorname{Span}\{\sigma, y_1,y_2,y_3\} $, both root subsystems have rank at most four. By \eqref{eq:app-inner}, the assignment $ f(\alpha_{ia}) = \beta_{ia} $ extends to an isomorphism $ f : \Xi_A \to \Xi_B $ of root systems preserving the inner product. By the classification of root subsystems of $ E_8 $ \cite{Oshima:2007}, possible subsystems of rank at most four are
    \begin{align}
        A_1,\ A_2,\ A_3,\ A_4,\ D_4,\ 2A_1,\ 3A_1,\ 4A_1,\ A_2+A_1,\ A_2+2A_1,\ 2A_2,\ A_3+A_1 .
    \end{align}
    Let $ \iota_A : \Xi_A \hookrightarrow \Sigma $ and $ \iota_B : \Xi_B \hookrightarrow \Sigma $ be the inclusion maps. For every type on the list except $ 4A_1 $, all embeddings of that type into $ \Sigma $ lie in a single $ W(E_8) $-orbit \cite{Oshima:2007}. Consequently, provided $ \Xi_A \not\cong 4A_1 $, the two embeddings $ \iota_A $ and $ \iota_B \circ f $ of $ \Xi_A $ into $ \Sigma $ are $ W(E_8) $-related, i.e., there exists $ w \in W(E_8) $ with $ \iota_B \circ f = w \circ \iota_A $. Evaluating on the generators yields
    \begin{align}
        w(\alpha_{ia}) = \beta_{ia}
    \end{align}
    for all $ i $ and $ a $.

    It remains to exclude the case $ \Xi_A \cong 4A_1 $. Assuming this for
    contradiction, write
    \begin{align}
        \Xi_A = \{ \pm \gamma_1, \pm \gamma_2, \pm \gamma_3, \pm \gamma_4 \} \, ,
        \qquad \langle \gamma_j, \gamma_k \rangle = 2\delta_{jk} \, .
    \end{align}
    Each $ \alpha_{ia} $ is then of the form $ \pm \gamma_k $ for some $ k $, and the constant-sum condition $ \alpha_{i1}+\alpha_{i2}=\lambda $ leaves three possibilities for $ \lambda $: $ \lambda = 0 $, $ \pm 2\gamma_k $, or $ \pm\gamma_j \pm \gamma_k $ with $ j\neq k $. If $ \lambda = 0 $, then $ \alpha_{i2} = -\alpha_{i1} $ for each $ i $, so the six roots $ \{\alpha_{ia}\} $ span at most three dimensions, contradicting $ \rank\Xi_A = 4 $. If $ \lambda \neq 0 $, the pair $ \{\alpha_{i1}, \alpha_{i2}\} $ is uniquely determined by $ \lambda $, so the six vectors span at most two dimensions, again contradicting $ \rank\Xi_A = 4 $. Either way, $ \Xi_A \not\cong 4A_1 $, completing the proof.
\end{proof}

\section{Discrete Painlev\'e equations and bilinear relations} \label{app:bilinear}

In this appendix, we present some supplementary mathematical material on the elliptic and $ q $-Painlev\'e equations and their bilinear relations. We first list their explicit expressions, and review their bilinear relations in the subsequent subsections.

\subsection{Explicit expressions of discrete Painlev\'e equations} \label{app:PainleveExpression}

In this appendix, we present explicit expressions for the elliptic and $ q $-Painlev\'e equations. More detailed derivations and mathematical discussions can be found in \cite{Sakai:2007, Yamada:2014, Kajiwara:2015}. Let
\begin{alignat}{2}
    \begin{aligned}\label{eq:theta}
        &\theta_1(\tau,z) = -i \sum_{n\in \mathbb{Z}} (-1)^n q^{\frac{1}{2}(n+1/2)^2} y^{n+1/2} \, , \quad
        &&\theta_2(\tau,z) = \sum_{n\in \mathbb{Z}} q^{\frac{1}{2}(n+1/2)^2} y^{n+1/2} \\
        &\theta_3(\tau,z) = \sum_{n\in \mathbb{Z}} q^{\frac{1}{2}n^2} y^{n} \, , \quad
        &&\theta_4(\tau,z) = \sum_{n\in \mathbb{Z}} (-1)^n q^{\frac{1}{2}n^2} y^{n} \, ,
    \end{aligned}
\end{alignat}
be the Jacobi theta functions, where $ q=e^{-\tau} $ and $ y=e^{-z} $. We also use the multiplicative notation $ \tilde{\theta}_1(q,y) \equiv \theta_1(\tau,z) $ for convenience.

\paragraph{$ \boldsymbol{e\text{-}E_8^{(1)}} $}
Let
\begin{align}\label{eq:discrete-time}
   T(\kappa_1,\kappa_2,f,g) = (\kappa_1/q,q \kappa_2, \overline{f}, \overline{g}) \, , \quad
   q = \frac{\kappa_1^2 \kappa_2^2}{\prod_{i=1}^8 b_i} \, ,
\end{align}
be the discrete time evolution. The elliptic Painlev\'e equation is
\begin{align}
    \begin{aligned}\label{eq:eE8}
        \frac{\bigl(f-f(\frac{\kappa_2}{t}) \bigr) \bigl(\overline{f}-\overline{f}(\frac{\kappa_2}{t}) \bigr)}{\bigl(f-f(t) \bigr) \bigl(\overline{f}-\overline{f}(t) \bigr)} &= \frac{f_\alpha(t) \overline{f_\alpha}(t)}{f_\alpha(\frac{\kappa_2}{t}) \overline{f_\alpha}(\frac{\kappa_2}{t})} \frac{U(\frac{\kappa_2}{t})}{U(t)} \, , \\
        \frac{\bigl(g-g(\frac{\kappa_1}{s})\bigr) \bigl(\underline{g}-\underline{g}(\frac{\kappa_1}{s})\bigr)}{\bigl(g-g(s)\bigr)\bigl(\underline{g}-\underline{g}(s)\bigr)} &= \frac{g_\alpha(s) \underline{g_\alpha}(s)}{g_\alpha(\frac{\kappa_1}{s}) \underline{g_\alpha}(\frac{\kappa_1}{s})} \frac{U(\frac{\kappa_1}{s})}{U(s)} \, ,
    \end{aligned}
\end{align}
where $ s $ and $ t $ are variables such that $ f=f(s) $ and $ g=g(t) $, $ \alpha $ and $ \beta $ are arbitrary parameters, and
\begin{gather}
    \begin{gathered}
        f(z) = \frac{f_\beta(z)}{f_\alpha(z)} \, , \quad
        g(z) = \frac{g_\beta(z)}{g_\alpha(z)} \, , \quad
        U(z) = \prod_{i=1}^8 \tilde{\theta}_1(b_0, b_i/z) \, , \\
        f_\alpha(z) = \tilde{\theta}_1\Bigl(b_0, \frac{\alpha}{z}\Bigr) \tilde{\theta}_1\Bigl(b_0, \frac{\kappa_1}{\alpha z}\Bigr) \, , \quad
        g_\alpha(z) = \tilde{\theta}_1\Bigl(b_0, \frac{\alpha}{z}\Bigr) \tilde{\theta}_1\Bigl(b_0, \frac{\kappa_2}{\alpha z}\Bigr) \, .
    \end{gathered}
\end{gather}

\paragraph{$ \boldsymbol{q\text{-}E_8^{(1)}} $}
Let \eqref{eq:discrete-time} be the discrete time evolution. The $ q $-Painlev\'e equation $ q\text{-}E_8^{(1)} $ can be written as
\begin{align}
    \begin{aligned}\label{eq:qE8}
        \frac{\bigl(\overline{f}-g \bigr) \bigl(f-g\bigr) - \bigl(\frac{\kappa_1}{q}-\kappa_2\bigr)\bigl(\kappa_1-\kappa_2\bigr) \frac{1}{\kappa_2}}{\bigl( \frac{\overline{f}q}{\kappa_1} - \frac{g}{\kappa_2} \bigr) \bigl( \frac{f}{\kappa_1} - \frac{g}{\kappa_2} \bigr) - \bigl( \frac{q}{\kappa_1} - \frac{1}{\kappa_2} \bigr) \bigl( \frac{1}{\kappa_1} -\frac{1}{\kappa_2} \bigr) \kappa_2 } &= \frac{\kappa_1^2}{q} \frac{A(\kappa_2,g)}{B(\kappa_2,g)} \, , \\
        \frac{\bigl(f-g \bigr) \bigl(f-\underline{g}\bigr) - \bigl(\kappa_1-\kappa_2\bigr) \bigl(\kappa_1-\frac{\kappa_2}{q}\bigr) \frac{1}{\kappa_1} }{\bigl( \frac{f}{\kappa_1} - \frac{g}{\kappa_2} \bigr) \bigl( \frac{f}{\kappa_1} - \frac{\underline{g} q}{\kappa_2} \bigr) - \bigl( \frac{1}{\kappa_1}- \frac{1}{\kappa_2}\bigr) \bigl(\frac{1}{\kappa_1} - \frac{q}{\kappa_2}\bigr) \kappa_1} &= \frac{\kappa_2^2}{q} \frac{A(\kappa_1, f)}{B(\kappa_1, f)} \, .
    \end{aligned}
\end{align}
Here, the polynomials $ A $ and $ B $ are
\begin{align}
    \begin{aligned}
        A(h,z) &= c_0 z^4 - c_1 z^3 + \left( -3hc_0 + c_2 - \frac{c_8}{h^3} \right) z^2 \\
        &\quad + \left(2h c_1 - c_3 + \frac{c_7}{h^2} \right) z + \left( h^2 c_0 - h c_2 + c_4 - \frac{c_6}{h} + \frac{c_8}{h^2} \right) \, , \\
        B(h,z) &= \frac{c_8}{h^4} z^4 - \frac{c_7}{h^3} z^3 + \left( -h c_0 + \frac{c_6}{h^2} - \frac{3c_8}{h^3} \right) z^2 \\
        &\quad + \left( hc_1 - \frac{c_5}{h} + \frac{2c_7}{h^2} \right) z + \left( h^2 c_0 - h c_2 + c_4 - \frac{c_6}{h} + \frac{c_8}{h^2} \right) \, ,
    \end{aligned}
\end{align}
where $ c_i $ are defined as
\begin{align}
    \frac{1}{z^4} \prod_{i=1}^8 (z-b_i) = \frac{1}{z^4} \sum_{i=0}^8 (-1)^i c_{8-i} z^i \, .
\end{align}

\paragraph{$ \boldsymbol{q\text{-}E_7^{(1)}} $}
Let
\begin{align}\label{eq:discrete-time2}
    T(t, f, g) = (qt, \overline{f}, \overline{g}) \, , \quad
    q \prod_{i=1}^8 b_i = 1 \, ,
\end{align}
be the discrete time evolution. The $ q $-Painlev\'e $ q\text{-}E_7^{(1)} $ equation is
\begin{align}\label{eq:qE7}
    \frac{(fg-t^2) (\overline{f}g-t^2/q)}{(fg-1)(\overline{f}g-1)} = \frac{\prod_{i=5}^8 (g-b_it)}{\prod_{i=1}^4 (g-1/b_i)} \, , \quad
    \frac{(fg-t^2) (f\underline{g}-qt^2)}{(fg-1)(f\underline{g}-1)} = \frac{\prod_{i=5}^8 (f-t/b_i)}{\prod_{i=1}^4 (f-b_i)} \, .
\end{align}

\paragraph{$ \boldsymbol{q\text{-}E_6^{(1)}} $}
Let \eqref{eq:discrete-time2} be the discrete time evolution. The $ q\text{-}E_6^{(1)} $ equation is
\begin{align}\label{eq:qE6}
    \frac{(fg-1)(\overline{f}g-1)}{f\overline{f}} = \frac{\prod_{i=1}^4 (g-1/b_i)}{\prod_{i=5}^6 (g-t b_i)} \, , \quad
    \frac{(fg-1)(f\underline{g}-1)}{g\underline{g}} = \frac{\prod_{i=1}^4 (f-b_i)}{\prod_{i=7}^8 (f-t/b_i)} \, .
\end{align}

\paragraph{$ \boldsymbol{q\text{-}D_5^{(1)}} $}
For the time evolution \eqref{eq:discrete-time2}, the $ q\text{-}D_5^{(1)} $ equation is given by
\begin{align}\label{eq:qD5}
    f\overline{f} = b_3b_4 \frac{(g-tb_5)(g-tb_6)}{(g-1/b_1)(g-1/b_2)} \, , \quad
    g\underline{g} = \frac{1}{b_1b_2} \frac{(f-t/b_7)(f-t/b_8)}{(f-b_3)(f-b_4)} \, .
\end{align}

\paragraph{$ \boldsymbol{q\text{-}A_4^{(1)}} $}
For the time evolution \eqref{eq:discrete-time2}, the $ q\text{-}A_4^{(1)} $ equation is given by
\begin{align}\label{eq:qA4}
    f\overline{f} = -b_2b_3b_4 \frac{(g-t b_5)(g-t b_6)}{g-1/b_1} \, , \quad
    g\underline{g} = -\frac{1}{b_1b_2b_3} \frac{(f-t/b_7)(f-t/b_8)}{f-b_4} \, .
\end{align}

\paragraph{$ \boldsymbol{q\text{-}E_3^{(1)}} $}
For the time evolution \eqref{eq:discrete-time2}, the $ q\text{-}E_3^{(1)} $ equation is given by
\begin{align}\label{eq:qE3}
    f\overline{f} = -b_2b_3b_4 \frac{g(g-tb_5)}{g-1/b_1} \, , \quad
    g\underline{g} = -\frac{1}{b_1b_2b_3} \frac{f(f-t/b_8)}{f-b_4} \, .
\end{align}

\paragraph{$ \boldsymbol{q\text{-}E_2^{(1)}} $}
For the time evolution \eqref{eq:discrete-time2}, the $ q\text{-}E_2^{(1)} $ equation is given by
\begin{align}\label{eq:qE2}
    f\overline{f} = -b_2b_3b_4 \frac{g^2}{g-1/b_1} \, , \quad
    g\underline{g} = -\frac{1}{b_1b_2b_3} \frac{f(f-t/b_8)}{f-b_4} \, .
\end{align}

\paragraph{$ \boldsymbol{q\text{-}A_1^{(1)}} $}
For the time evolution \eqref{eq:discrete-time2}, the $ q\text{-}A_1^{(1)} $ equation is given by
\begin{align}\label{eq:qA1}
    f\overline{f} = b_1b_2b_3b_4 g^2 \, , \quad
    g\underline{g} = -\frac{1}{b_1b_2b_3} \frac{f(f-t/b_8)}{f-b_4} \, .
\end{align}

\paragraph{$ \boldsymbol{q\text{-}\tilde{A}_1^{(1)}} $}
For the time evolution \eqref{eq:discrete-time2}, the $ q\text{-}\tilde{A}_1^{(1)} $ equation is given by
\begin{align}\label{eq:qA1t}
    f\overline{f} = -b_2b_3b_4 g \, , \quad
    g\underline{g} = \frac{1}{b_1b_2b_3b_8} \frac{f t}{f-b_4} \, .
\end{align}

\subsection{Affine \texorpdfstring{$ E_8^{(1)} $}{E8(1)} algebra}

We next give a brief review of the $ E_8^{(1)} $ algebra, which is the affine extension of the $ E_8 $ Lie algebra. Let $ \alpha_i $ with $ i=0,1,\cdots,8 $ be the simple roots of $ E_8^{(1)} $. The Cartan matrix of $ E_8^{(1)} $ can be read from the Dynkin diagram given in Figure~\ref{fig:E8dynkin}. The root system of $ E_8^{(1)} $ is given by $ \{ \alpha + n \delta \mid \alpha\in \Delta(E_8), \, n \in \mathbb{Z} \} $, where $ \Delta(E_8) $ is the root system of the simple Lie algebra $ E_8 $, and $ \delta $ is the \emph{imaginary root} defined by
\begin{align}
    \delta = \alpha_0 + 2\alpha_1 + 4\alpha_2 + 6\alpha_3 + 5\alpha_4 + 4\alpha_5 + 3\alpha_6 + 2\alpha_7 + 3\alpha_8 \, ,
\end{align}
where the coefficients of the simple roots $ \alpha_i $ are the Coxeter labels (or marks). We parametrize the simple roots of $ E_8^{(1)} $ using the mass parameters $ \mu_l $ of the E-string theory which correspond to the orthonormal basis of the root system as
\begin{gather}
    \begin{gathered}\label{eq:E81-ortho}
        \alpha_1 = \mu_1 + \mu_8 - \frac{1}{2} \sum_{l=1}^8 \mu_l \, , \quad
        \alpha_i = -\mu_{i-1} + \mu_{i} \, , \quad (2 \leq i \leq 7) \, , \\
        \alpha_8 = \mu_1 + \mu_2 \, , \quad
        \alpha_0 = \delta - \mu_7 - \mu_8 \, .
    \end{gathered}
\end{gather}

\begin{figure}
    \centering
    \begin{tikzpicture}
        \draw [thick] (0,0) -- (7,0);
        \draw [thick] (2,0) -- (2,1);
        \filldraw [thick, fill=white] (0,0) circle (0.1);
        \filldraw [thick, fill=white] (1,0) circle (0.1);
        \filldraw [thick, fill=white] (2,0) circle (0.1);
        \filldraw [thick, fill=white] (3,0) circle (0.1);
        \filldraw [thick, fill=white] (4,0) circle (0.1);
        \filldraw [thick, fill=white] (5,0) circle (0.1);
        \filldraw [thick, fill=white] (6,0) circle (0.1);
        \filldraw [thick, fill=white] (7,0) circle (0.1);
        \filldraw [thick, fill=white] (2,1) circle (0.1);
        \draw (0,-0.4) node {$ \alpha_1 $};
        \draw (1,-0.4) node {$ \alpha_2 $};
        \draw (2,-0.4) node {$ \alpha_3 $};
        \draw (3,-0.4) node {$ \alpha_4 $};
        \draw (4,-0.4) node {$ \alpha_5 $};
        \draw (5,-0.4) node {$ \alpha_6 $};
        \draw (6,-0.4) node {$ \alpha_7 $};
        \draw (7,-0.4) node {$ \alpha_0 $};
        \draw (2.4,1) node {$ \alpha_8 $};
    \end{tikzpicture}
    \caption{Dynkin diagram of affine $ E_8^{(1)} $. In much of the Painlev\'e equation literature, the notation where the labels of $ \alpha_0 $ and $ \alpha_8 $ are exchanged is used.} \label{fig:E8dynkin}
\end{figure}

The affine $ E_8^{(1)} $ algebra is closely related to the geometry of the eight-point blowup of $ \mathbb{P}^1 \times \mathbb{P}^1 $, or equivalently to the nine-point blowup of $ \mathbb{P}^2 $. This surface appears in Sakai's classification of Painlev\'e equations \cite{Sakai:2001}, as well as in the geometric engineering of the E-string theory. The Picard lattice of this surface is generated by the two curve classes $ H_1 $ and $ H_2 $ of $ \mathbb{P}^1 \times \mathbb{P}^1 $ and by the eight exceptional divisors $ E_{i=1,\cdots,8} $ introduced by the blowups. The intersection pairing of the curve classes is given by\footnote{\label{note:dP9}Equivalently, one can use the Picard lattice of the nine-point blowup of $ \mathbb{P}^2 $ generated by $ \varepsilon_0, \varepsilon_1, \cdots, \varepsilon_9 $. They are related to $ H_i $ and $ E_i $ by $ \varepsilon_0 = H_1+H_2-E_1 $, $ \varepsilon_1 = H_1-E_1 $, $ \varepsilon_2 = H_2-E_1 $ and $ \varepsilon_i=E_{i-1} $ for $ i\geq 3 $.}
\begin{align}\label{eq:geo-intersection}
    H_1^2 = H_2^2 = 0 \, , \quad
    H_1 \cdot H_2 = 1 \, , \quad
    E_i \cdot E_j = -\delta_{ij} \, , \quad
    H_i \cdot E_j = 0\,.
\end{align}
The simple roots $ \alpha_i $ of the $ E_8^{(1)} $ algebra can be parametrized using the curve classes as
\begin{gather}\label{eq:E8-geometry}
    \alpha_1 = H_1-H_2 \, , \quad
    \alpha_2 = H_2 - E_1 - E_2 \, , \quad
    \alpha_3 = E_2 - E_3 \, , \quad
    \alpha_4 = E_3 - E_4 \, , \\
    \alpha_5 = E_4 - E_5 \, , \quad
    \alpha_6 = E_5 - E_6 \, , \quad
    \alpha_7 = E_6 - E_7 \, , \quad
    \alpha_8 = E_1 - E_2 \, , \quad
    \alpha_0 = E_7 - E_8 \, ,  \nonumber
\end{gather}
where the inner product between two roots is given by minus of the geometric intersection pairing \eqref{eq:geo-intersection}, $ \langle \alpha_i, \alpha_j \rangle = -\alpha_i \cdot \alpha_j $. In this parametrization, the imaginary root $ \delta $ corresponds to the anti-canonical class of the surface:
\begin{align}
    \delta = 2H_1 + 2H_2 - \sum_{i=1}^8 E_i \, .
\end{align}

The affine Weyl group $ W = W(E_8^{(1)}) $ is generated by $ \langle s_0, s_1, \cdots, s_8 \rangle $, whose action on the parameters $ H_i $ and $ E_i $ is given by
\begin{gather}\label{eq:E8Weyl}
    s_0 \ : \ E_7 \leftrightarrow E_8 \, , \quad
    s_1 \ : \ H_1 \leftrightarrow H_2 \, , \quad
    s_j \ : \ E_{j-1} \leftrightarrow E_j \, , \ (3 \leq j \leq 7) \, , \quad
    s_8 \ : \ E_1 \leftrightarrow E_2 \, , \nonumber \\
    s_2 \ : \ E_1 \mapsto H_2-E_2 \, , \quad E_2 \mapsto H_2-E_1 \, , \quad H_1 \mapsto H_1+H_2-E_1-E_2 \, .
\end{gather}
Unlike the Weyl group of a simple Lie algebra, the Weyl group of an affine Lie algebra includes translations along the weight lattice and is therefore of infinite order. For a root $ \alpha $ of the affine Lie algebra, a translation element in the affine Weyl group, called the \emph{Kac translation} is defined as
\begin{align}\label{eq:Kac-traslation}
    T_\alpha(\lambda) = \lambda + \langle \delta, \lambda \rangle \alpha - \left( \frac{1}{2} \langle \alpha, \alpha \rangle \langle \delta, \lambda \rangle + \langle \alpha, \lambda \rangle \right) \delta \, .
\end{align}
This is a linear transformation satisfying
\begin{align}
    T_\alpha T_\beta = T_{\alpha+\beta} \, , \quad
    w\circ T_\alpha = T_{w(\alpha)} \circ w \, , \quad (w \in W) \, , \quad
    \langle T_\alpha(\lambda), T_\alpha(\mu) \rangle = \langle \lambda, \mu \rangle \, .
\end{align}
The Kac translation $ T_{\alpha_1} $ associated with the root $ \alpha_1 $ acts on the parameters $ H_i $ and $ E_i $ as
\begin{gather}
    \begin{gathered}\label{eq:HE-Kac}
        T_{\alpha_1}(H_1) = H_1 - 2(H_1-H_2) + \delta \, , \quad
        T_{\alpha_1}^{-1}(H_1) = H_1 + 2(H_1-H_2) + 3\delta \, , \\
        T_{\alpha_1}(H_2) = H_2 - 2(H_1-H_2) + 3\delta \, , \quad
        T_{\alpha_1}^{-1}(H_2) = H_2 + 2(H_1-H_2) + \delta \, , \\
        T_{\alpha_1}^{\pm 1}(E_i) = E_i \mp (H_1-H_2) + \delta \, .
    \end{gathered}
\end{gather}
This Kac translation will play the role of the time evolution of the discrete Painlev\'e equations, while the imaginary root $ \delta $ will be identified with the self-dual $ \Omega $-deformation parameter $ \epsilon $.

\subsection{Elliptic Painlev\'e equation} \label{app:elliptic}

We now relate the bilinear relations obtained in section~\ref{sec:elliptic} from the 5d $ \mathrm{SU}(2)+8\mathbf{F} $ theory to the elliptic Painlev\'e equation. We relabel the tau-functions $ \tau_{\mathrm{6d}}\genfrac[]{0pt}{1}{1/2}{0,\alpha} $ defined in section~\ref{sec:elliptic} as
\begin{alignat}{3}
    \begin{aligned}
        &\mathcal{T}(E_1) = \tauftnd{1/2}{0, -e_1+e_8} ,
        &&    \mathcal{T}(H_1-E_1) = \tauftnd{1/2}{0, \rho+e_1} ,
        &&    \mathcal{T}(H_2-E_1) = \tauftnd{1/2}{0, e_1+e_8} , \\
        &    \mathcal{T}(E_i) = \tauftnd{1/2}{0, e_i+e_8} ,
        &&    \mathcal{T}(H_1-E_i) = \tauftnd{1/2}{0, \rho-e_i} ,
        &&    \mathcal{T}(H_2-E_i) = \tauftnd{1/2}{0, -e_i+e_8}  , \\
        &    \mathcal{T}(E_8) = \tauftnd{1/2}{0, 0} ,
        &&    \mathcal{T}(H_1-E_8) = \tauftnd{1/2}{0, \rho+e_8} ,
        &&    \mathcal{T}(H_2-E_8) = \tauftnd{1/2}{0, 2e_8} , 
    \end{aligned}
\end{alignat}
where $ \rho = -e_1 + \sum_{i=1}^8 e_i/2 $, with $ e_i $ denoting the orthonormal basis of $ \mathbb{R}^8 $. This relabelling is compatible with the Weyl transformations of $ E_8^{(1)} $ given in \eqref{eq:E8Weyl}. The bilinear relations of the elliptic Painlev\'e equation can be written as
\begin{align}\label{eq:elliptic-bilinear-ref}
    \Phi_{H_1}(E_i, E_j) \xi_k + (i,j,k)\text{-cyclic} = 0 \, , \quad
    \Phi_{H_2}(E_i, E_j) \eta_k + (i,j,k)\text{-cyclic} = 0 \, ,
\end{align}
where
\begin{gather}
    \begin{gathered}\label{eq:elliptic-Phi}
        \Phi_{\kappa}(x,y) = \theta_1(\mu_0, x-y) \theta_1(\mu_0, \kappa-x-y) \, , \\
        \xi_k = \mathcal{T}(E_k) \mathcal{T}(H_1-E_k) \, , \quad
        \eta_k = \mathcal{T}(E_k) \mathcal{T}(H_2-E_k) \, .
    \end{gathered}
\end{gather}
By comparing the bilinear relations studied in section~\ref{sec:elliptic} and the parametrizations \eqref{eq:E81-ortho} and \eqref{eq:E8-geometry}, we identify the imaginary root $ \delta $ as the $ \Omega $-deformation parameter, $ \delta=-\epsilon $.

The bilinear relations \eqref{eq:elliptic-bilinear-ref} and their equivalence with the elliptic Painlev\'e equation are established in section~5 of \cite{Kajiwara:2015}; here we summarize the construction. The discrete time evolution of the elliptic Painlev\'e equation is given by a Kac translation \eqref{eq:Kac-traslation} with respect to a simple root. We choose $ T_{\alpha_1} $ as the time evolution, and denote $ \overline{f} \equiv T_{\alpha_1}(f) $ and $ \underline{f} \equiv T_{\alpha_1}^{-1}(f) $. Let us introduce a parameter $ \lambda = (H_1+H_2)/4 + n\delta $ for an arbitrary constant $ n $, which satisfies\footnote{This corrects a typo in (5.115) of \cite{Kajiwara:2015}.}
\begin{align}
    s_2(\lambda) = \lambda + \frac{H_2 - E_1-E_2}{4} \, , \quad
    s_j(\lambda) = \lambda \, , \quad (j \neq 2) \, ,
\end{align}
for the Weyl reflections \eqref{eq:E8Weyl}. A convenient parametrization of $ H_i $ and $ E_i $ is
\begin{align}
    \kappa_i = H_i - 2\lambda \, , \quad
    b_i = E_i - \lambda \, .
\end{align}
Instead of the additive notation, let us use the \emph{multiplicative notation}, in which all parameters are exponentiated, so that
\begin{align}\label{eq:kap-v-parameter}
    \kappa_i = \frac{H_i}{\lambda^2} \, , \quad
    b_i = \frac{E_i}{\lambda} \, ,
\end{align}
where $ H_i $ and $ E_i $ are now defined as the exponentials of the original additive parameters, $ H_i^{\mathrm{new}} = e^{-H_i^{\mathrm{old}}} $, etc. From \eqref{eq:HE-Kac}, the time evolutions of the new parameters are given by
\begin{align}\label{eq:Kac-kappa}
    \overline{\kappa_1} = \frac{\kappa_1}{q} \, , \quad
    \overline{\kappa_2} = \kappa_2 q \, , \quad
    \overline{b_i} = b_i \, , \quad
    \overline{\lambda} = \frac{\lambda \kappa_2 q}{\kappa_1} \, ,
\end{align}
where $ q=e^{-\delta} $. We next define $ f $ and $ g $ as
\begin{align}
    f(z) = \frac{\Phi_{\kappa_1}(\beta,z)}{\Phi_{\kappa_1}(\alpha,z)} \, , \quad
    g(z) = \frac{\Phi_{\kappa_2}(\beta,z)}{\Phi_{\kappa_2}(\alpha,z)} \, ,
\end{align}
where $ \alpha $ and $ \beta $ are arbitrary fixed parameters. Note that $ f(z) $ and $ g(z) $ are elliptic functions of order 2, satisfying $ f(\kappa_1/z)=f(z) $ and $ g(\kappa_2/z)=g(z) $. We now define variables $ (Z,W) $, which play the role of the dependent variables of the elliptic Painlev\'e equation, as
\begin{align}\label{eq:ZW}
    \frac{\Phi_{\kappa_1}(\alpha,b_i)}{\Phi_{\kappa_1}(\alpha,b_j)} \frac{Z-f(b_i)}{Z-f(b_j)} = \frac{\xi_i}{\xi_j} \, , \quad
    \frac{\Phi_{\kappa_2}(\alpha,b_i)}{\Phi_{\kappa_2}(\alpha,b_j)} \frac{W-g(b_i)}{W-g(b_j)} = \frac{\eta_i}{\eta_j} \, ,
\end{align}
using the products of tau-functions $ \xi_i $ and $ \eta_i $. Here, the elliptic function $ \Phi $ defined in \eqref{eq:elliptic-Phi} should be understood as the multiplicative notation, as $ \Phi_{\kappa}(x,y) = \tilde{\theta}(\mu_0,\frac{x}{y}) \tilde{\theta}(\mu_0,\frac{\kappa}{xy}) $. Let $ s $ and $ t $ be variables such that $ Z = f(s) $ and $ W=g(t) $. Then the time evolutions of $ Z $ and $ W $ satisfy the elliptic Painlev\'e equation \eqref{eq:eE8}. We refer to \cite{Kajiwara:2015} for the detailed derivation.

\subsection{\texorpdfstring{$ q\text{-}E_8^{(1)} $}{q-E8(1)} from bilinear relations} \label{app:qE8}

The $ q $-Painlev\'e equation $ q\text{-}E_8^{(1)} $ can be obtained by replacing the elliptic functions appearing in the elliptic Painlev\'e equation with their trigonometric counterparts. The relation between the mass parameters $ \mu_l $ and the parameters $ \{H_i, E_i\} $ can be found by comparing \eqref{eq:E81-ortho} and \eqref{eq:E8-geometry}, where $ \mu_8 $ and the imaginary root $ \delta $ are identified as $ \mu_8 = -2m_0 $ and $ \delta=-\epsilon $. We consider tau-functions $ \mathcal{T} $ defined in terms of the tau-functions used in section~\ref{sec:7F} as
\begin{alignat}{3}
    \begin{aligned}
        &\mathcal{T}(E_1) = \tauftn{1/2}{-1/2, -e_1} \, , \!
        &&\mathcal{T}(H_1-E_1) = C_1 \tauftn{0}{-1/4, \rho+e_1} \, , \!
        &&\mathcal{T}(H_2-E_1) = \tauftn{1/2}{-1/2,e_1} \, , \\
        &\mathcal{T}(E_i) = \tauftn{1/2}{-1/2, e_i} \, ,  \!
        &&\mathcal{T}(H_1-E_i) = C_i \tauftn{0}{-1/4, \rho-e_i} \, , \!
        &&\mathcal{T}(H_2-E_i) = \tauftn{1/2}{-1/2,-e_i} \, , \\
        &\mathcal{T}(E_8) = \tauftn{1/2}{0, 0} \, , \!
        &&\mathcal{T}(H_1-E_8) = C_0 \tauftn{0}{-3/4, \rho} \, ,  \!
        &&\mathcal{T}(H_2-E_8) = \tauftn{1/2}{-1,0} \, ,
    \end{aligned}
\end{alignat}
where $ 2\leq i \leq 7 $, $ \rho = -e_1+\sum_{l=1}^7 e_l/2 $, and the normalization constants $ C_i $ are given in \eqref{eq:7F-C}. Then the bilinear relations can be expressed in the same form as \eqref{eq:elliptic-bilinear-ref}:
\begin{align}
    \begin{aligned}
        \Phi_{H_1}(E_i, E_j) \xi_k + (i, j, k)\text{-cyclic} = 0 \, , \quad
        \Phi_{H_2}(E_i, E_j) \eta_k + (i, j, k)\text{-cyclic} = 0 \, ,
    \end{aligned}
\end{align}
but $ \Phi_\kappa(x,y) $ in \eqref{eq:elliptic-Phi} is replaced by the trigonometric expression 
\begin{align}\label{eq:Phi-tri}
    \Phi_{\kappa}(x,y) = 4\sinh\Big(\frac{x-y}{2}\Big) \sinh\Big(\frac{\kappa-x-y}{2}\Big) \, .
\end{align}
The equivalence of these bilinear relations and the $ q $-Painlev\'e equation $ q\text{-}E_8^{(1)} $ is discussed in \cite{Masuda:2011, Kajiwara:2015}. Let us define the parameters $ \kappa_i $ and $ b_i $ as \eqref{eq:kap-v-parameter}, in the multiplicative notation. The discrete time evolution of $ q\text{-}E_8^{(1)} $ is defined as the Kac translation $ T_{\alpha_1} $, which acts on the parameters $ \kappa_i $ and $ v_i $ as \eqref{eq:Kac-kappa}. We now define
\begin{align}
    f(z) = z + \frac{\kappa_1}{z} \, , \quad
    g(z) = z + \frac{\kappa_2}{z} \, .
\end{align}
As in \eqref{eq:ZW}, we define $ (f,g) = (f(s), g(t)) $ such that
\begin{align}
    \frac{f-f(b_j)}{f-f(b_i)} = \frac{\xi_j}{\xi_i} \, , \quad
    \frac{g-g(b_j)}{g-g(b_i)} = \frac{\eta_j}{\eta_i} \, .
\end{align}
Then the time evolutions of $ f $ and $ g $ satisfy the $ q\text{-}E_8^{(1)} $ equation \eqref{eq:qE8}.

\subsection{\texorpdfstring{$ q\text{-}E_7^{(1)} $}{q-E7(1)} from bilinear relations}  \label{app:E7}

\begin{figure}
    \centering
    \begin{tikzpicture}
        \draw [thick] (0,0) -- (6,0);
        \draw [thick] (3,0) -- (3,1);
        \filldraw [thick, fill=white] (0,0) circle (0.1);
        \filldraw [thick, fill=white] (1,0) circle (0.1);
        \filldraw [thick, fill=white] (2,0) circle (0.1);
        \filldraw [thick, fill=white] (3,0) circle (0.1);
        \filldraw [thick, fill=white] (4,0) circle (0.1);
        \filldraw [thick, fill=white] (5,0) circle (0.1);
        \filldraw [thick, fill=white] (6,0) circle (0.1);
        \filldraw [thick, fill=white] (3,1) circle (0.1);
        \draw (0,-0.4) node {$ \alpha_0 $};
        \draw (1,-0.4) node {$ \alpha_1 $};
        \draw (2,-0.4) node {$ \alpha_2 $};
        \draw (3,-0.4) node {$ \alpha_3 $};
        \draw (4,-0.4) node {$ \alpha_4 $};
        \draw (5,-0.4) node {$ \alpha_5 $};
        \draw (6,-0.4) node {$ \alpha_6 $};
        \draw (3.4,1) node {$ \alpha_7 $};
    \end{tikzpicture}
    \caption{Dynkin diagram of affine $ E_7^{(1)} $} \label{fig:E7dynkin}
\end{figure}

In this appendix, we present the relation between the bilinear relations in section~\ref{sec:6F} obtained from the 5d $ \mathrm{SU}(2)+6\mathbf{F} $ theory and those in the literature \cite{Masuda:2009}. Let us use the parametrization $ \{\varepsilon_i\}_{i=0}^9 $ of the nine-point blowup of $ \mathbb{P}^2 $ discussed in footnote~\ref{note:dP9}, instead of using $ H_i $ and $ E_i $. The Dynkin diagram of the $ E_7^{(1)} $ algebra is given in Figure~\ref{fig:E7dynkin}, and the simple roots are parametrized by
\begin{align}
    (\alpha_0,\alpha_1,\cdots,\alpha_7) = (\varepsilon_{89} , \varepsilon_{78}, \varepsilon_{127} , \varepsilon_{23}  , \varepsilon_{34}  , \varepsilon_{45}  , \varepsilon_{56}  , \varepsilon_{12} ) \, ,
\end{align}
where $ \varepsilon_{ij} = \varepsilon_i-\varepsilon_j $ and $ \varepsilon_{ijk} = \varepsilon_0-\varepsilon_i-\varepsilon_j-\varepsilon_k $. We associate the mass parameters $ m_i $ of the $ \mathrm{SU}(2)+6\mathbf{F} $ theory with these parameters of the Picard lattice as follows:
\begin{gather}
    \begin{gathered}\label{eq:qE7-mapping}
        \varepsilon_0 = 3m_0 + \mathfrak{m} \, , \quad
        \varepsilon_i = m_0 + m_i \, , \ (1 \leq i \leq 6) \, , \quad
        \varepsilon_7 = -\delta \, , \\
        \varepsilon_8 = 2m_0 \, , \quad
        \varepsilon_9 = m_0 + \mathfrak{m} \, , \quad
        \mathfrak{m} = \frac{1}{2} \sum_{i=1}^6 m_i \, .
    \end{gathered}
\end{gather}
Let us denote the tau-functions used in \cite{Masuda:2009} by $ \mathcal{T} $. Then the bilinear relations in \eqref{eq:6F-bilinear1} and \eqref{eq:6F-bilinear3} can be mapped to the bilinear relations given in equations (2.2)-(2.4) of \cite{Masuda:2009} using 
\begin{gather}
    \tauftn{1/2}{1/2,e_1} = \mathcal{T}(\varepsilon_1) \, , \quad
    \tauftn{1/2}{1/2,e_i} = \mathcal{T}(\varepsilon_i) \, , \quad
    \tauftn{1/2}{0,0} = \mathcal{T}(\varepsilon_7) \, , \quad
    \tauftn{1/2}{1,0} = \mathcal{T}(\varepsilon_8) \, , \nonumber \\
    \tauftn{0}{1/2,\rho+e_1} = \mathcal{T}(\varepsilon_9) \, , \quad
    C_7^{(2)}\tauftn{0}{1,\rho} = \mathcal{T}(\varepsilon_0-\varepsilon_1-\varepsilon_7) \, , \quad
    C_8^{(2)}\tauftn{0}{0,\rho} = \mathcal{T}(\varepsilon_0-\varepsilon_1-\varepsilon_8) \, , \nonumber \\
\tauftn{1/2}{1/2,-e_1} = \mathcal{T}(\varepsilon_0-\varepsilon_1-\varepsilon_9) \, , \quad
    \tauftn{1/2}{1/2,-e_i} = \mathcal{T}(\varepsilon_0-\varepsilon_i-\varepsilon_9) \, , \nonumber \\
    \tauftn{0}{1/2,\rho-e_i} = \mathcal{T}(\varepsilon_0-\varepsilon_1-\varepsilon_i) \, , \quad
    \tauftn{0}{1/2,\lambda-e_i-e_j} = \mathcal{T}(\varepsilon_0-\varepsilon_i-\varepsilon_j)
\end{gather}
where $i$ and $j$ are distinct integers $ 2 \leq i,  j \leq 6 $, and the normalization constants $ C_n^{(2)} $ are given in \eqref{eq:6F-tildem}. Using the parameter mapping \eqref{eq:qE7-mapping} together with $ \delta=-\epsilon $, one can reproduce the coefficients of the bilinear relations in \cite{Masuda:2009} from the result of section~\ref{sec:6F}. Thus, the bilinear relations obtained from the 5d $ \mathrm{SU}(2)+6\mathbf{F} $ theory are equivalent to the $ q\text{-}E_7^{(1)} $ Painlev\'e equation, whose explicit form is given in \eqref{eq:qE7}, where the discrete time evolution is defined by the Kac translation $ T_{\alpha_7} \in W(E_7^{(1)}) $. The details can be found in \cite{Masuda:2009}.

\subsection{\texorpdfstring{$ q\text{-}E_6^{(1)} $}{q-E6(1)} from bilinear relations} \label{app:E6}

We next connect the bilinear relations \eqref{eq:5F-bilinear} found from the $ \mathrm{SU}(2)+5\mathbf{F} $ theory with the $ q\text{-}E_6^{(1)} $ equation. Let us denote the tau-functions used in \cite{Tsuda:2008} by $ \mathcal{T}_n $ for $ 1 \leq n \leq 9 $. The tau-functions $ \tau_i^{(j)} $ in \eqref{eq:5F-tau} are identified with $ \mathcal{T}_n $ as
\begin{align}
    \tau_i^{(j)} = \mathfrak{q}^{p_1(B)} \prod_{k=1}^5 M_k^{\frac{2}{3}b_0 b_k} q^{p_2(B)} \mathcal{T}_{j+3i-3} \, ,
\end{align}
where $ B=(b_i)_{i=0}^5 $ is the background magnetic flux labelling the tau-function $ \tau_i^{(j)} $, and
\begin{align}
    \begin{aligned}
        p_1(B) &= \frac{4}{9} b_0 \left( b_0 + b_3 - b_4 + b_5 \right) + \frac{2}{3} b_1 b_2 \, , \\
        p_2(B) &= -\frac{4}{27} b_0^3 + \frac{4}{9} b_0^2 \left( -1 + b_3 - b_4 + b_5 \right) + \frac{2}{9} b_0 \left( -b_1 + b_2 + b_3 + 2b_5 + 3b_1b_2 \right) \, .
    \end{aligned}
\end{align}
Note that \cite{Tsuda:2008} further introduces $ U_{1,2,3} \sim \mathcal{T}_{1,4,7} $, $ V_{1,2,3} \sim \mathcal{T}_{2,5,8} $, $ W_{1,2,3} \sim \mathcal{T}_{3,6,9} $, up to normalization factors. Furthermore, the parameters in \cite{Tsuda:2008} are matched by
\begin{gather}
    a_0 = \frac{q^{1/3}}{\mathfrak{q}^{1/3}} \biggl(\frac{M_2M_4M_5}{M_1M_3}\biggr)^{1/6} \, , \quad
    a_1 = \biggl(\frac{M_4}{M_5}\biggr)^{\!1/3} \, , \quad
    a_2 = \frac{1}{(M_1M_4)^{1/3}} \, , \quad
    a_3 = (M_1M_2)^{1/3} \, , \nonumber \\
    a_4 = \mathfrak{q}^{1/3} \biggl(\frac{M_4M_5}{M_1M_2M_3}\biggr)^{1/6} \, , \quad
    a_5 = \frac{1}{\mathfrak{q}^{1/3}} \biggl(\frac{M_1}{M_2M_3M_4M_5}\biggr)^{1/6} \, , \quad
    a_6 = \biggl(\frac{M_3}{M_2}\biggr)^{1/3} \, ,
\end{gather}
together with the identification of the $ q $-variable $ q_T $ in \cite{Tsuda:2008} as $ q_T=q^{1/3} $, so that the normalization $ a_0 a_1 a_2^2 a_3^3 a_4^2 a_5 a_6^2=q_T $ is satisfied. The three maps $ \ell_j $ in the bilinear relation \eqref{eq:5F-bilinear} are identified with the translations $ \ell_j $ in \cite{Tsuda:2008} with same labels. One of the transformations $ \ell_j $ can be regarded as the discrete time evolution. Thus, the bilinear relations obtained from the 5d $ \mathrm{SU}(2)+5\mathbf{F} $ theory are equivalent to the $ q\text{-}E_6^{(1)} $ Painlev\'e equation.

\section{Instanton partition functions of 5d SU(2) gauge theories} \label{app:adhm}

In this appendix, we review instanton partition functions of 5d $ \mathrm{SU}(2) = \mathrm{Sp}(1) $ gauge theories coupled to $ N_f $ fundamental hypermultiplets. For $ k $ instantons, the 1d ADHM quantum mechanics has an $ \mathrm{O}(k) $ gauge group \cite{Shadchin:2005mx, Kim:2012gu, Hwang:2014uwa}. The $ \mathrm{O}(k) $ group has two disconnected components $ \mathrm{O}(k)_\pm $ consisting of the elements with determinant $ \pm 1 $, and this disconnectedness must be taken into account in the computation of the instanton partition functions. By denoting the contributions of the two disconnected components to the instanton partition function as $ Z_k^\pm $, the $ k $-instanton partition function can be written as
\begin{align}\label{eq:ADHM}
    Z_k = (\pm 1)^k \frac{Z_k^+ \pm Z_k^-}{2} \, , \quad
    Z_k^\pm = \frac{1}{|W^\pm|} \oint Z_{\mathrm{vec}}^\pm Z_{\mathrm{fund}}^{\pm} \, ,
\end{align}
where the $ \pm $ sign in $ Z_k $ corresponds to the choice of the discrete theta angle, which is irrelevant in the presence of fundamental hypermultiplets. We always choose the plus sign for $ N_f > 0 $ in this paper. Each $ Z_k^\pm $ can be represented as the contour integral of the 1-loop determinants of vector and hypermultiplets, and the integration contour is chosen by the Jeffrey-Kirwan (JK) residue prescription discussed in \cite{Benini:2013nda, Benini:2013xpa}. By denoting $ k=2n+\chi $ for $ \chi=0,1 $, the Weyl group factors $ |W^\pm| $ for $ \mathrm{O}(k)_\pm $ are given by
\begin{align}
    |W^+| = \left\{ \begin{array}{ll}
            \displaystyle {2^{n-1}n!} & \quad (\chi=0) \\[1em]
            \displaystyle {2^{n}n!} & \quad (\chi=1) \, ,
        \end{array}\right. \qquad
    |W^-| = \left\{ \begin{array}{ll}
            \displaystyle {2^{n-1}(n-1)!} & \quad (\chi=0) \\[1em]
            \displaystyle {2^{n}n!} & \quad (\chi=1) \, .
        \end{array}\right. \qquad
\end{align}
The 1-loop determinant of the vector multiplet is
\begin{align}
    Z_{\mathrm{vec}}^+ &= \left( \frac{1}{2\sinh(\frac{\epsilon_{1,2}}{2}) 2\sinh(\frac{\pm \phi + \epsilon_+}{2})} \right)^\chi \prod_{I<J}^n \frac{2\sinh(\frac{\pm u_I \pm u_J}{2}) 2\sinh(\frac{\pm u_I \pm u_J + 2\epsilon_+}{2})}{2\sinh(\frac{\pm u_I \pm u_J + \epsilon_{1,2}}{2})} \\
    &\!\!\!\!\!\! \times \prod_{I=1}^n \left( \frac{2\sinh(\frac{\pm u_I}{2}) 2\sinh( \frac{\pm u_I + 2\epsilon_+}{2})}{2\sinh(\frac{\pm u_I + \epsilon_{1,2}}{2})} \right)^\chi  \frac{du_I}{2\pi i} \frac{2\sinh(\epsilon_+)}{2\sinh(\frac{\epsilon_{1,2}}{2}) 2\sinh(\frac{\pm 2u_I + \epsilon_{1,2}}{2}) 2\sinh(\frac{\pm u_I \pm \phi + \epsilon_+}{2})}, \nonumber
\end{align}
for $ \mathrm{O}(k)_+ $;
\begin{align}
    \begin{aligned}
        Z_{\mathrm{vec}}^- &= \frac{1}{2\sinh(\frac{\epsilon_{1,2}}{2}) 2\cosh(\frac{\pm \phi + \epsilon_+}{2})} \prod_{I<J}^n \frac{2\sinh(\frac{\pm u_I \pm u_J}{2}) 2\sinh(\frac{\pm u_I \pm u_J + 2\epsilon_+}{2})}{2\sinh(\frac{\pm u_I \pm u_J + \epsilon_{1,2}}{2})} \\
        &\times \prod_{I=1}^n \frac{du_I}{2\pi i} \frac{2\cosh(\frac{\pm u_I}{2}) 2\cosh(\frac{\pm u_I+2\epsilon_+}{2}) 2\sinh(\epsilon_+)}{2\cosh(\frac{\pm u_I+\epsilon_{1,2}}{2}) 2\sinh(\frac{\epsilon_{1,2}}{2}) 2\sinh(\frac{\pm 2u_I + \epsilon_{1,2}}{2}) 2\sinh(\frac{\pm u_I \pm \phi + \epsilon_+}{2})} ,
    \end{aligned}
\end{align}
for $ \mathrm{O}(k)_- $ with $ \chi=1 $;
\begin{align}
    Z_{\mathrm{vec}}^- &= \frac{2\cosh(\epsilon_+)}{2\sinh(\frac{\epsilon_{1,2}}{2}) 2\sinh(\epsilon_{1,2}) 2\sinh(\pm \phi + \epsilon_+)} \prod_{I<J}^{n-1} \frac{2\sinh(\frac{\pm u_I \pm u_J}{2}) 2\sinh(\frac{\pm u_I \pm u_J + 2\epsilon_+}{2})}{2\sinh(\frac{\pm u_I \pm u_J + \epsilon_{1,2}}{2})} \nonumber \\
    &\times \prod_{I=1}^{n-1} \frac{du_I}{2\pi i} \frac{2\sinh(\pm u_I) 2\sinh(\pm u_I + 2\epsilon_+) 2\sinh(\epsilon_+)}{2\sinh(\frac{\epsilon_{1,2}}{2}) 2\sinh(\frac{\pm 2u_I + \epsilon_{1,2}}{2}) 2\sinh(\frac{\pm u_I \pm \phi + \epsilon_+}{2}) 2\sinh(\pm u_I + \epsilon_{1,2})} ,
\end{align}
for $ \mathrm{O}(k)_- $ with $ \chi=0 $. Here, we denote $ \phi $ as the Coulomb branch parameter of the 5d $ \mathrm{Sp}(1) $ gauge theory and we use the shorthand notation $ 2\sinh(x \pm y) = 2\sinh(x+y) 2\sinh(x-y) $ and $ 2\sinh(\epsilon_{1,2}) = 2\sinh(\epsilon_1) 2\sinh(\epsilon_2) $, etc. The 1-loop determinant of the hypermultiplets is
\begin{align}
    Z_{\mathrm{fund}}^+ = \prod_{l=1}^{N_f} \left( 2\sinh\Big(\frac{m_l}{2}\Big) \right)^\chi \prod_{I=1}^n 2\sinh\Big(\frac{\pm u_I+m_l}{2}\Big) ,
\end{align}
for $ \mathrm{O}(k)_+ $;
\begin{align}
    Z_{\mathrm{fund}}^- = \prod_{l=1}^{N_f} 2\cosh\Big(\frac{m_l}{2}\Big) \prod_{I=1}^n 2\sinh\Big(\frac{\pm u_I+m_l}{2}\Big) ,
\end{align}
for $ \mathrm{O}(k)_- $ with $ \chi=1 $; and
\begin{align}
    Z_{\mathrm{fund}}^- = \prod_{l=1}^{N_f} 2\sinh(m_l) \prod_{I=1}^{n-1} 2\sinh\Big(\frac{\pm u_I+m_l}{2}\Big) ,
\end{align}
for $ \mathrm{O}(k)_- $ with $ \chi=0 $.

We comment for the cases $ N_f \geq 6 $. When $ N_f=6 $, the 2-instanton partition function contains a term which does not depend on the dynamical parameter $ \phi $. This term is an extra factor that decouples from the 5d theory. Since we assume $ Z_k \sim \mathcal{O}(e^{-\phi}) $ to bootstrap bilinear relations of tau-functions, one needs to factor out this extra factor to check the validity of the ansatz at the level of instanton corrections. The extra factor can be factored out as
\begin{align}
    Z_{N_f=6} = Z_{N_f=6}^{\mathrm{ADHM}} \cdot \PE\left[ \frac{e^{-2m_0} \cosh(\epsilon_+)}{2\sinh(\epsilon_1/2) 2\sinh(\epsilon_2/2)} \right] \, ,
\end{align}
where $ Z^{\mathrm{ADHM}} $ is the partition function computed using \eqref{eq:ADHM}.

When $ N_f = 7 $ and $ 8 $, the residue integral \eqref{eq:ADHM} contains higher-degree poles at the infinities $ u_I = \pm \infty $. Unfortunately, a systematic method to treat such poles is unknown. Nevertheless, it is possible to compute instanton partition functions of $ \mathrm{SU}(2)+7\mathbf{F} $ and $ \mathrm{SU}(2)+8\mathbf{F} $ theories by coupling them to an auxiliary antisymmetric hypermultiplet. The $ k $-instanton partition function can be written as
\begin{align}
    Z_k^\pm = \frac{1}{|W^\pm|} \oint Z_{\mathrm{vec}}^\pm Z_{\mathrm{fund}}^{\pm} Z_{\mathrm{anti}}^\pm \, ,
\end{align}
where the 1-loop determinant of the antisymmetric hypermultiplet is given by
\begin{align}
    Z_{\mathrm{anti}}^+ &= \left( \frac{2\sinh(\frac{m_A\pm \phi}{2})}{2\sinh(\frac{m_A\pm \epsilon_+}{2})} \prod_{I=1}^n \frac{2\sinh(\frac{\pm u_I \pm m_A - \epsilon_-}{2})}{2\sinh(\frac{\pm u_I \pm m_A - \epsilon_+}{2})} \right)^\chi \! \prod_{I=1}^n \frac{2\sinh(\frac{\pm m_A - \epsilon_-}{2}) 2\sinh(\frac{\pm u_I \pm \phi - m_A}{2})}{2\sinh(\frac{\pm m_A-\epsilon_+}{2}) 2\sinh(\frac{\pm 2u_I \pm m_A - \epsilon_+}{2})} \nonumber \\
    &\quad \times \prod_{I<J}^n \frac{2\sinh(\frac{\pm u_I \pm u_J \pm m_A - \epsilon_-}{2})}{2\sinh(\frac{\pm u_I \pm u_J \pm m_A - \epsilon_+}{2})} \, ,
\end{align}
for $ \mathrm{O}(k)_+ $;
\begin{align}
    \begin{aligned}
        Z_{\mathrm{anti}}^- &= \frac{2\cosh(\frac{m_A\pm \phi}{2})}{2\sinh(\frac{m_A\pm \epsilon_+}{2})} \prod_{I=1}^n \frac{ 2\cosh(\frac{\pm u_I \pm m_A - \epsilon_-}{2}) 2\sinh(\frac{\pm m_A - \epsilon_-}{2}) 2\sinh(\frac{\pm u_I \pm \phi - m_A}{2})}{2\cosh(\frac{\pm u_I \pm m_A - \epsilon_+}{2}) 2\sinh(\frac{\pm m_A-\epsilon_+}{2}) 2\sinh(\frac{\pm 2u_I \pm m_A - \epsilon_+}{2})} \\
        &\quad \times \prod_{I<J}^n \frac{2\sinh(\frac{\pm u_I \pm u_J \pm m_A - \epsilon_-}{2})}{2\sinh(\frac{\pm u_I \pm u_J \pm m_A - \epsilon_+}{2})} \, ,
    \end{aligned}
\end{align}
for $ \mathrm{O}(k)_- $ with $ \chi=1 $; and
\begin{align}
        Z_{\mathrm{anti}}^- &= \frac{2\cosh(\frac{\pm m_A - \epsilon_-}{2}) 2\sinh(m_A\pm \phi)}{2\sinh(\frac{m_A \pm \epsilon_+}{2}) 2\sinh(m_A\pm \epsilon_+)} \prod_{I=1}^{n-1} \frac{2\sinh(\pm u_I \pm m_A - \epsilon_-) 2\sinh(\frac{\pm m_A-\epsilon_-}{2}) }{2\sinh(\pm u_I \pm m_A - \epsilon_+) 2\sinh(\frac{\pm m_A-\epsilon_+}{2})} \nonumber \\
        &\quad \times \prod_{I=1}^{n-1} \frac{2\sinh(\frac{\pm u_I \pm \phi - m_A}{2})}{2\sinh(\frac{\pm 2u_I \pm m_A - \epsilon_+}{2})} \cdot \prod_{I<J}^{n-1} \frac{2\sinh(\frac{\pm u_I \pm u_J \pm m_A - \epsilon_-}{2})}{2\sinh(\frac{\pm u_I \pm u_J \pm m_A - \epsilon_+}{2})}
\end{align}
for $ \mathrm{O}(k)_- $ with $ \chi=0 $. The integration contour is chosen by the JK-residue prescription, and the poles from the antisymmetric hypermultiplet also contribute. For instance, at the 2-instanton order, one can choose the poles of $ Z_{\mathrm{vec}}^+ $ and $ Z_{\mathrm{anti}}^+ $ located at
\begin{align}
    2u_1 + \epsilon_{1,2} = 0 , \, 2\pi i \, , \quad
    u_1 \pm \phi + \epsilon_+ = 0  , \quad
    2u_1 \pm m_A - \epsilon_+ = 0 , \, 2\pi i \, .
\end{align}
After computing the integral, we take the $ m_A \to \infty $ limit to decouple the auxiliary antisymmetric hypermultiplet. This yields the instanton partition functions of the $ \mathrm{SU}(2)+7\mathbf{F} $ and $ 8\mathbf{F} $ theories.

\bibliographystyle{JHEP}
\bibliography{refs}

\end{document}